\documentclass[12pt,onecolumn,journal,draftclsnofoot,doublespace]{IEEEtran}
\usepackage{epsf,psfrag,amssymb,amsfonts,amsmath,graphicx,subfigure}
\usepackage[square,comma,sort&compress]{natbib}
\usepackage[mathscr]{eucal}




\def\boxit#1{\vbox{\hrule\hbox{\vrule\kern3pt
        \vbox{\kern3pt#1\kern3pt}\kern3pt\vrule}\hrule}}

\def\reals{ { {\rm  I \kern-0.15em R }  } }
\def\complex{ {\,{{\rm C} \kern-0.50em \raise0.20ex {  |}}\, }}

\def\Rbf{{\bf R}}

\def\Cc{{\cal C}}

\def\Gc{{\cal G}}

\def\Lc{{\cal L}}

\def\Sc{{\cal S}}

\def\be{\begin{equation}}
\def\ee{\end{equation}}

\def\defeq{{\stackrel{\Delta}{=}}}
\def\scalefig#1{\epsfxsize #1\textwidth}
%

\def\Rxx{\Rbf_{\ssstyle X\kern-.1em X}}

\let\ssstyle=\scriptscriptstyle


\def\etal{{\it et al. \/}}

\def\ie{{\it i.e.,\ \/}}
\def\Kout{\setbox1=\hbox{\Huge\bf K}\hbox to
1.05\wd1{\hspace{.05\wd1}
\def\Sout{\setbox1=\hbox{\Huge\bf S}\hbox to 1.05\wd1{\hspace{.05\wd1}

\def\ie{{\it i.e.,\ \/}}

\def\defeq{{\,\stackrel{\Delta}{=}\,}}

\def\scalefig#1{\epsfxsize #1\textwidth}
\def\nn{{\nonumber}}

\newtheorem{lemma}{Lemma}
\newtheorem{theorem}{Theorem}
\newtheorem{proposition}{Proposition}

\newtheorem{corollary}{Corollary}
\newtheorem{fact}{Fact}

\begin{document}
\title{\LARGE Connectivity of Heterogeneous Wireless Networks\thanks{This work was supported in part by
the Army Research Laboratory CTA on Communication and Networks under
Grant DAAD19-01-2-0011, by the Army Research Office under Grant
W911NF-08-1-0467, and by the National Science Foundation under
Grants ECS-0622200 and CCF-0830685.}}
\author{Wei Ren, ~~~Qing Zhao$^*$, ~~~Ananthram Swami
\thanks{W. Ren and Q. Zhao are with the
Department of Electrical and Computer Engineering, University of
California, Davis, CA 95616. A. Swami is with the Army Research
Laboratory, Adelphi, MD 20783.}
\thanks{$^*$ Corresponding author. Phone: 1-530-752-7390. Fax: 1-530-752-8428. Email: qzhao@ece.ucdavis.edu.}}

\markboth{Submitted to {\it IEEE Transactions on Information
Theory}, August, 2009.}{Ren, Zhao, and Swami}

\maketitle \vspace{-1em}

\begin{abstract}
We address the connectivity of large-scale ad hoc heterogeneous
wireless networks, where secondary users exploit channels
temporarily unused by primary users and the existence of a
communication link between two secondary users depends on not only
the distance between them but also the transmitting and receiving
activities of nearby primary users. We introduce the concept of {\em
connectivity region} defined as the set of density pairs --- the
density of secondary users and the density of primary transmitters
--- under which the secondary network is connected. Using theories
and techniques from continuum percolation, we analytically
characterize the connectivity region of the secondary network and
reveal the tradeoff between proximity (the number of neighbors) and
the occurrence of spectrum opportunities. Specifically, we establish
three basic properties of the connectivity region -- contiguity,
monotonicity of the boundary, and uniqueness of the infinite
connected component, where the uniqueness implies the occurrence of
a phase transition phenomenon in terms of the almost sure existence
of either zero or one infinite connected component; we identify and
analyze two critical densities which jointly specify the profile as
well as an outer bound on the connectivity region; we study the
impacts of secondary users' transmission power on the connectivity
region and the conditional average degree of a secondary user, and
demonstrate that matching the interference ranges of the primary and
the secondary networks maximizes the tolerance of the secondary
network to the primary traffic load. Furthermore, we establish a
necessary condition and a sufficient condition for connectivity,
which lead to an outer bound and an inner bound on the connectivity
region.
\end{abstract}

\begin{IEEEkeywords}
Heterogeneous wireless network, cognitive radio, connectivity
region, phase transition, critical densities, continuum percolation.
\end{IEEEkeywords}


\section{Introduction}
\label{sec:intro}

The communication infrastructure is becoming increasingly
heterogeneous, with a dynamic composition of interdependent,
interactive, and hierarchical network components with different
priorities and service requirements. One example is the cognitive
radio technology~\citep{Mitola&Maguire:99} for opportunistic
spectrum access which adopts a hierarchical structure for resource
sharing~\citep{Zhao&Sadler:07SPM}. Specifically, a secondary network
is overlaid with a primary network, where secondary users identify
and exploit temporarily and locally unused channels without causing
unacceptable interference to primary
users~\citep{Zhao&Sadler:07SPM}.

\subsection{Connectivity and Connectivity Region}
\label{subsec:CCR}

While the connectivity of homogeneous ad hoc networks consisting of
peer users has been well studied (see, for example,
~\citep{Dousse&Etal:05ITN,Dousse&Etal:06JAPPP,Gupta&Kumar:98SACOA,
Philips&Etal:89ITIT,Ni&Chandler:94IEEPC,Bettstetter:02MobiHoc,Kong&Yeh:08Infocom,Kong&Yeh:09ITIT}),
little is known about the connectivity of heterogeneous networks.
The problem is fundamentally different from its counterpart in
homogeneous networks. In particular, the connectivity of the
low-priority network component depends on the characteristics
(traffic pattern/load, topology, interference tolerance, etc.) of
the high-priority component, thus creating a much more diverse and
complex design space.

Using theories and techniques from continuum percolation, we
analytically characterize the connectivity of the secondary network
in a large-scale ad hoc heterogeneous network. Specifically, we
consider a Poisson distributed secondary network overlaid with a
Poisson distributed primary network in an infinite two-dimensional
Euclidean space\footnote{This infinite network model is equivalent
in distribution to the limit of a sequence of finite networks with a
fixed density as the area of the network increases to infinity, \ie
the so-called \emph{extended
network}~\citep{Leveque&Telatar:05ITIT}. It follows from the
arguments similar to the ones used in~\citep[Chapter
3]{Franceschetti&Meester:Random_Network_Comm} for homogeneous ad hoc
networks that this infinite ad hoc heterogeneous network model
represents the limiting behavior of large-scale networks.}. We
define {\em network connectivity} as the existence of an infinite
connected component almost surely (a.s.), \ie the occurrence of
percolation. We say that the secondary network is strongly connected
when it contains a \emph{unique} infinite connected component a.s.

Due to the hierarchical structure of spectrum sharing, a
communication link exists between two secondary users if the
following two conditions hold: (C1) they are within each other's
transmission range; (C2) they see a spectrum opportunity determined
by the transmitting and receiving activities of nearby primary users
(see Sec.~\ref{subsubsec:CL}). Thus, given the transmission power
and the interference tolerance of both the primary and the secondary
users, the connectivity of the secondary network depends on the
density of secondary users (due to (C1)) and the traffic load of
primary users (due to (C2)).

We thus introduce the concept of \emph{connectivity region} $\Cc$,
defined as the set of density pairs $(\lambda_S,\lambda_{PT})$ under
which the secondary network is connected, where $\lambda_S$ denotes
the density of the secondary users and $\lambda_{PT}$ the density of
primary transmitters (representing the traffic load of the primary
users). As illustrated in Fig.~\ref{fig:Con_Region}, a secondary
network with a density pair $(\lambda_S,\lambda_{PT})$ inside this
region is connected: the secondary network has a giant connected
component which includes infinite secondary users. The existence of
the giant connected component enables bidirectional communications
between distant secondary users via multihop relaying. On the other
hand, a secondary network with a density pair
$(\lambda_S,\lambda_{PT})$ outside this region is not connected: the
network is separated into an infinite number of \emph{finite}
connected components. Consequently, any secondary user can only
communicate with users within a limited range.

\begin{figure}[htbp]
\centerline{
\begin{psfrags}
\psfrag{lP}[c]{$\lambda_{PT}$} \psfrag{lS}[c]{$\lambda_S$}
\psfrag{lPu}[c]{$\overline{\lambda^*_{PT}}$}
\psfrag{lPlS}[c]{$\lambda^*_{PT}(\lambda_S)$}
\psfrag{lSc}[c]{$\lambda^*_S$} \psfrag{O}[c]{$0$}
\psfrag{CR}[c]{\textbf{Connectivity Region}}
\scalefig{0.7}\epsfbox{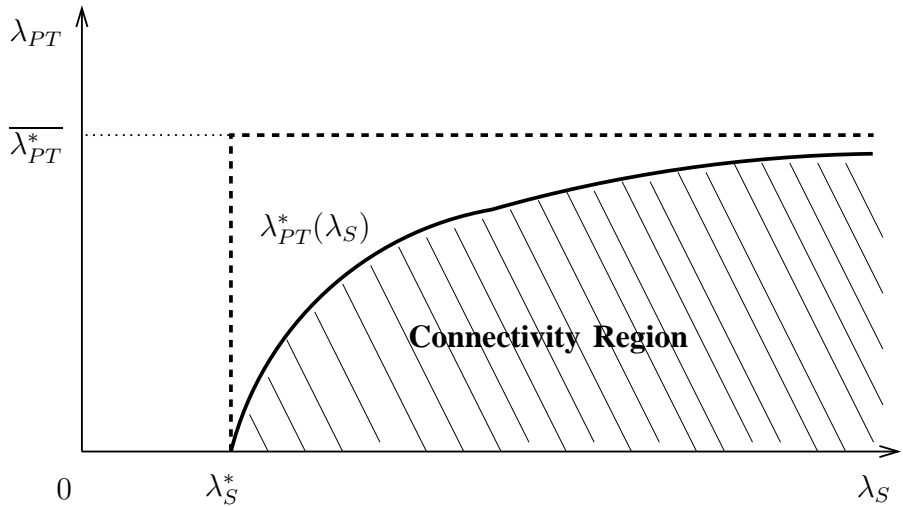}
\end{psfrags}
} \caption{The connectivity region $\mathcal{C}$ (the upper boundary
$\lambda^*_{PT}(\lambda_S)$ is defined as the supremum density of
the primary transmitters to ensure connectivity with a \emph{fixed}
density of the secondary users; the critical density $\lambda^*_S$
of the secondary users is defined as the infimum density of the
secondary users to ensure connectivity under a \emph{positive}
density of the primary transmitters; the critical density
$\overline{\lambda^*_{PT}}$ of the primary transmitters the supremum
density of the primary transmitters to ensure connectivity with a
\emph{finite} density of the secondary users).}
\label{fig:Con_Region}
\end{figure}

The objective of this paper is to establish analytical
characterizations of the connectivity region and to study the impact
of system design parameters (in particular, the transmission power
of the secondary users) on the network connectivity. Main results
are summarized in the subsequent two subsections.

\subsection{Analytical Characterizations of the Connectivity Region}
\label{subsec:intro_AC_CR}

We first establish three basic properties of the connectivity
region: contiguity, monotonicity of the boundary, and uniqueness of
the infinite connected component. Specifically, based on a coupling
argument, we show that the connectivity region is a contiguous area
bounded below by the $\lambda_S$-axis and bounded above by a
monotonically increasing function $\lambda_{PT}^*(\lambda_S)$ (see
Fig.~\ref{fig:Con_Region}), where the upper boundary
$\lambda_{PT}^*(\lambda_S)$ is defined as
\[
\lambda^*_{PT}(\lambda_S)\defeq
\sup\{\lambda_{PT}:~\mathcal{G}(\lambda_S,\lambda_{PT}) \mbox{ is
connected.}\},
\]
with $\mathcal{G}(\lambda_S,\lambda_{PT})$ denoting the secondary
network of density $\lambda_S$ overlaid with a primary network
specified by the density $\lambda_{PT}$ of the primary transmitters.
The uniqueness of the infinite connected component is established
based on the ergodic theory and certain combinatorial results. It
shows that once the secondary network is connected, it is strongly
connected.

Second, we identify and analyze two critical parameters of the
connectivity region: $\lambda^*_S$ and $\overline{\lambda^*_{PT}}$.
They jointly specify the profile as well as an outer bound on the
connectivity region. Referred to as the critical density of the
secondary users, $\lambda^*_S$ is the infimum density of the
secondary users to ensure connectivity under a positive density of
the primary transmitters:
\begin{equation*}
\lambda^*_S\defeq \inf\{\lambda_S: \exists \lambda_{PT}>0 \mbox{
s.t. $\mathcal{G}(\lambda_S,\lambda_{PT})$ is connected}\}.
\end{equation*}
We show that $\lambda^*_S$ equals the critical density $\lambda_c$
of a \emph{homogeneous} ad hoc network (\ie in the absence of
primary users), which has been well
studied~\citep{Meester&Roy:Con_Percolation}. This result shows that
the ``takeoff'' point in the connectivity region is completely
determined by the effect of proximity---the number of neighbors
(nodes within the transmission range of a secondary user).

Referred to as the critical density of the primary transmitters,
$\overline{\lambda^*_{PT}}$ is the supremum density of the primary
transmitters to ensure the connectivity of the secondary network
with a finite density of the secondary users:
\begin{equation*}
\overline{\lambda^*_{PT}}\defeq \sup\{\lambda_{PT}: \exists
\lambda_S<\infty \mbox{ s.t. $\mathcal{G}(\lambda_S,\lambda_{PT})$
is connected}\}.
\end{equation*}
We obtain an upper bound on $\overline{\lambda^*_{PT}}$ which is
shown to be achievable in simulations. More importantly, this result
shows that when the density of the primary transmitters is higher
than the (finite) value given by this upper bound, the secondary
network cannot be connected no matter how dense it is. This
parameter $\overline{\lambda^*_{PT}}$ thus characterizes the impact
of opportunity occurrence on the connectivity of the secondary
network: when the density of the primary transmitters is beyond a
certain level, there are simply not enough spectrum opportunities
for any secondary network to be connected.

Since a precise characterization of the upper boundary
$\lambda^*_{PT}(\lambda_S)$ of the connectivity region is
intractable, we establish a necessary and a sufficient condition for
connectivity to provide an outer and an inner bound on the
connectivity region. The necessary condition is expressed in the
form of the conditional average degree of a secondary user, and is
derived by the construction of a branching process. The sufficient
condition is obtained by the discretization of the continuum
percolation model into a dependent site percolation model.

\begin{figure}[htbp]
\centerline{
\scalefig{0.7}\epsfbox{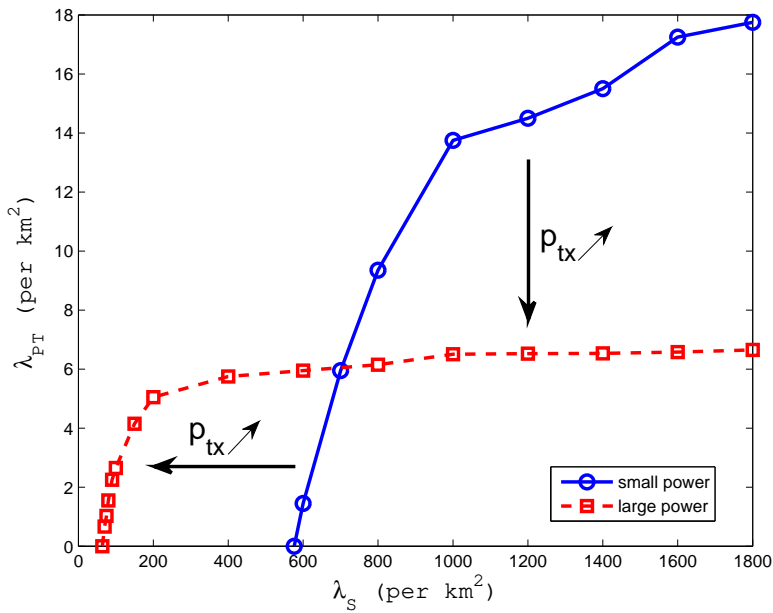} }
\caption{Simulated connectivity regions for two different
transmission powers ($p_{tx}$ denotes the transmission power of the
secondary users, and the large $p_{tx}$ is $3^{\alpha}$ times the
small $p_{tx}$, where $\alpha$ is the path-loss exponent).}
\label{fig:Con_Region_Simu1}
\end{figure}

\subsection{Impact of Transmission Power on Connectivity: Proximity vs. Opportunity}
\label{subsec:intro_PvsO}

The study on the impact of the secondary users' transmission power
on the network connectivity reveals an interesting tradeoff between
proximity and opportunity in the design of heterogeneous networks.
As illustrated in Fig.~\ref{fig:Con_Region_Simu1}, we show that
increasing $p_{tx}$ enlarges the connectivity region $\Cc$ in the
$\lambda_S$-axis (\ie better proximity leads to a smaller
``takeoff'' point), but at the price of reducing $\Cc$ in the
$\lambda_{PT}$-axis. Specifically, with a large $p_{tx}$, few
secondary users experience spectrum opportunities due to their large
interference range with respect to the primary users. This leads to
a poor tolerance to the primary traffic load parameterized by
$\lambda_{PT}$.

The transmission power $p_{tx}$ of the secondary network should thus
be chosen according to the operating point of the heterogeneous
network given by the density of the secondary users and the traffic
load of the co-existing primary users. Using the tolerance to the
primary traffic load as the performance measure, we show that the
interference range $r_I$ of the secondary users should be equal to
the interference range $R_I$ of the primary users in order to
maximize the upper bound on the critical density
$\overline{\lambda^*_{PT}}$ of the primary transmitters. Given the
interference tolerance of the primary and secondary users, we can
then design the optimal transmission power $p_{tx}$ of the secondary
users based on that of the primary users.

\subsection{Related Work}
\label{subsec:RW}

To our best knowledge, the connectivity of large-scale ad hoc
heterogeneous networks has not been characterized analytically or
experimentally in the literature. There are a number of classic
results on the connectivity of homogeneous ad hoc networks. For
example, it has been shown that to ensure either $1$-connectivity
(there exists a path between any pair of
nodes)~\citep{Gupta&Kumar:98SACOA, Philips&Etal:89ITIT} or
$k$-connectivity (there exist at least $k$ node-disjoint paths
between any pair of nodes)~\citep{Bettstetter:02MobiHoc}, the
average number of neighbors of each node must increase with the
network size. On the other hand, to maintain a weaker connectivity
-- p-connectivity (\ie the probability that any pair of nodes is
connected is at least $p$), the average number of neighbors is only
required to be above a certain `magic number' which does not depend
on the network size~\citep{Ni&Chandler:94IEEPC}.

The theory of continuum percolation has been used by Dousse \etal in
analyzing the connectivity of a homogeneous ad hoc network under the
worst case mutual
interference~\citep{Dousse&Etal:05ITN,Dousse&Etal:06JAPPP}.
In~\citep{Kong&Yeh:08Infocom,Kong&Yeh:09ITIT}, the connectivity and
the transmission delay in a homogeneous ad hoc network with
statically or dynamically on-off links are investigated from a
percolation-based perspective.

The optimal power control in heterogeneous networks has been studied
in~\citep{Ren&etal:08JSAC}, which focuses on a single pair of
secondary users in a Poisson network of primary users. The impacts
of secondary users' transmission power on the occurrence of spectrum
opportunities and the reliability of opportunity detection are
analytically characterized.

\subsection{Organization and Notations}
\label{subsec:Organization}

The rest of this paper is organized as follows. Sec.~\ref{sec:NM}
presents the Poisson model of the heterogeneous network. In
particular, the conditions for the existence a communication link in
the secondary network is specified based on a rigorous definition of
spectrum opportunity. In Sec.~\ref{sec:AC_CR}, we introduce the
concept of connectivity region and establish its three basic
properties. The two critical densities are analyzed, followed by a
necessary and a sufficient condition for connectivity. In
Sec.~\ref{sec:PvsO}, we demonstrate the tradeoff between proximity
and opportunity by studying the impacts of the secondary users'
transmission power on the connectivity region and on the conditional
degree of a secondary user. The optimal transmission power of the
secondary users is obtained under the performance measure of the
secondary network's tolerance to the primary traffic load.
Sec~\ref{sec:proofs} contains the detailed proofs of the main
results, and Sec.~\ref{sec:conclusion} concludes the paper.

Throughout the paper, we use capital letters for parameters of the
primary users and lowercase letters for the secondary users.


\section{Network Model}
\label{sec:NM}

We consider a Poisson distributed secondary network overlaid with a
Poisson distributed primary network in an infinite two-dimensional
Euclidean space. The models of the primary and secondary networks
are specified in the following two subsections.

\subsection{The Primary Network}
\label{subsec:PN}

The primary transmitters are distributed according to a
two-dimensional Poisson point process with density $\lambda_{PT}$.
To each primary transmitter, its receiver is uniformly distributed
within its transmission range $R_p$. Here we have assumed that all
primary transmitters use the same transmission power and the
transmitted signals undergo an isotropic path loss. Based on the
displacement theorem~\citep[Chapter~5]{Kingman:Poisson}, it is easy
to see that the primary receivers form a two-dimensional Poisson
point process with density $\lambda_{PT}$. Note that the two Poisson
processes formed by the primary transmitters and receivers are
correlated.

\subsection{The Secondary Network}
\label{subsec:SN}

The secondary users are distributed according to a two-dimensional
Poisson point process with density $\lambda_S$, independent of the
Poisson processes of the primary transmitters and receivers. The
transmission range of the secondary users is denoted by $r_p$.

\subsubsection{Communication Links}
\label{subsubsec:CL}

In contrast to the case in a homogeneous network, the existence of a
communication link between two secondary users depends on not only
the distance between them but also the availability of the
communication channel (\ie the presence of a spectrum opportunity).
The latter is determined by the transmitting and receiving
activities in the primary network as described below.

As illustrated in Fig.~\ref{fig:DefSO}, there exists an opportunity
from $A$, the secondary transmitter, to $B$, the secondary receiver,
if the transmission from $A$ does not interfere with nearby {\it
primary receivers} in the solid circle, and the reception at $B$ is
not affected by nearby {\it primary transmitters} in the dashed
circle~\citep{Zhao:07ICASSP}. Referred to as the interference range
of the secondary users, the radius $r_I$ of the solid circle at $A$
depends on the transmission power of $A$ and the interference
tolerance of the primary receivers, whereas the radius $R_I$ of the
dashed circle (the interference range of the primary users) depends
on the transmission power of the primary users and the interference
tolerance of $B$.

\begin{figure}[htbp]
\centerline{
\begin{psfrags}
\psfrag{A}[c]{ $A$} \psfrag{B}[c]{ $B~$} \psfrag{i}[c]{
~~~~~~~~~~~~~~~~~Interference} \psfrag{ra}[r]{$r_I$}
\psfrag{rb}[c]{$~~R_I$} \psfrag{tx}[l]{ Primary Tx} \psfrag{rx}[l]{
Primary Rx} \scalefig{0.5}\epsfbox{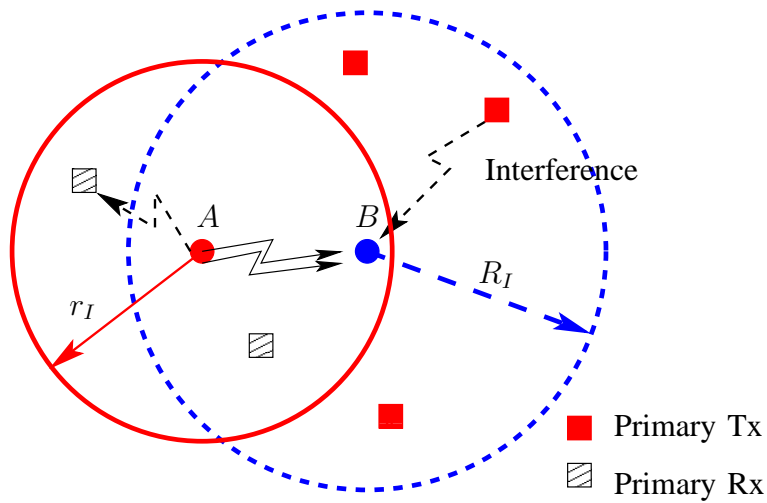}
\end{psfrags}
} \caption{Definition of spectrum opportunity.} \label{fig:DefSO}
\end{figure}

It is clear from the above discussion that spectrum opportunities
depend on both transmitting and receiving activities of the primary
users. Furthermore, spectrum opportunities are \emph{asymmetric}.
Specifically, a channel that is an opportunity when $A$ is the
transmitter and $B$ the receiver may not be an opportunity when $B$
is the transmitter and $A$ the receiver. In other words, there exist
unidirectional communication links in the secondary network. Since
unidirectional links are difficult to utilize in wireless
networks~\citep{Ramasubramanian:02INFOCOM}, we only consider
bidirectional links in the secondary network when we define
connectivity. As a consequence, when we determine whether there
exists a communication link between two secondary users, we need to
check the existence of spectrum opportunities in both directions.

To summarize, under the disk signal propagation and interference
model, there is a (bidirectional) link between $A$ and $B$ if and
only if (C1) the distance between $A$ and $B$ is at most $r_p$; (C2)
there exists a bidirectional spectrum opportunity between $A$ and
$B$, $\ie$ there are no primary transmitters within distance $R_I$
of either $A$ or $B$ and no primary receivers within distance $r_I$
of either $A$ or $B$.

\subsubsection{Connectivity}
\label{subsubsec:connectivity}

We interpret the connectivity of the secondary network in the
percolation sense: the secondary network is connected if there
exists an infinite connected component a.s.

Based on the above conditions (C1, C2) for the existence of a
communication link, we can obtain an undirected random graph
$\mathcal{G}(\lambda_S,\lambda_{PT})$ corresponding to the secondary
network, which is determined by three Poisson point processes: the
secondary users with density $\lambda_S$, the primary transmitters
with density $\lambda_{PT}$, and the primary receivers with density
$\lambda_{PT}$ (correlated to the process of the primary
transmitters)\footnote{The two Poisson point processes of the
primary transmitters and receivers are essentially a snap shot of
the realizations of the primary transmitters and receivers. In
different time slots, different sets of primary users become active
transmitters/receivers. Thus, even if a secondary user is isolated
at one time due to the absence of spectrum opportunities, it may
experience an opportunity at a different time and be connected to
other secondary users.}. See Fig.~\ref{fig:PN} for an illustration
of $\mathcal{G}(\lambda_S,\lambda_{PT})$.

\begin{figure}[htbp]
\centerline{
\begin{psfrags}
\psfrag{P_Tx}[l]{\small Primary Tx} \psfrag{P_Rx}[l]{\small Primary
Rx} \psfrag{S}[l]{\small Secondary User} \psfrag{rI}[c]{\small
$r_I$} \psfrag{RI}[c]{\small $R_I$}
\scalefig{0.7}\epsfbox{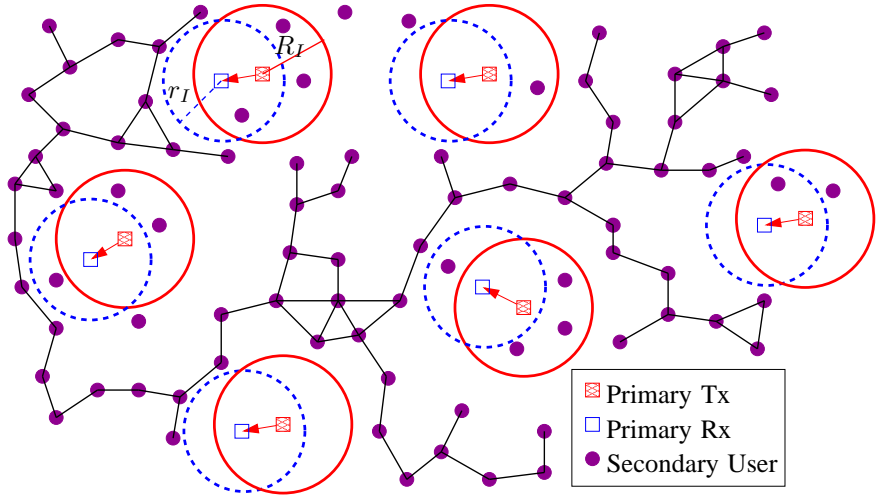}
\end{psfrags}} \caption{A
realization of the heterogeneous network. The random graph
$\mathcal{G}(\lambda_S,\lambda_{PT})$ consists of all the secondary
nodes and all the bidirectional links denoted by solid lines. The
solid circles with radii $R_I$ denote the interference regions of
the primary transmitters within which secondary users can not
successfully receive, and the dashed circles with radii $r_I$ denote
the required protection regions for the primary receivers within
which the secondary users should refrain from transmitting.}
\label{fig:PN}
\end{figure}

The question we aim to answer in this paper is the connectivity of
the secondary network, \ie the percolation in
$\mathcal{G}(\lambda_S,\lambda_{PT})$.


\section{Analytical Characterizations of the Connectivity Region}
\label{sec:AC_CR}

Given the transmission power and the interference tolerance of both
the primary and the secondary users ($\ie$ $R_p$, $R_I$, $r_p$, and
$r_I$ are fixed), the connectivity of the secondary network is
determined by the density $\lambda_S$ of the secondary users and the
density $\lambda_{PT}$ of the primary transmitters. We thus
introduce the concept of connectivity region $\mathcal{C}$ of a
secondary network, which is defined as the set of density pairs
$(\lambda_S,~\lambda_{PT})$ under which the secondary
network~$\mathcal{G}(\lambda_S,\lambda_{PT})$ is connected (see
Fig.~\ref{fig:Con_Region}).
\begin{eqnarray*}
\mathcal{C}\defeq\{(\lambda_S,~\lambda_{PT}):~\mathcal{G}(\lambda_S,\lambda_{PT})
\mbox{ is connected.}\}.
\end{eqnarray*}

\subsection{Basic Properties of the Connectivity Region}
\label{subsec:BP_CR}

We establish in Theorem~\ref{thm:BP_CR} below three basic properties
of the connectivity region.

\begin{theorem} {\it Basic Properties of the Connectivity Region.} \label{thm:BP_CR}
\begin{itemize}
\item[T1.1] The connectivity region $\Cc$ is contiguous, that is, for any
two points $(\lambda_{S1},\lambda_{PT1})$,
$(\lambda_{S2},\lambda_{PT2})\in \Cc$, there exists a continuous
path in $\Cc$ connecting the two points.
\item[T1.2] The lower boundary of the connectivity region $\Cc$ is the
$\lambda_S$-axis. Let $\lambda^*_{PT}(\lambda_S)$ denote the upper
boundary of the connectivity region $\Cc$, \ie
\[
\lambda^*_{PT}(\lambda_S)\defeq
\sup\{\lambda_{PT}:~\mathcal{G}(\lambda_S,\lambda_{PT}) \mbox{ is
connected.}\},
\]
then we have that $\lambda^*_{PT}(\lambda_S)$ is monotonically
increasing with $\lambda_S$.
\item[T1.3] There exists either zero or one infinite connected component in
$\mathcal{G}(\lambda_S,\lambda_{PT})$ a.s.
\end{itemize}
\end{theorem}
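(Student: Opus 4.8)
The foundation for all three parts is a \emph{monotone coupling} of the random graphs $\mathcal{G}(\lambda_S,\lambda_{PT})$ across density pairs. Since the secondary point process is independent of the two primary processes, whenever $\lambda_{S1}\le\lambda_{S2}$ I can realize the sparser secondary process as an independent thinning of the denser one on a common probability space; adding secondary nodes can only create new vertices and edges and never destroys an existing link, because the link condition (C1)--(C2) between two retained nodes does not reference any other secondary node. Hence percolation of $\mathcal{G}(\lambda_{S1},\lambda_{PT})$ implies percolation of $\mathcal{G}(\lambda_{S2},\lambda_{PT})$ under the coupling, \ie connectivity is monotone increasing in $\lambda_S$. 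Dually, for $\lambda_{PT1}\le\lambda_{PT2}$ I couple the primary transmitter processes by thinning and retain each transmitter's receiver (placed uniformly within $R_p$, so the displacement theorem gives the correct marginal law) together with its transmitter; adding primary transmitters and their receivers can only violate condition (C2) for some pairs and thus only removes links. Therefore connectivity is monotone decreasing in $\lambda_{PT}$.

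Given this monotonicity, T1.2 is immediate. Decreasing $\lambda_{PT}$ preserves connectivity, so for each fixed $\lambda_S$ the set $\{\lambda_{PT}:(\lambda_S,\lambda_{PT})\in\mathcal{C}\}$ is an interval anchored at $\lambda_{PT}=0$, which is exactly the statement that the lower boundary is the $\lambda_S$-axis and that $\mathcal{C}$ is the region below $\lambda^*_{PT}(\lambda_S)$. Monotonicity of the upper boundary follows because, for $\lambda_{S1}<\lambda_{S2}$, every $\lambda_{PT}$ yielding connectivity at $\lambda_{S1}$ also yields connectivity at $\lambda_{S2}$ (increasing in $\lambda_S$), so taking suprema gives $\lambda^*_{PT}(\lambda_{S1})\le\lambda^*_{PT}(\lambda_{S2})$. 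For T1.1 I then join any two points of $\mathcal{C}$ by a three-leg axis-parallel path: drop each endpoint vertically to the $\lambda_S$-axis (staying in $\mathcal{C}$ by the decreasing property), and connect the two feet along the axis. The axis segment remains in $\mathcal{C}$ because $(\lambda_S,0)\in\mathcal{C}$ iff the homogeneous Boolean/random-geometric graph with radius $r_p$ percolates, \ie iff $\lambda_S>\lambda_c$, and $\{\lambda_S>\lambda_c\}$ is a ray containing both feet.

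The substance of the theorem is T1.3, for which I plan to use the Burton--Keane uniqueness machinery adapted to this continuum, correlated model. First, the three driving processes are jointly stationary and ergodic under translations of the plane, and the number $N$ of infinite components of $\mathcal{G}$ is a translation-invariant functional of them; ergodicity forces $N$ to be an a.s.\ constant in $\{0,1,2,\ldots,\infty\}$. Second, I would verify the \emph{finite-energy} (insertion/deletion-tolerance) property: conditioned on the configuration outside a bounded box, the restriction inside is again Poisson, so the event that the box contains no primary transmitters, no primary receivers, and a dense cloud of secondary users has positive conditional probability, and this local modification merges any infinite clusters meeting the box into one. A first application rules out $2\le N<\infty$, since merging would make both $N=k$ and $N=k-1$ occur with positive probability, contradicting a.s.\ constancy. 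The classical trifurcation counting argument then rules out $N=\infty$, leaving $N\in\{0,1\}$ a.s.

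The main obstacle lies precisely in establishing finite energy and the merging step for this model rather than for the standard Boolean model. Because primary receivers are tied to their transmitters, clearing a box of interferers requires removing not only transmitters inside the box but also receivers inside it whose transmitters lie outside; I would handle this by clearing a slightly enlarged region and invoking the independence of the Poisson process on disjoint sets, and by checking that the inserted secondary relays, now seeing a genuine bidirectional spectrum opportunity (C2) and lying within $r_p$ of nodes of the incident clusters (C1), actually realize the connecting links. Making this geometric construction precise---so that the merged configuration is a legitimate realization of the coupled model with the correct conditional law---is the delicate part; the remaining ergodic-theoretic and combinatorial steps are then standard.
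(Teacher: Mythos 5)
Your proposal follows essentially the same route as the paper: the same monotone coupling (superposition of the secondary process in $\lambda_S$, thinning of primary transmitter--receiver pairs in $\lambda_{PT}$) yields T1.1 and T1.2, and your plan for T1.3 is exactly the paper's --- ergodicity forces the number $N$ of infinite components to be an a.s.\ constant, a box-clearing/merging (finite-energy) argument rules out $2\le N<\infty$, and the combinatorial Burton--Keane-type argument (the paper invokes the proof of Theorem 3.6 of Meester and Roy) rules out $N=\infty$. The one step you dismiss as ``standard'' is the place the paper invests real work, namely proving ergodicity of the model (its Lemma 1), which is nontrivial precisely because the primary receivers are correlated with their transmitters; your marked-Poisson intuition is correct, and the delicate merging issue you flag (clearing an enlarged region of both transmitters and receivers before inserting the secondary relays) is handled in the paper exactly as you suggest.
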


\begin{proof}
The proofs of T1.1 and T1.2 are based on the coupling argument, a
technique frequently used in continuum
percolation~\citep[Section~2.2]{Meester&Roy:Con_Percolation}. The
proof of T1.3 is based on the ergodicity of the random model driven
by the three Poisson point processes of the primary transmitters,
the primary receivers, and the secondary users (the concept of
ergodicity of a random model is reviewed in
Sec.~\ref{subsubsec:ergodic_theory}). The details of the proofs are
given in Sec.~\ref{subsec:proof_thm1}.
\end{proof}

T1.1 and T1.2 specify the basic structure of the connectivity
region, as illustrated in Fig.~\ref{fig:Con_Region}. T1.3 implies
the occurrence of a phase transition phenomenon, that is, there
exists either a unique infinite connected component a.s. or no
infinite connected component a.s. This uniqueness of the infinite
connected component establishes the strong connectivity of the
secondary network: once it is connected, it is strongly connected.
It excludes the undesirable possibility of having more than one
(maybe infinite) infinite connected component in the secondary
network. We point out that such a property is not always present in
wireless networks. Two examples where more than one infinite
connected component exists in a homogeneous ad hoc network can be
found in~\citep{Meester&Roy:94AAP}.

\subsection{Critical Densities}
\label{subsec:CD}

In this subsection, we study the critical density $\lambda^*_S$ of
the secondary users and the critical density
$\overline{\lambda^*_{PT}}$ of the primary transmitters. Recall that
\begin{eqnarray*}
\lambda^*_S &\defeq& \inf\{\lambda_S: \exists \lambda_{PT}>0 \mbox{
s.t. $\mathcal{G}(\lambda_S,\lambda_{PT})$ is connected}\}, \\
\overline{\lambda^*_{PT}} &\defeq& \sup\{\lambda_{PT}: \exists
\lambda_S<\infty \mbox{ s.t. $\mathcal{G}(\lambda_S,\lambda_{PT})$
is connected}\}.
\end{eqnarray*}
We have the following theorem.

\begin{theorem} {\it Critical Densities.}
\label{thm:critical_density} \\
Given $R_p$, $R_I$, $r_p$, and $r_I$, we have
\begin{itemize}
\item[T2.1] $\lambda^*_S=\lambda_c(r_p)$, where $\lambda_c(r_p)$ is the critical
density for a homogeneous ad hoc network with transmission range
$r_p$ (\ie in the absence of the primary network).
\item[T2.2] $\overline{\lambda^*_{PT}}\leq \frac{\lambda_c(1)}{4\max \{R_I^2,
r_I^2\}-r_p^2}$, where the constant $\lambda_c(1)$ is the critical
density for a homogeneous ad hoc network with a unit transmission
range.
\end{itemize}
\end{theorem}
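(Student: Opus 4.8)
The plan for T2.1 rests on the observation that the spectrum-opportunity condition (C2) factorizes over the two endpoints of a link. Calling a secondary user \emph{open} if it has no primary transmitter within distance $R_I$ and no primary receiver within distance $r_I$, a bidirectional link exists between $A$ and $B$ precisely when both are open and $|A-B|\le r_p$. Thus $\mathcal{G}(\lambda_S,\lambda_{PT})$ is the Gilbert disc graph of range $r_p$ built on the thinned point process of open secondary users. For the lower bound $\lambda^*_S\ge\lambda_c(r_p)$, I would note that $\mathcal{G}(\lambda_S,\lambda_{PT})$ is a subgraph of the homogeneous Gilbert graph $G_{\mathrm{hom}}(\lambda_S,r_p)$ obtained by dropping (C2); hence whenever $\mathcal{G}$ percolates so does $G_{\mathrm{hom}}$, forcing $\lambda_S\ge\lambda_c(r_p)$, and taking the infimum over the admissible $\lambda_S$ gives the claim.

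For the matching upper bound $\lambda^*_S\le\lambda_c(r_p)$, I would fix any $\lambda_S>\lambda_c(r_p)$ and produce a positive $\lambda_{PT}$ for which $\mathcal{G}(\lambda_S,\lambda_{PT})$ still percolates. Since $G_{\mathrm{hom}}(\lambda_S,r_p)$ is supercritical it satisfies a finite-box crossing criterion: one tiles the plane into squares and makes the probability of a ``good'' square (whose local configuration forces crossing clusters that link up with its neighbours) as close to $1$ as desired by enlarging the box scale. Because the openness of a secondary user depends only on the primary processes within distance $\max\{R_I,r_I\}$ of it, these square-events have finite range of dependence; letting $\lambda_{PT}\to 0$ makes every secondary user open with probability tending to $1$, so the good-square probability can be pushed above the threshold of the Liggett--Schonmann--Stacey domination theorem for finitely dependent fields. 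The dominated supercritical site percolation then yields an infinite open secondary cluster, \ie $\mathcal{G}(\lambda_S,\lambda_{PT})$ percolates for all sufficiently small $\lambda_{PT}>0$. Combining the two bounds gives $\lambda^*_S=\lambda_c(r_p)$.

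For T2.2 the idea is to exhibit, when $\lambda_{PT}$ is large, an impassable barrier of blocked regions that confines every secondary cluster. Set $R\defeq\max\{R_I,r_I\}$ and retain the family of primary nodes realizing this maximum (transmitters if $R_I\ge r_I$, receivers otherwise); these form a Poisson process of density $\lambda_{PT}$, each forbidding secondary activity within a disc of radius $R$. The geometric heart is a \emph{blocking lemma}: if two such nodes are within distance $d\defeq\sqrt{4R^2-r_p^2}$, their discs overlap and, writing $2s$ for their separation, the blocked band separating the two sides is at least $2\sqrt{R^2-s^2}\ge r_p$ thick at its waist; consequently any two open points on opposite sides are more than $r_p$ apart, so no secondary $r_p$-link can cross between them. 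I would then form the Gilbert graph $H$ on these nodes with connection range $d$, whose critical density is $\lambda_c(d)=\lambda_c(1)/d^2=\lambda_c(1)/(4R^2-r_p^2)$.

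Whenever $\lambda_{PT}>\lambda_c(1)/(4R^2-r_p^2)$ the graph $H$ is supercritical, and by the standard two-dimensional duality and crossing theory there a.s. exist $H$-circuits encircling the origin at arbitrarily large radii. Applying the blocking lemma to each consecutive pair of nodes along such a circuit shows the associated union of discs is a closed annular barrier that no secondary link can traverse; hence the secondary cluster of any fixed user lies inside the innermost circuit and is finite, \emph{regardless of} $\lambda_S$. Therefore no finite $\lambda_S$ yields connectivity once $\lambda_{PT}$ exceeds this value, which is exactly the asserted bound $\overline{\lambda^*_{PT}}\le\lambda_c(1)/(4\max\{R_I^2,r_I^2\}-r_p^2)$. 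The main obstacle is this last step --- turning supercriticality of $H$ into genuine encircling barriers and verifying that the waist condition forbids \emph{every} crossing, not merely perpendicular ones --- which I expect is handled most cleanly by discretizing $H$ onto a lattice and invoking discrete circuit duality, so that the argument dovetails with the discretization technique already used for the sufficient condition.
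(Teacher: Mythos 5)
Your treatment of T2.1 is essentially the paper's own argument. The lower bound $\lambda^*_S\geq\lambda_c(r_p)$ is the same subgraph/coupling observation. For the upper bound, both you and the paper discretize the supercritical continuum model onto a lattice model with \emph{finite range of dependence}, whose open probability is pushed arbitrarily close to $1$ by first enlarging the box scale (using the crossing-probability result for supercritical two-dimensional Boolean models) and then letting $\lambda_{PT}\rightarrow 0$; the only difference is the finishing tool. The paper builds a dependent \emph{edge}-percolation model from L-R and T-B crossings and closes with a Peierls contour argument (via the bound of Fact~\ref{fact:upper_bound_n_closed}), whereas you propose a good-square \emph{site} model closed by Liggett--Schonmann--Stacey domination. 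These are interchangeable; neither buys anything substantive over the other.

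T2.2 is where you genuinely diverge: you argue the contrapositive of what the paper proves. The paper shows that an infinite secondary path, fattened into thin rectangles joining consecutive nodes, lies entirely in the \emph{vacant} region of the Boolean model of radius $\sqrt{r_I^2-r_p^2/4}-\epsilon$ centered at the primary receivers (and similarly with $R_I$ for the transmitters), so secondary percolation forces vacant percolation and hence $\lambda_{PT}\leq \lambda_c(1)/(4\max\{R_I^2,r_I^2\}-r_p^2)$ by the two-dimensional sharp transition. You instead show that above this density the \emph{occupied} region percolates and its circuits imprison every secondary cluster. The two routes are dual and rest on exactly the same geometric computation, since your $d=\sqrt{4R^2-r_p^2}$ equals $2\sqrt{R^2-r_p^2/4}$, and both invoke the coincidence of the occupied and vacant critical densities in two dimensions. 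However, your route carries an extra burden --- precisely the one you flag --- and your blocking lemma as stated does not close it. For two discs alone the complement is connected, so ``opposite sides'' is not well defined, and points on opposite sides of the axis near the far ends of the union can be arbitrarily close; moreover, a secondary link is an abstract edge whose \emph{endpoints} must avoid the $R$-discs, while the segment between them may pass straight through them --- this is exactly the counterexample the paper gives against taking the Boolean radius to be $R_I$. The correct statement, which repairs your lemma and is also the engine of the paper's proof, is pointwise: every point of a segment of length at most $r_p$ whose two endpoints lie outside all $R$-discs is at distance at least $\sqrt{R^2-r_p^2/4}=d/2$ from every primary node. Hence links, viewed as segments, avoid the occupied region of the radius-$d/2$ Boolean model, and supercritical occupied circuits at all scales (a standard consequence of two-dimensional crossing theory) then genuinely block every link and trap every cluster. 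With that substitution your argument is sound; note, though, that the paper's direction never needs circuits at all, since the fattened infinite path \emph{is} the infinite vacant component.
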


\begin{proof}
The basic idea of the proof of T2.1 is to approximate the secondary
network $\mathcal{G}(\lambda_S,\lambda_{PT})$ by a discrete
edge-percolation model on the grid. This discretization technique is
often used to convert a continuum percolation model to a discrete
site/edge percolation model (see, for example,~\citep[Chapter
3]{Meester&Roy:Con_Percolation},~\citep{Dousse&Etal:06JAPPP}). The
details of the proof are given in Sec.~\ref{subsubsec:proof_thm21}.

The proof of T2.2 is based on the argument that if there is an
infinite connected component in the secondary network, then an
infinite vacant component must exist in the two Poisson Boolean
models driven by the primary transmitters and the primary receivers,
respectively. The key point is to carefully choose the radii of the
two Poisson Boolean models in order to obtain a valid upper bound on
$\overline{\lambda^*_{PT}}$. The details of the proof can be found
in Sec.~\ref{subsubsec:proof_thm22}.
\end{proof}

\begin{figure}[htbp]
\centerline{\scalefig{0.85} \epsfbox{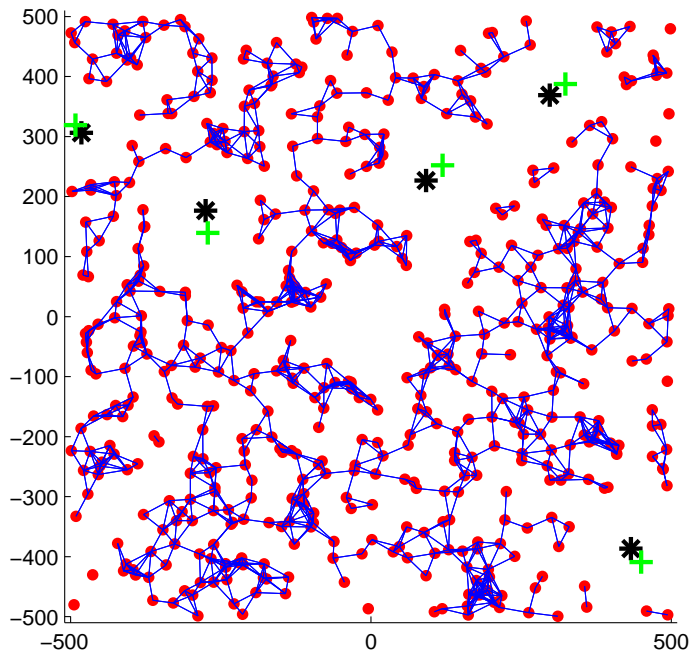}}
\caption{A realization of the Poisson heterogeneous network when the
percolation occurs (black stars denote primary transmitters, green
plus signs denote primary receivers, red dots denote secondary
users, and blue segments denote the bidirectional links between
secondary users). We have removed secondary users who do not see
opportunities for clarity. The simulation parameters are given by
$\lambda_{PT}=10$km$^{-2}$, $R_p=50$m, $R_I=80$m,
$\lambda_S=650$km$^{-2}$, $r_p=50$m, $r_I=80$m, and the critical
density in this case is $\lambda_c(50)\approx 576$km$^{-2}$.}
\label{fig:Connected_Network}
\end{figure}

Fig.~\ref{fig:Connected_Network} shows one realization of the
Poisson heterogeneous network when $\lambda_S$ is slightly larger
than $\lambda_c(r_p)$ and $\lambda_{PT}$ is small. At least one
left-to-right (L-R) crossing and at least one top-to-bottom (T-B)
crossing can be found in the square network. It is thus expected
that these L-R and T-B crossings in finite square regions can form
an infinite connected component in the whole network on
$\mathbb{R}^2$. If we slightly increase $\lambda_{PT}$, then we
observe from Fig.~\ref{fig:Disconnected_Network} that the reduction
in spectrum opportunities eliminates considerable communication
links in the secondary network, creating several disjoint small
components.

\begin{figure}[htbp]
\centerline{\scalefig{0.85}
\epsfbox{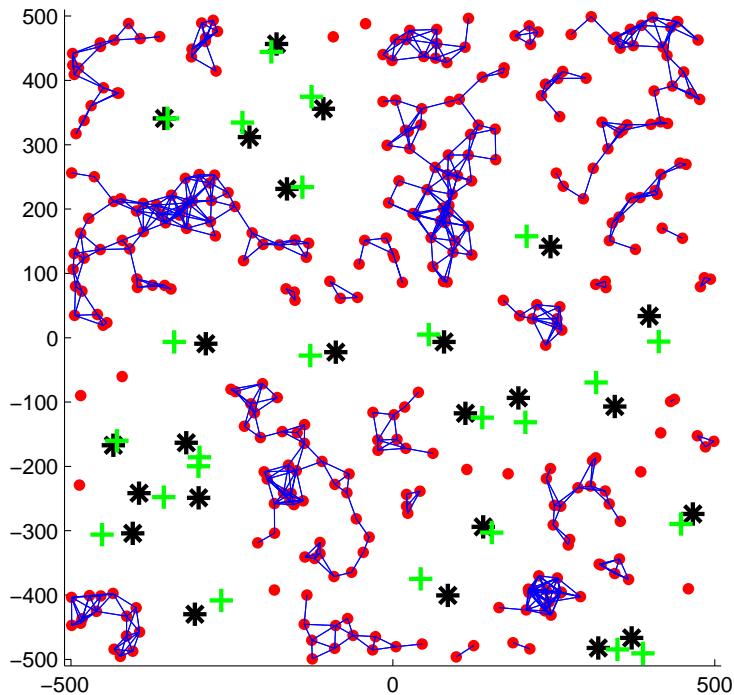}} \caption{A realization
of the Poisson heterogeneous network when the percolation does not
occur (black stars denote primary transmitters, green plus signs
denote primary receivers, red dots denote secondary users, and blue
segments denote the bidirectional links between secondary users). We
have removed secondary users who do not see opportunities for
clarity. The simulation parameters are given by
$\lambda_{PT}=20$km$^{-2}$, $R_p=50$m, $R_I=80$m,
$\lambda_S=650$km$^{-2}$, $r_p=50$m, $r_I=80$m, and the critical
density in this case is $\lambda_c(50)\approx 576$km$^{-2}$.}
\label{fig:Disconnected_Network}
\end{figure}

Fig.~\ref{fig:Con_Region_Simu2} shows a simulation example of the
connectivity region, where the upper bound on the critical density
$\overline{\lambda^*_{PT}}$ of the primary transmitters given in
T2.2 appears to be achievable.

\begin{figure}[htbp]
\centerline{
\scalefig{0.7}\epsfbox{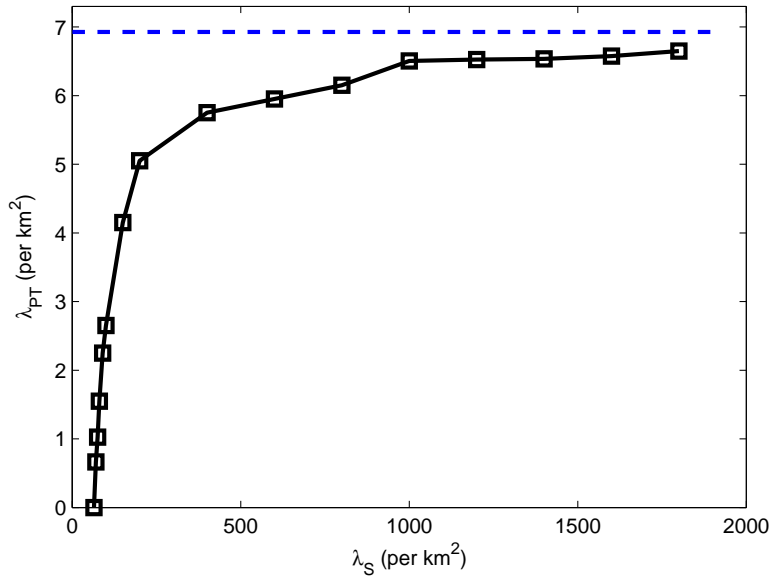} }
\caption{Simulated connectivity regions when $r_p = 150$m, $r_I =
240$m, $R_p =100$m, and $R_I=120$m. The blue dashed line is the
upper bound $\frac{\lambda_c(1)}{4\max \{R_I^2, r_I^2\}-r_p^2}$ on
the critical density $\overline{\lambda^*_{PT}}$ of primary
transmitters given in T2.2. The area of the simulated heterogeneous
network is $2000$m$\times 2000$m. For a fixed density $\lambda_S$ of
the secondary users, the upper boundary $\lambda^*_{PT}(\lambda_S)$
is equal to the minimum density of the primary transmitters such
that over all the $1000$ realizations, the percentage of the ones in
which there exists at least one L-R crossing is below $50$\%. The
intuitive reason for choosing the existence of an L-R crossing as
the criterion for connectivity is illustrated in
Fig.~\ref{fig:Connected_Network}-\ref{fig:Disconnected_Network}.}
\label{fig:Con_Region_Simu2}
\end{figure}

\subsection{A Necessary Condition for Connectivity}
\label{subsec:NC_connectivity}

In this subsection, we establish a necessary condition for
connectivity which is given in terms of the average conditional
degree of a secondary user. This condition agrees with our
intuition: the secondary network cannot be connected if the degree
of every secondary user is small.

Let $\mathbb{I} (A,d,\textrm{rx/tx})$ denote the event that there
exists primary receivers/transmitters within distance $d$ of a
secondary user $A$. Let $\overline{\mathbb{I}(A,d,\textrm{rx/tx})}$
denote the complement of $\mathbb{I}(A,d,\textrm{rx/tx})$. Since a
secondary user is isolated if it does not see a spectrum
opportunity, we focus on secondary users who experience spectrum
opportunities and define the conditional average degree $\mu$ of
such a secondary user $A$ as
\begin{eqnarray} \label{eqn:def_cond_avg_deg}
\mu =
\mathbb{E}[deg(A)|~\overline{\mathbb{I}(A,r_I,\textrm{rx})}\cap
\overline{\mathbb{I}(A,R_I,\textrm{tx})}],
\end{eqnarray}
where $deg(A)$ denotes the degree of $A$, $r_I$ the interference
range of the secondary users, and $R_I$ the interference range of
the primary users. Notice that the degree of $A$ is the number of
secondary users within the transmission range of $A$ \emph{and}
experiencing opportunities. We arrive at the following necessary
condition for connectivity.

\begin{theorem} \label{thm:necessary_cond}
A necessary condition for the connectivity of
$\mathcal{G}(\lambda_S,\lambda_{PT})$ is $\mu>1$, where $\mu$ is the
conditional average degree of a secondary user defined
in~(\ref{eqn:def_cond_avg_deg}).
\end{theorem}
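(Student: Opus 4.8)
The plan is to prove the contrapositive: if the conditional average degree $\mu \le 1$, then the secondary network cannot percolate. The standard tool for turning a degree bound into a non-percolation result is a \emph{branching process comparison}, and this is exactly the route the excerpt's own proof sketch hints at (``derived by the construction of a branching process''). The idea is that if each node, on average, connects to at most one new node, then the connected component containing a given node behaves like a (sub)critical Galton--Watson tree, whose total progeny is finite a.s. — so no infinite component can form.

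First I would fix a \emph{typical} secondary user $A$ that experiences a spectrum opportunity; by the definition in~(\ref{eqn:def_cond_avg_deg}), conditioning on the event $\overline{\mathbb{I}(A,r_I,\textrm{rx})}\cap \overline{\mathbb{I}(A,R_I,\textrm{tx})}$ is the natural Palm-type conditioning that makes $\mu$ the relevant offspring mean. I would then construct an exploration process of the cluster of $A$: starting from $A$, reveal its neighbors (secondary users within $r_p$ that also see an opportunity), then reveal \emph{their} as-yet-unexplored neighbors, and so on, generation by generation. The number of children of a node in this exploration is stochastically dominated by $deg(A)$ under the opportunity-conditioning, because each explored node is itself an opportunity-experiencing secondary user and its forward neighbors (excluding the parent) number at most its conditional degree. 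The key probabilistic fact is that because the underlying secondary users form a Poisson process and the opportunity condition depends only on primary points in a bounded neighborhood, the offspring counts across the exploration can be coupled to independent copies with mean $\le \mu$, yielding domination by a Galton--Watson process with mean offspring $\mu$.

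Once this domination is in place the conclusion is immediate from classical branching-process theory: a Galton--Watson process with offspring mean $\mu \le 1$ dies out a.s. and has finite total progeny, so the explored cluster of $A$ is finite a.s. Since $A$ was an arbitrary opportunity-experiencing node (and non-opportunity nodes are isolated by construction), every cluster is finite a.s., contradicting the existence of an infinite connected component. Hence connectivity forces $\mu > 1$.

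\textbf{The main obstacle} I anticipate is establishing the stochastic domination rigorously, because the offspring counts in the exploration are \emph{not} genuinely independent: the Poisson process of secondary users is memoryless under exploration, but the primary transmitter/receiver processes that gate the opportunities are shared across overlapping interference neighborhoods, inducing positive correlations among the children of nearby explored nodes. The careful point — flagged as ``the construction of a branching process'' in the sketch — is to argue that these correlations can only \emph{inflate} clustering, not create extra independent offspring, so that the mean-$\mu$ upper bound on per-node offspring still yields a dominating mean-$\mu$ branching process. I would handle this by defining the offspring at each exploration step to include only the newly discovered secondary users (subtracting the edge back to the parent and any previously revealed nodes), and by using the spatial independence of the Poisson secondary process on disjoint regions together with a coupling that assigns each explored node at most its conditional expected degree $\mu$; the correlations in the primary processes reduce, rather than increase, the expected number of opportunity-bearing neighbors, preserving the subcriticality when $\mu \le 1$.
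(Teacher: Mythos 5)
Your overall architecture matches the paper's proof: root a branching/exploration process at an opportunity-seeing user $A$, bound the cluster $C_A$ by the total progeny, and invoke the classical Galton--Watson extinction theorem with offspring mean $\mu$. (One cosmetic difference: the paper's tree deliberately \emph{overcounts} --- the children of a node are all opportunity-seeing secondary users within $r_p$, with no exclusion of previously visited nodes --- so that the tree trivially covers $C_A$, whereas you explore only newly discovered nodes; either convention covers the cluster.) The problem is in the step you yourself flag as the crux, the stochastic domination, and your resolution of it is backwards.

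Opportunity events are \emph{absence} events for the (marked) Poisson process of primary transmitter--receiver pairs, and absence events of a Poisson process are \emph{positively} correlated: writing $N(R)$ for the number of primary points in a region $R$,
\begin{equation*}
\Pr\{N(R_2)=0 \mid N(R_1)=0\} \;=\; e^{-\lambda |R_2\setminus R_1|} \;\geq\; e^{-\lambda |R_2|} \;=\; \Pr\{N(R_2)=0\}.
\end{equation*}
Consequently, conditioning on the exploration history --- the event that every previously revealed node sees an opportunity --- can only \emph{increase} the conditional probability that a newly revealed neighbor $C$ sees one, because $C$'s interference neighborhoods have already been partially cleared of primary points by the earlier conditioning: $\Pr\{C \textrm{ opp} \mid \textrm{history}\}\geq \Pr\{C \textrm{ opp}\mid \textrm{parent opp}\}$, typically strictly, since $R_I, r_I > r_p$ makes the relevant interference disks overlap heavily. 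So the conditional offspring mean of an explored node given the past can strictly exceed $\mu$ (this already fails at generation $1$: $\mathbb{E}[deg(B)\mid B \textrm{ opp}, A \textrm{ opp}]\geq \mu$), and your proposed coupling to a Galton--Watson process with mean $\mu$ does not follow. Your claim that ``the correlations in the primary processes reduce, rather than increase, the expected number of opportunity-bearing neighbors'' is false in exactly the direction your argument needs. To be fair, the paper's own proof is silent on this same point --- it simply asserts that $\mu$ is the average number of offspring in every generation of its overcounting tree --- but your proposal converts that silence into an explicit correlation claim that is wrong; a rigorous repair would have to control the inflation of conditional opportunity probabilities along the exploration (e.g., via a Mecke-formula computation over paths), not dismiss it.
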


\begin{proof}
The basic idea is to construct a branching process, where the
conditional average degree $\mu$ is the average number of offspring.
This branching process provides an upper bound on the number of
secondary users in a connected component. If $\mu \leq 1$, then the
branching process is finite a.s. It thus follows that there is no
infinite connected component a.s. in
$\mathcal{G}(\lambda_S,\lambda_{PT})$. Details can be found in
Sec.~\ref{subsec:proof_thm3}.
\end{proof}

To apply the necessary condition given in
Theorem~\ref{thm:necessary_cond}, the conditional average degree
$\mu$ of a secondary user $A$ needs to be evaluated based on the
network parameters. Let $B$ be a secondary user randomly and
uniformly distributed within the transmission range $r_p$ of $A$.
Let $g(\lambda_{PT},r_p,r_I,R_p,R_I)$ denote the probability of a
bidirectional opportunity between $A$ and $B$ conditioned on the
event that $A$ sees an opportunity. Based on the statistical
equivalence and independence of different points in a Poisson point
process, the conditional average degree $\mu$ of a secondary user
$A$ is given by this conditional probability $g(\cdot)$ of a
bidirectional opportunity between $A$ and a randomly chosen neighbor
multiplied by the average number of neighbors of $A$, \ie
\begin{eqnarray} \label{eqn:cond_avg_deg_sim}
\mu = \left(\lambda_S \pi r_p^2\right)\cdot
g(\lambda_{PT},r_p,r_I,R_p,R_I).
\end{eqnarray}
The detailed derivation for $(\ref{eqn:cond_avg_deg_sim})$ and the
expression for $g(\cdot)$ are given in Appendix~A. It is also shown
in Appendix~A that $g(\cdot)$ is a strictly decreasing function of
$\lambda_{PT}$. Thus $g^{-1}(\cdot)$, the inverse of $g(\cdot)$ with
respect to $\lambda_{PT}$, is well-defined.

Combining (\ref{eqn:cond_avg_deg_sim}) with
Theorem~\ref{thm:necessary_cond}, we obtain an outer bound on the
connectivity region. Specifically, let $\mu
(\lambda_S,\lambda_{PT})$ denote the conditional average degree of a
secondary user in $\mathcal{G}(\lambda_S,\lambda_{PT})$. Then those
density pairs $(\lambda_S,\lambda_{PT})$ satisfying $\mu
(\lambda_S,\lambda_{PT})\leq 1$ are outside the connectivity region.

\begin{corollary} \label{cor:outer_bound}
Given $R_p$, $R_I$, $r_p$, and $r_I$, an outer bound on the
connectivity region $\Cc$ is given by
\begin{eqnarray*}
\lambda_{PT}=g^{-1}\left(\frac{1}{\lambda_S \pi r_p^2}\right),
\end{eqnarray*}
where $g^{-1}(\cdot)$ is the inverse of the conditional probability
$g(\cdot)$ with respect to $\lambda_{PT}$.
\end{corollary}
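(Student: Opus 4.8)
The plan is to combine the necessary condition from Theorem~\ref{thm:necessary_cond} with the closed-form expression for the conditional average degree in~(\ref{eqn:cond_avg_deg_sim}), and then to invert $g(\cdot)$ using its monotonicity in $\lambda_{PT}$. All of the analytical content has already been established upstream, so what remains is a short algebraic manipulation with careful bookkeeping of the inequality directions.

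First I would take the contrapositive of Theorem~\ref{thm:necessary_cond}: if $\mu(\lambda_S,\lambda_{PT})\leq 1$, then $\mathcal{G}(\lambda_S,\lambda_{PT})$ is not connected, so $(\lambda_S,\lambda_{PT})\notin\mathcal{C}$. Substituting~(\ref{eqn:cond_avg_deg_sim}), the inequality $\mu\leq 1$ is equivalent to
\[
g(\lambda_{PT},r_p,r_I,R_p,R_I)\leq \frac{1}{\lambda_S\pi r_p^2}.
\]

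Next I would invoke the fact, shown in Appendix~A, that $g(\cdot)$ is strictly decreasing in $\lambda_{PT}$, so that its inverse $g^{-1}(\cdot)$ with respect to $\lambda_{PT}$ is well-defined and itself strictly decreasing. Applying $g^{-1}$ to both sides of the displayed inequality therefore reverses it and yields $\lambda_{PT}\geq g^{-1}\!\left(1/(\lambda_S\pi r_p^2)\right)$. Hence every density pair lying on or above the curve $\lambda_{PT}=g^{-1}\!\left(1/(\lambda_S\pi r_p^2)\right)$ has $\mu\leq 1$ and lies outside $\mathcal{C}$, which is precisely the statement that this curve is an outer bound on the connectivity region; note that the curve itself is the locus where $\mu=1$.

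There is no genuine obstacle here: the only point requiring care is the flip of the inequality when inverting the decreasing function $g$. The entire substance of the result resides in the branching-process argument behind Theorem~\ref{thm:necessary_cond} and in the derivation of~(\ref{eqn:cond_avg_deg_sim}) together with the monotonicity of $g$ from Appendix~A; the corollary is then purely mechanical.
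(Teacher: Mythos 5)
Your proposal is correct and follows essentially the same route as the paper: the paper derives this corollary by combining Theorem~\ref{thm:necessary_cond} with~(\ref{eqn:cond_avg_deg_sim}), noting that pairs with $\mu(\lambda_S,\lambda_{PT})\leq 1$ lie outside $\Cc$, and invoking the strict monotonicity of $g(\cdot)$ in $\lambda_{PT}$ (established in Appendix~A) to invert the relation. Your explicit bookkeeping of the inequality flip under the decreasing inverse $g^{-1}$ is exactly the mechanical step the paper leaves implicit.
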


\subsection{A Sufficient Condition for Connectivity}
\label{subsec:SC_connectivity}

In this subsection, we establish a sufficient condition for
connectivity, which provides an inner bound on the connectivity
region and a criterion for checking whether a secondary network is
connected.

\begin{figure}[htbp]
\centerline{
\begin{psfrags}
\psfrag{d}[c]{$d$} \psfrag{O}[c]{$O$} \psfrag{O1}[c]{$O_1$}
\psfrag{O2}[c]{$O_2$} \psfrag{O3}[c]{$O_3$} \psfrag{O4}[c]{$O_4$}
\psfrag{O5}[c]{$O_5$} \psfrag{O6}[c]{$O_6$}
\psfrag{O7}[c]{$O_7$}\psfrag{O8}[c]{$O_8$}
\scalefig{0.65}\epsfbox{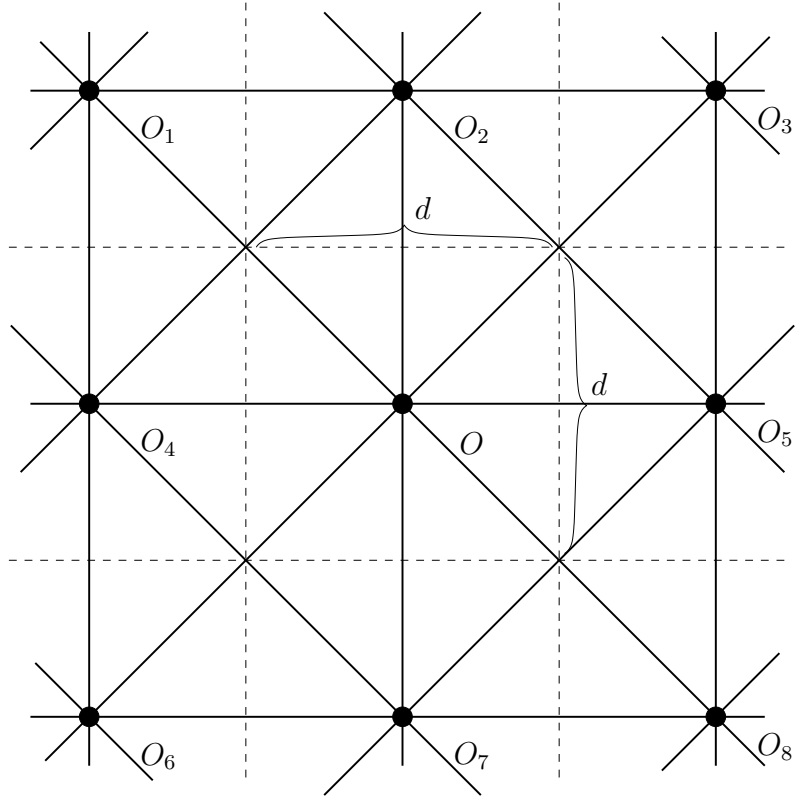}
\end{psfrags}
} \caption{An illustration of the dependent site-percolation model
$\mathcal{L}$ with side length $d$ (solid dots denote sites, solid
lines denote edges connecting every two sites, and dashed lines
denote the squared partition).} \label{fig:D_Site_Per}
\end{figure}

The sufficient condition for connectivity is established by using
the discretization technique. The continuum percolation model is
mapped onto a dependent site-percolation model $\mathcal{L}$ in the
following way. As illustrated in Fig.~\ref{fig:D_Site_Per}, we
partition $\mathbb{R}^2$ into (dashed) squares with side length $d$
and locate a site at the center of each square. Sites whose
associated dashed squares share at least one common point are
considered connected (as illustrated by solid lines in
Fig.~\ref{fig:D_Site_Per}). Thus each site is connected to eight
neighbors\footnote{For the commonly used square site-percolation
model, each site has four neighbors. The site-percolation model
constructed here can provide a better inner bound.} (see the eight
neighbors $O_1$,...,$O_8$ of site $O$ in Fig.~\ref{fig:D_Site_Per}).
Let $B_O$ be the associated dashed square of $O$, then $O$ is
occupied if there exists in $B_O$ at least one secondary user who
sees an opportunity.

Since the largest distance between two points in two neighboring
dashed squares is $2\sqrt{2}d$, it follows that if we set
$d=\frac{r_p}{2\sqrt{2}}$, then for every pair of secondary users in
two neighboring dashed squares, they are within the transmission
range $r_p$ of each other. Based on the definitions of occupied site
in $\Lc$ and communication link in the secondary network, we
conclude that the existence of an infinite occupied component (a
connected component consisting of only occupied sites) in
$\mathcal{L}$ implies the existence of an infinite connected
component in the secondary network.

Due to the fact that spectrum opportunities are spatially dependent,
the state of one site is correlated with the states of its adjacent
sites. Thus, the above site-percolation model $\mathcal{L}$ is a
dependent model. Define the dependence range $k$ as the minimum
distance such that the state of any two sites at distance $d>k$ are
independent, where the distance between two sites is the minimum
number of neighboring sites that must be traversed from one site to
the other. Then the dependence range of $\mathcal{L}$ is given by
\begin{eqnarray} \label{eqn:dependency_range}
k=\left\lceil \frac{8 \max
\left\{R_I+\frac{r_p}{4},r_I+\frac{r_p}{4}\right\}}{r_p}\right\rceil-1.
\end{eqnarray}

Let $p_c$ denote the upper critical probability of $\Lc$ which is
defined as the minimum occupied probability $p^*$ such that if the
occupied probability $p>p^*$, an infinite occupied component
containing the origin exists in $\mathcal{L}$ with a positive
probability (wpp.). Since the dependence range $k$ of $\mathcal{L}$
is finite, it follows from Theorem
2.3.1~\citep{Franceschetti&Meester:Random_Network_Comm} that
$p_c<1$. Now we present the sufficient condition for connectivity in
the following theorem.

\begin{theorem}
Let $p_c$ denote the upper critical probability of the dependent
site-percolation model $\mathcal{L}$ specified above. Define
\begin{eqnarray} \label{eqn:I_def}
I(r,R_p,r_I)=2\int_0^r t \frac{S_I (t,R_p,r_I)}{\pi
R_p^2}\mathrm{d}t,
\end{eqnarray}
where $S_I(t,R_p,r_I)$ is the common area of two circles with radii
$R_p$ and $r_I$ and centered $t$ apart. Then the secondary network
is connected if
\begin{eqnarray*}
\left[1-\exp \left(-\frac{\lambda_S r_p^2}{8}\right)\right]\exp
\left\{-\lambda_{PT} \pi \left[R_I^2+r_I^2
-I\left(R_I,R_p,r_I\right)\right]\right\}>p_c.
\end{eqnarray*}
\end{theorem}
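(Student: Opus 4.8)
\section*{Proof proposal}

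The plan is to reduce the connectivity of the secondary network to the supercriticality of the dependent site-percolation model $\mathcal{L}$, and then to bound the occupancy probability of $\mathcal{L}$ from below. Recall from the construction above that with side length $d=r_p/(2\sqrt{2})$ the associated square has area $|B_O|=d^2=r_p^2/8$, that two secondary users lying in neighboring squares are automatically within transmission range $r_p$, and that if each of them sees an opportunity then, by condition (C2), there is a bidirectional link between them. Hence an infinite occupied component in $\mathcal{L}$ yields an infinite connected component in the secondary network, as already noted, and since the dependence range $k$ in (\ref{eqn:dependency_range}) is finite we have $p_c<1$. Writing $p$ for the probability that a site is occupied, it therefore suffices to show that the hypothesis forces $p>p_c$: once $p>p_c$ an infinite occupied component containing the origin exists with positive probability, and by the translation invariance and ergodicity of the three driving Poisson processes the existence of an infinite occupied component is a zero--one event, hence it occurs almost surely. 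The whole proof thus comes down to the lower bound $p \ge [1-\exp(-\lambda_S r_p^2/8)]\exp\{-\lambda_{PT}\pi[R_I^2+r_I^2-I(R_I,R_p,r_I)]\}$.

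First I would compute the probability $p_{\mathrm{opp}}$ that a single fixed point sees an opportunity, i.e.\ that there is no primary transmitter within $R_I$ and no primary receiver within $r_I$ of it. I would treat the primary transmitters together with their receivers as a marked Poisson process: a transmitter at $x$ carries a receiver placed uniformly in the disk of radius $R_p$ about $x$, so that receiver falls within $r_I$ of the fixed point with probability $S_I(|x|,R_p,r_I)/(\pi R_p^2)$. By the void-probability (marking) theorem, $p_{\mathrm{opp}}=\exp(-\lambda_{PT}\Lambda)$, where $\Lambda$ is the integral over $x$ of the probability that the transmitter at $x$ is ``bad'' (lies within $R_I$, or sends its receiver within $r_I$). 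This integral splits into $\pi R_I^2$ (transmitter within $R_I$), plus the receiver contribution $\int_{\mathbb{R}^2} S_I(|x|,R_p,r_I)/(\pi R_p^2)\,dx$, minus the overlap on $\{|x|\le R_I\}$. A Fubini computation gives $\int_{\mathbb{R}^2} S_I(|x|,R_p,r_I)\,dx=\pi^2 R_p^2 r_I^2$, so the receiver term is $\pi r_I^2$, while the overlap term, in polar coordinates, is exactly $\pi I(R_I,R_p,r_I)$ with $I$ as defined in (\ref{eqn:I_def}). Hence $\Lambda=\pi[R_I^2+r_I^2-I(R_I,R_p,r_I)]$, and $p_{\mathrm{opp}}$ equals the second factor in the stated bound.

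The delicate step is passing from the per-point probability $p_{\mathrm{opp}}$ to the occupancy probability $p$, because the opportunities of different secondary users in $B_O$ are spatially correlated through the shared primary processes. I would handle this by conditioning on the primary transmitter and receiver configurations. Given this configuration, let $G\subseteq B_O$ be the random set of locations in $B_O$ that see an opportunity; since the secondary users form an independent Poisson process of density $\lambda_S$, the conditional occupancy probability is exactly $1-\exp(-\lambda_S|G|)$. The map $a\mapsto 1-\exp(-\lambda_S a)$ is concave with value $0$ at $a=0$, so $1-\exp(-\lambda_S a)\ge (a/|B_O|)\,[1-\exp(-\lambda_S|B_O|)]$ for all $a\in[0,|B_O|]$. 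Taking expectations over the primary configuration and using Fubini to evaluate $\mathbb{E}[|G|]=|B_O|\,p_{\mathrm{opp}}$ (the expected opportunity area equals the area times the per-point probability, by stationarity) yields $p=\mathbb{E}[1-\exp(-\lambda_S|G|)]\ge p_{\mathrm{opp}}\,[1-\exp(-\lambda_S|B_O|)]$, which with $|B_O|=r_p^2/8$ is precisely the claimed lower bound.

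I expect the main obstacle to be exactly this correlation issue: a naive argument would try to multiply $p_{\mathrm{opp}}$ by the probability that $B_O$ contains a secondary user, which is not in general a valid lower bound for $p$. The concavity inequality above is what makes the product rigorous, converting the random area $|G|$ into its expectation at the cost only of the factor $1-\exp(-\lambda_S|B_O|)$. The remaining work is purely computational---verifying the $\int S_I$ identity and the polar-coordinate reduction to $I(R_I,R_p,r_I)$---and the concluding percolation step is standard once $p>p_c$ is established.
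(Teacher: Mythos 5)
Your proposal is correct and follows the same architecture as the paper's proof: map the continuum model onto the dependent site model $\mathcal{L}$ with $d=r_p/(2\sqrt{2})$, invoke $p_c<1$ from the finite dependence range, reduce connectivity to the occupancy probability $p$ exceeding $p_c$, and upgrade percolation with positive probability to almost-sure percolation via ergodicity of the driving Poisson processes. Where you genuinely depart from the paper is the key quantitative step. The paper simply asserts the \emph{equality} $p=[1-\exp(-\lambda_S d^2)]\cdot p_{\mathrm{opp}}$, importing $p_{\mathrm{opp}}=\exp\{-\lambda_{PT}\pi[R_I^2+r_I^2-I(R_I,R_p,r_I)]\}$ from Proposition 1 of the cited power-control paper (with distance set to $0$). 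As an equality this is not literally true, for exactly the correlation reason you flag: conditioned on the primary configuration, the opportunity set $G\subseteq B_O$ can be a proper nonempty subset of $B_O$, and by strict concavity the true $p=\mathbb{E}[1-e^{-\lambda_S|G|}]$ then strictly exceeds the product. Your concavity-plus-Fubini argument establishes the inequality $p\ge[1-\exp(-\lambda_S r_p^2/8)]\,p_{\mathrm{opp}}$, which is the direction the theorem actually needs, so your proof repairs a (harmless, since the error is in the safe direction) imprecision in the published argument; an even shorter route to the same bound is to condition on the presence of at least one secondary user in $B_O$ and note that any single such user, being independent of the primary processes and uniformly located, sees an opportunity with probability exactly $p_{\mathrm{opp}}$ by stationarity. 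Your first-principles derivation of $p_{\mathrm{opp}}$ via the marked-Poisson void probability (with the Fubini identity $\int_{\mathbb{R}^2} S_I(|x|,R_p,r_I)\,dx=\pi^2R_p^2r_I^2$ and the polar reduction of the overlap term to $\pi I(R_I,R_p,r_I)$) is likewise a correct, self-contained substitute for the paper's citation and reproduces the stated exponent. In short: same skeleton, but your handling of the occupancy probability is the rigorous version of what the paper states, at the cost of a slightly longer computation.
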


\begin{proof}
The proof is based on the ergodicity of the heterogeneous network
model and its relation with the constructed dependent
site-percolation model $\Lc$. Details can be found in
Sec.~\ref{subsec:proof_thm4}.
\end{proof}

By applying a general upper bound on the upper critical probability
$p_c$ for a site-percolation model with finite dependence
range~\citep[Theorem
2.3.1]{Franceschetti&Meester:Random_Network_Comm}, we arrive at the
following corollary.
\begin{corollary} \label{cor:inner_bound}
A sufficient condition for the connectivity of
$\mathcal{G}(\lambda_S,\lambda_{PT})$ is
\begin{eqnarray*}
\lambda_{PT}<\frac{1}{\pi \left[R_I^2+r_I^2
-I(R_I,R_p,r_I)\right]}\ln \frac{1-\exp \left(-\frac{\lambda_S
r_p^2}{8}\right)}{1-\left(\frac{1}{3}\right)^{(2k+1)^2}},
\end{eqnarray*}
where $I(R_I,R_p,r_I)$ is defined in~(\ref{eqn:I_def}) and $k$ is
the dependence range of the site-percolation model defined
in~(\ref{eqn:dependency_range}).
\end{corollary}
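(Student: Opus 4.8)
The plan is to remove the implicit threshold $p_c$ from the hypothesis of the preceding theorem by replacing it with an explicit upper bound, and then to solve the resulting inequality for $\lambda_{PT}$. The only external ingredient I need is a general bound on the upper critical probability of a site-percolation model of finite dependence range, namely Theorem~2.3.1 of~\citep{Franceschetti&Meester:Random_Network_Comm}, which for the model $\mathcal{L}$ with dependence range $k$ given in~(\ref{eqn:dependency_range}) yields
\[
p_c \leq 1-\left(\frac{1}{3}\right)^{(2k+1)^2}.
\]
I would recall the mechanism behind this bound only briefly, since it is quoted rather than reproved: a model of dependence range $k$ is stochastically dominated from below by a comparison independent model obtained by collapsing blocks of mutually dependent sites into independent super-sites, the exponent $(2k+1)^2$ counting the sites in such a block; because $k$ is finite this forces $p_c<1$, as already noted before the theorem.

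Granting this bound, the derivation is purely algebraic, and the strategy is to show that the hypothesis of the corollary implies the hypothesis of the theorem. Write the bracketed constant as $E \defeq \pi[R_I^2+r_I^2-I(R_I,R_p,r_I)]$, which is strictly positive. Since $1-(1/3)^{(2k+1)^2}\geq p_c$, it suffices to establish the stronger inequality
\[
\left[1-\exp\left(-\frac{\lambda_S r_p^2}{8}\right)\right]\exp\left(-\lambda_{PT}E\right)>1-\left(\frac{1}{3}\right)^{(2k+1)^2},
\]
because this at once makes the left-hand side exceed $p_c$, which is exactly what the theorem requires for connectivity. I would then divide both sides by the positive factor $1-\exp(-\lambda_S r_p^2/8)$, take logarithms of both (positive) sides, and multiply through by $-1/E$, reversing the inequality once, to arrive at
\[
\lambda_{PT}<\frac{1}{E}\ln\frac{1-\exp\left(-\frac{\lambda_S r_p^2}{8}\right)}{1-\left(\frac{1}{3}\right)^{(2k+1)^2}},
\]
which is the stated condition once $E$ is unfolded.

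Because these steps are routine, there is no genuine obstacle; the real care is bookkeeping on the direction of the inequality. I would verify that dividing by $1-\exp(-\lambda_S r_p^2/8)>0$ is inequality-preserving, that the logarithm is monotone, and that dividing by $-E<0$ flips the inequality exactly once (the flip being absorbed by inverting the fraction inside the logarithm). A secondary point worth flagging is non-vacuousness: the right-hand side is positive, and hence the condition is nonempty, precisely when $1-\exp(-\lambda_S r_p^2/8)>1-(1/3)^{(2k+1)^2}$, i.e. when $\lambda_S$ is large enough that the per-site occupied probability can clear the percolation threshold; for smaller $\lambda_S$ the stated sufficient condition is simply empty, consistent with the fact that connectivity cannot be guaranteed in that regime.
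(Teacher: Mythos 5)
Your proposal is correct and follows exactly the paper's route: the paper likewise derives this corollary by substituting the bound $p_c \leq 1-\left(\frac{1}{3}\right)^{(2k+1)^2}$ from Theorem~2.3.1 of~\citep{Franceschetti&Meester:Random_Network_Comm} into the hypothesis of the preceding theorem and solving the resulting inequality for $\lambda_{PT}$. Your additional remark on non-vacuousness (the condition is empty unless $1-\exp(-\lambda_S r_p^2/8)$ exceeds $1-(1/3)^{(2k+1)^2}$) is a sensible observation beyond what the paper records, but the core argument is the same.
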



\section{Impact of Transmission Power: Proximity vs. Opportunity}
\label{sec:PvsO}

In this section, we study the impact of the secondary users'
transmission power on the connectivity and the conditional average
degree of the secondary network. As has been illustrated in
Fig.~\ref{fig:Con_Region_Simu1}, there exists a tradeoff between
proximity and opportunity in designing the secondary users'
transmission power for connectivity. Specifically, increasing the
transmission power of the secondary users leads to a smaller
critical density $\lambda^*_S$ of the secondary users, but at the
same time, a lower tolerance to the primary traffic load manifested
by a smaller critical density $\overline{\lambda^*_{PT}}$ of the
primary transmitters.

\subsection{Impact on the Conditional Average Degree}
\label{subsec:Imp_Cond_Avg_Deg}

As discussed in Sec.~\ref{subsec:NC_connectivity}, the expression
for the conditional average degree $\mu$ can be decomposed into the
product of two terms: $\lambda_S \pi r_p^2$ and
$g(\lambda_{PT},r_p,r_I,R_p,R_I)$. The first term is the average
number of neighbors of a secondary user, which increases with the
transmission power $p_{tx}$ of the secondary users (\ie enhanced
proximity). The other term $g(\lambda_{PT},r_p,r_I,R_p,R_I)$ is the
conditional probability of a bidirectional opportunity, which
decreases with $p_{tx}$ due to reduced spectrum opportunities. This
tension between proximity and opportunity is illustrated in
Fig.~\ref{fig:Con_Avg_Deg}, where we observe that the impact of
$p_{tx}$ on proximity dominates when $p_{tx}$ is small ($\mu$
increases with $p_{tx}$) while its impact on the occurrence of
opportunities dominates when $p_{tx}$ is large ($\mu$ decreases with
$p_{tx}$).

\begin{corollary} \label{cor:cond_avg_deg}
Let $p_{tx}$ be the transmission power of secondary users and $\mu$
the conditional average degree defined in
(\ref{eqn:def_cond_avg_deg}), then under the disk signal propagation
and interference model we have\footnote{Here we use the Big O
notation: $f(x)=O(g(x))$ as $x\rightarrow \infty$ if and only if
$\exists~M>0$, $x_0>0$ such that $|f(x)|\leq M|g(x)|$ for all
$x>x_0$.}
\begin{eqnarray*}
\mu = O\left((p_{tx})^{-2/\alpha}\right)\textrm{ as }
p_{tx}\rightarrow \infty,
\end{eqnarray*}
where $\alpha$ is the path-loss exponent.
\end{corollary}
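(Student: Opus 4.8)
The plan is to translate the transmission power $p_{tx}$ into the relevant geometric ranges and then carry out an asymptotic analysis of the closed-form expression for $\mu$ in~(\ref{eqn:cond_avg_deg_sim}). Under the disk propagation and interference model, a secondary link (respectively an interference constraint) is active precisely when a received power proportional to $p_{tx}/d^{\alpha}$ exceeds a fixed threshold, so both the transmission range $r_p$ and the secondary interference range $r_I$ grow like $p_{tx}^{1/\alpha}$; that is, $r_p=\Theta(p_{tx}^{1/\alpha})$ and $r_I=\Theta(p_{tx}^{1/\alpha})$, while $R_p$ and $R_I$, governed by the fixed primary power, stay constant. In particular $r_p^2=\Theta(p_{tx}^{2/\alpha})$ and the ratio $r_p/r_I$ remains bounded. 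Writing $\mu=(\lambda_S\pi r_p^2)\,g(\lambda_{PT},r_p,r_I,R_p,R_I)$ as in~(\ref{eqn:cond_avg_deg_sim}), it then suffices to prove that the conditional bidirectional-opportunity probability obeys $g=O(p_{tx}^{-4/\alpha})=O(r_I^{-4})$.

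I would bound $g$ from above by discarding the primary-transmitter constraint around the neighbor $B$ (which only relaxes the opportunity event) and keeping only the primary-receiver constraint. Conditioned on $A$ seeing an opportunity, the probability that $B$ sees no primary receiver within $r_I$ is a void probability of the primary-receiver process over the crescent $C=D(B,r_I)\setminus D(A,r_I)$, where $D(\cdot,r)$ denotes a disk of radius $r$. The key structural observation is that, although the primary transmitter and receiver processes are correlated through the displacement within $R_p$, this conditional void probability equals $\exp(-\lambda_{PT}|C|)$ exactly once $r_I>R_I+R_p$: every point of $C$ lies at distance greater than $r_I$ from $A$, hence beyond the reach of any transmitter in the disk $D(A,R_I)$ that is constrained by $A$'s opportunity event, so the conditioning decouples. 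This is where the hypothesis $p_{tx}\to\infty$ (equivalently $r_I\to\infty$ against fixed $R_p,R_I$) does the real work.

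Averaging over the uniformly placed neighbor $B$, whose distance $t=|AB|$ has density $2t/r_p^2$ on $[0,r_p]$, gives $g\le\int_0^{r_p}\frac{2t}{r_p^2}\exp(-\lambda_{PT}|C(t)|)\,\mathrm{d}t$, where $|C(t)|=\pi r_I^2-S_I(t,r_I,r_I)$ and $S_I$ is the two-disk overlap area defined in~(\ref{eqn:I_def}). The crescent area grows at least linearly (its slope at the origin is $2r_I$), giving a lower bound $|C(t)|\ge c_0 r_I t$ on $[0,2r_I]$ for a constant $c_0>0$, with the contribution of $t>2r_I$ being exponentially small. Extending the integral to $[0,\infty)$ and using $\int_0^\infty t\,e^{-\lambda_{PT}c_0 r_I t}\,\mathrm{d}t=\Theta(r_I^{-2})$ yields $g=O(r_p^{-2}r_I^{-2})=O(r_I^{-4})$. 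Multiplying by $\lambda_S\pi r_p^2=\Theta(r_I^2)$ then gives $\mu=O(r_I^{-2})=O(p_{tx}^{-2/\alpha})$, as claimed. Carried out with matching upper and lower bounds, the same estimate in fact shows $\mu=\Theta(p_{tx}^{-2/\alpha})$, the dominant contribution coming from neighbors at separation $t=\Theta(r_I^{-1})$.

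The main obstacle is the conditioning on $A$'s opportunity in the presence of the transmitter/receiver correlation; the clean identity $\exp(-\lambda_{PT}|C|)$ must be justified rigorously through the displacement-theorem description of the receiver process as a displaced Poisson process, verifying that the transmitters capable of placing a receiver in $C$ are disjoint from $D(A,R_I)$ once $r_I>R_I+R_p$. The remaining ingredients — the geometric lower bound on the crescent area and the Laplace-type evaluation of the averaging integral — are routine. A secondary point worth stating explicitly is that the leading order is independent of $r_p$ (the factors $r_p^2$ cancel), so the decay rate $p_{tx}^{-2/\alpha}$ is driven entirely by the growth of the secondary interference range $r_I$ against the fixed primary ranges.
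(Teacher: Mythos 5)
Your proof is correct, and its analytic heart coincides with the paper's: both reduce $\mu$ to the Laplace-type integral $\lambda_S \pi r_p^2\int_0^{r_p}\frac{2t}{r_p^2}\exp\left[-\lambda_{PT}\left(\pi r_I^2-S_I(t,r_I,r_I)\right)\right]\mathrm{d}t$ (the average of the void probability over the crescent $D(B,r_I)\setminus D(A,r_I)$), bound the crescent area below by a function linear in $t$ with slope proportional to $r_I$ (the paper via $\arccos(x)\le \pi/2-x$, you via the slope-$2r_I$/concavity argument --- the same estimate), and conclude $\mu=O(r_I^{-2})=O\left(p_{tx}^{-2/\alpha}\right)$ using $r_I\propto p_{tx}^{1/\alpha}$ and $r_p=\beta r_I$. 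Where you genuinely diverge is in how that integral is reached. The paper's Appendix B starts from the exact closed form (\ref{eqn:cond_avg_deg}) of Appendix A and shows that once $r_I\ge R_p+R_I$ the transmitter-related terms cancel identically --- $I(R_I,R_p,r_I)=R_I^2$ as in (\ref{eqn:int_I1_AE}) and the double integral collapses to the area of $\mathcal{S}_{U2}(t,R_I,R_I)$ as in (\ref{eqn:int_S_U2}) --- so in this regime the Tx constraints are automatically implied by the Rx constraints and the reduction is an \emph{equality}. You instead reach the integral as an \emph{upper bound}, by discarding $B$'s transmitter constraint and proving that the conditional void probability over the crescent factors exactly under the same geometric threshold $r_I>R_I+R_p$; your decoupling verification is sound (a transmitter--receiver pair with receiver in the crescent and transmitter in $D(A,R_I)$ would force $r_I<R_I+R_p$, and the receiver-side conditioning is handled by disjointness in the marked Poisson pair process). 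Your route is more self-contained probabilistically --- it needs neither the closed form of Appendix A nor any three-circle overlap geometry --- while the paper's exact identity yields an explicit constant ($\mu\le 2\lambda_S\pi\lambda_{PT}^{-2}r_I^{-2}$) and makes two-sided estimates immediate. One caveat on your closing remark: as literally stated, ``the same estimate'' does not give $\mu=\Theta\left(p_{tx}^{-2/\alpha}\right)$, since dropping $B$'s Tx constraint only bounds $g$ from above; for a matching lower bound you would need the extra (easy) step that this constraint costs at most a constant factor --- its void probability is at least $\exp(-\lambda_{PT}\pi R_I^2)$, independent of $p_{tx}$ --- or else invoke the paper's exact identity.
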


\begin{proof}
We show this corollary by deriving an upper bound on the conditional
average degree $\mu$. Details can be found in Appendix~B.
\end{proof}

For a homogeneous network, the average degree of a user is $\lambda
\pi r_p^2$, which increases with $p_{tx}$ at rate
$\left(p_{tx}\right)^{2/\alpha}$. In sharp contrast, this corollary
tells us that for a heterogeneous network, when $p_{tx}$ is large
enough, the conditional average degree $\mu$ of a secondary user
actually \emph{decreases} with $p_{tx}$ at least as fast as
$\left(p_{tx}\right)^{-2/\alpha}$.

\begin{figure}[htbp]
\centerline{ \scalefig{0.7}\epsfbox{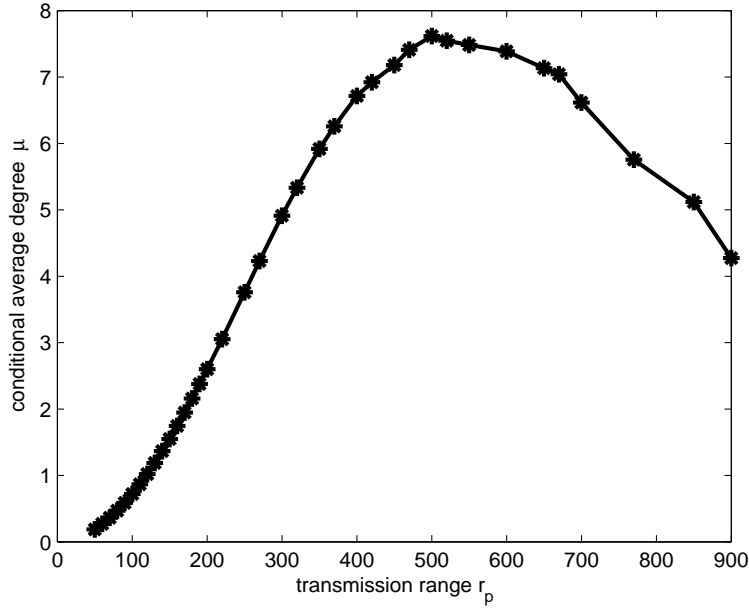}
} \caption{Conditional average degree $\mu$ of secondary users vs
transmission range $r_p$ of secondary users ($r_p\propto
\left(p_{tx}\right)^{\frac{1}{\alpha}}$, where $p_{tx}$ is the
transmission power of secondary users and $\alpha$ is the path-loss
exponent, and simulation parameters are given by $\lambda_{PT} =
2.5$km$^{-2}$, $R_p = 200$m, $R_I = 250$m, $\lambda_S =
25$km$^{-2}$, $r_I = r_p/0.8$).} \label{fig:Con_Avg_Deg}
\end{figure}

\subsection{Impact on the Connectivity Region}
\label{subsec:Imp_CD}

From the scaling relation of the critical density~\citep[Proposition
2.11]{Meester&Roy:Con_Percolation}, we know that in a homogeneous
two-dimensional network,
\begin{eqnarray*}
\lambda_c(r_p)=\lambda_c(1)\left(r_p\right)^{-2}\propto
\left(p_{tx}\right)^{-\frac{2}{\alpha}},
\end{eqnarray*}
where the constant $\lambda_c(1)$ is the critical density for a
homogeneous ad hoc network with a unit transmission range. Thus, if
each secondary user adopts a high transmission power, then
$\lambda_c(r_p)$ reduces. It follows from T2.1 that the critical
density $\lambda^*_S$ of secondary users to achieve connectivity
reduces due to the enhanced proximity.

On the other hand, from the upper bound on the critical density
$\overline{\lambda^*_{PT}}$ of the primary transmitters given in
T2.2, we have that
\begin{eqnarray*}
\overline{\lambda^*_{PT}} =
O\left((p_{tx})^{-2/\alpha}\right)\textrm{ as } p_{tx}\rightarrow
\infty,
\end{eqnarray*}
where we have assumed that $r_p=\beta r_I$ for some $\beta \in
(0,1)$ under the disk signal propagation and interference
model\footnote{Since the minimum received signal power required for
successful reception is, in general, higher than the maximum
allowable received interference power , the transmission range $r_p$
is smaller than the interference range $r_I$, \ie $\beta<1$.}. Thus,
when the transmission power $p_{tx}$ of the secondary network is
large enough, the critical density $\overline{\lambda^*_{PT}}$ of
the primary transmitters decreases with $p_{tx}$ at least as fast as
$\left(p_{tx}\right)^{-2/\alpha}$ due to reduced spectrum
opportunities.

\subsection{Optimal Design of Transmission Power}
\label{subsec:ODTP}

Due to the tension between proximity and opportunity, there does not
exist a transmission power of the secondary users that leads to the
``largest'' connectivity region (largest in the sense that its
connectivity region contains all regions achievable with any finite
transmission power $p_{tx}$ of the secondary users). Thus, the
optimal design of $p_{tx}$ depends on the operating point of the
heterogeneous network. For instance, when a sparse secondary network
is overlaid with a primary network with low traffic load, a large
$p_{tx}$ may be desirable to achieve connectivity. The opposite
holds when a dense secondary network is overlaid with a primary
network with high traffic load.

Focusing on a sufficiently dense secondary network, we address the
design of its transmission power for the maximum tolerance to the
primary traffic. Due to its tractability and achievability indicated
by simulation examples (see Fig.~\ref{fig:Con_Region_Simu2}), the
upper bound on the critical density $\overline{\lambda^*_{PT}}$ of
the primary transmitters given in T2.2 is used as the performance
measure.

\begin{theorem} \label{thm:design_tx_power}
Let $r_I$ and $R_I$ denote the interference range of the secondary
and the primary users, respectively. For a fixed $R_I$, the upper
bound on $\overline{\lambda^*_{PT}}$ given in T2.2 is maximized when
the primary and secondary networks have matching interference
ranges: $r_I=R_I$.
\end{theorem}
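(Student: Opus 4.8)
The plan is to treat the T2.2 bound
\[
B = \frac{\lambda_c(1)}{4\max\{R_I^2, r_I^2\} - r_p^2}
\]
as a function of the secondary transmission power $p_{tx}$ through the two ranges $r_I$ and $r_p$ that it controls, and to show this function attains its maximum over $p_{tx}$ precisely at the power for which $r_I = R_I$. Since $\lambda_c(1)$ and $R_I$ are held fixed, maximizing $B$ is equivalent to minimizing its denominator $D \defeq 4\max\{R_I^2, r_I^2\} - r_p^2$. The first step is to recast $D$ as a function of a single free variable. Under the disk signal propagation and interference model both $r_I$ and $r_p$ scale as $(p_{tx})^{1/\alpha}$, so their ratio does not depend on $p_{tx}$; using the relation $r_p = \beta r_I$ with $\beta \in (0,1)$ noted in Sec.~\ref{subsec:Imp_CD}, I would write $D(r_I) = 4\max\{R_I^2, r_I^2\} - \beta^2 r_I^2$, now a function of $r_I > 0$ alone.

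Next I would minimize $D(r_I)$ by splitting at $r_I = R_I$, the point where the $\max$ switches branches. On $(0, R_I]$ we have $\max\{R_I^2, r_I^2\} = R_I^2$, so $D(r_I) = 4R_I^2 - \beta^2 r_I^2$ is strictly decreasing in $r_I$ and hence minimized at the right endpoint $r_I = R_I$. On $[R_I, \infty)$ we have $\max\{R_I^2, r_I^2\} = r_I^2$, so $D(r_I) = (4 - \beta^2) r_I^2$; because $\beta < 1$ forces $4 - \beta^2 > 0$, this expression is strictly increasing and is minimized at the left endpoint $r_I = R_I$. The two branches agree at $r_I = R_I$, where $D(R_I) = (4 - \beta^2) R_I^2$, so $D$ attains its global minimum there. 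Equivalently $B$ attains its global maximum at $r_I = R_I$, which is exactly the claim.

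The argument is elementary once the denominator has been cast as a single-variable function, so there is no deep obstacle; the one point requiring care is the reduction performed in the first step. I must justify that, as $p_{tx}$ varies, $r_p$ and $r_I$ move together with a \emph{fixed} proportionality constant $\beta$ that is independent of $p_{tx}$ — this is what makes $D$ genuinely a function of $r_I$ alone rather than of two free ranges — and that $\beta < 1$, which is what guarantees the second branch is increasing rather than constant or decreasing. Both facts follow from the disk model and from the footnote observation that the transmission range is strictly smaller than the interference range. I would state this coupling explicitly at the outset so that the subsequent two-case analysis is fully rigorous.
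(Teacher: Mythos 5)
Your proposal is correct and follows essentially the same route as the paper: substituting $r_p=\beta r_I$ with $\beta\in(0,1)$ from the disk model, splitting the bound at $r_I=R_I$ into the two cases $r_I\leq R_I$ and $r_I>R_I$, and checking monotonicity on each branch to locate the optimum at the junction. You merely spell out the elementary single-variable analysis that the paper leaves implicit, including the useful observation that $\beta<1$ guarantees the second branch is strictly increasing.
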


\begin{proof}
Since under the disk signal propagation and interference model,
$r_p=\beta r_I$ for some $\beta \in (0,1)$, the upper bound on
$\overline{\lambda^*_{PT}}$ can be written as
\begin{eqnarray*}
\overline{\lambda^*_{PT}}\leq \left\{
\begin{array}{ll}
\frac{\lambda_c (1)}{4R_I^2-\beta^2 r_I^2} & \textrm{ for $r_I\leq
R_I$,} \\
\frac{\lambda_c (1)}{(4-\beta^2) r_I^2} & \textrm{ for $r_I> R_I$.}
\end{array}
\right.
\end{eqnarray*}
Then the above theorem can be readily shown by finding the maximal
point for the two cases: $r_I\leq R_I$ and $r_I>R_I$.
\end{proof}

An example of the upper bound on $\overline{\lambda^*_{PT}}$ is
plotted as a function of $r_I$ in
Fig.~\ref{fig:Upper_Bound_lambda_P}. Notice that there is a distinct
difference in the slope on the two sides of the optimal point. As a
consequence, the operating region of $r_I<R_I$ is preferred over
that of $r_I>R_I$ when the optimal point $r_I=R_I$ cannot be
achieved. We point out that the desired operating region of
$r_I<R_I$ is the typical case of a secondary network coexisting with
a privileged primary network.

\begin{figure}[htbp]
\centerline{\scalefig{0.65}
\epsfbox{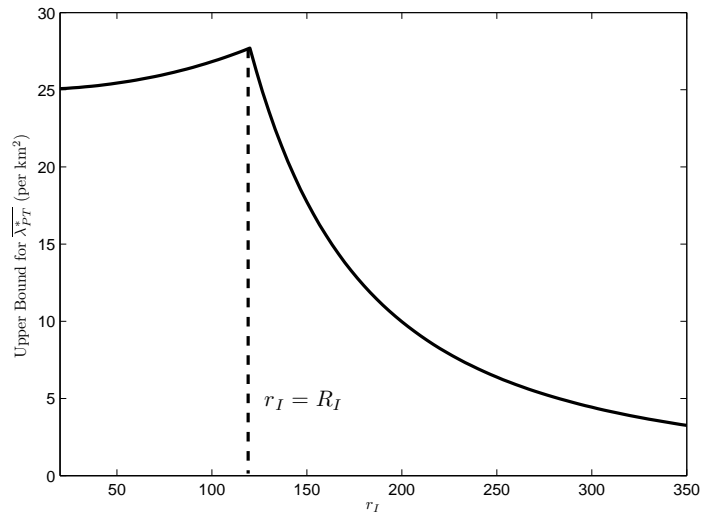}} \caption{An example of
the upper bound on $\overline{\lambda^*_{PT}}$ as a function of
$r_I$ (Parameters are given by $R_I=120$m, $r_p=0.625r_I$).}
\label{fig:Upper_Bound_lambda_P}
\end{figure}


\section{Proofs}
\label{sec:proofs}

In this section, we present proofs of the main results presented in
Sec.~\ref{sec:AC_CR}-\ref{sec:PvsO}. We start with a brief overview
of several basic results in percolation and ergodic theory that will
be used in the proofs.

\subsection{Percolation and Ergodic Theory}
\label{subsec:CP}

\subsubsection{Poisson Boolean Model}
\label{subsubsec:PBM}

Poisson Boolean model is a common model in continuum
percolation~\citep{Meester&Roy:Con_Percolation}. Often referred to
as $\mathcal{B}(X,~\rho,~\lambda)$, the model is specified by two
elements: a Poisson point process $X$ on $\mathbb{R}^d$ with density
$\lambda$ and a radius random variable $\rho$ with a given
distribution. Under this model, each point in $X$ is the center of a
circle in $\mathbb{R}^d$ with a random radius distributed according
to the distribution of $\rho$. Radii associated with different
points are independent, and they are also independent of points in
$X$. Under a Poisson Boolean model, the whole space is partitioned
into two regions: the occupied region, which is the region covered
by at least one ball, and the vacant region, which is the complement
of the occupied region. We define occupied (vacant) components as
those connected components in the occupied (vacant) region.

Assume that nodes in a homogeneous ad hoc network form a Poisson
point process with density $\lambda$ and their transmission range is
$r$. It is easy to see that the connectivity of this network can be
studied through examining the occupied connected components in the
corresponding Poisson Boolean model $\mathcal{B}(X,~r/2,~\lambda)$.

\subsubsection{Sharp Transition in Two Dimensions}
\label{subsubsec:STTD}

Phase transition is a well-known phenomenon in percolation. For the
Poisson Boolean model in two dimensions, this phenomenon appears
more remarkable in the sense that the critical density for the a.s.
existence of infinite occupied components is equal to that for the
a.s. existence of infinite vacant components. Let $\lambda_c(2\rho)$
denote the critical density for the Poisson Boolean model
$\mathcal{B}(X,~\rho,~\lambda)$, then we have that
\begin{itemize}
\item[$\Box$] when $\lambda <\lambda_c(2\rho)$, there is no infinite
occupied component a.s. and there is a unique infinite vacant
component a.s.;
\item[$\Box$] when $\lambda >\lambda_c(2\rho)$, there is a unique infinite
occupied component a.s. and there is no infinite vacant component
a.s.
\end{itemize}
The exact value of $\lambda_c$ is not known. For a deterministic
radius $\rho$, simulation results~\citep{Quintanilla&Etal:00JPA}
indicate that $\lambda_c(2\rho)\approx 0.36\rho^{-2}$, while
rigorous bounds $0.192\rho^{-2}<\lambda_c(2\rho)<0.843\rho^{-2}$ are
provided in~\citep{Meester&Roy:Con_Percolation, Kong&Yeh:07ISIT}.

\subsubsection{Crossing Probabilities}
\label{subsubsec:Crossing_Probs}

A continuous curve in the occupied region is called an occupied
path. An occupied path $\gamma$ is an occupied L-R crossing of the
rectangle $\{0\leq x\leq l_1\}\times \{0\leq y\leq l_2\}$ if
$\gamma$ intersects with both the left and the right boundaries of
the rectangle, \ie $\gamma \cap (\{x=0\}\times \{0\leq y\leq
l_2\})\neq \phi$, $\gamma \cap (\{x=l_1\}\times \{0\leq y\leq
l_2\})\neq \phi$, and the segment between the two intersecting
points is fully contained in the rectangle (see
Fig.~\ref{fig:Crossing}(a)). Similarly, we define an occupied T-B
crossing by requiring that $\gamma$ intersects with the top and
bottom boundaries of the rectangle (see Fig.~\ref{fig:Crossing}(b)).
Let
\begin{eqnarray*}
\sigma((l_1,l_2),~\lambda,~\textrm{L-R})&=&\textrm{Pr}\{\exists\textrm{
an occupied L-R
crossing of $[0,l_1]\times [0,l_2]$}\}, \\
\sigma((l_1,l_2),~\lambda,~\textrm{T-B})&=&\textrm{Pr}\{\exists\textrm{
an occupied T-B crossing of $[0,l_1]\times [0,l_2]$}\},
\end{eqnarray*}
denote the two crossing probabilities in the rectangle
$[0,l_1]\times [0,l_2]$. Then for a Poisson Boolean model
$\mathcal{B}(X,~\rho,~\lambda)$ in two dimensions with a.s. bounded
$\rho$, we have~\citep[Corollary 4.1]{Meester&Roy:Con_Percolation}
that for any $k\geq 1$,
\begin{equation} \label{eqn:crossing_prob}
\underset{n\rightarrow \infty}{\lim}\sigma
((kn,n),~\lambda,~\textrm{L-R})=\left\{
\begin{array}{ll}
1, & \textrm{ if }\lambda >\lambda_c(2\rho); \\
0, & \textrm{ if }\lambda <\lambda_c(2\rho).
\end{array}
\right.
\end{equation}
Due to the symmetry of the Poisson Boolean model, similar results
hold for the T-B crossing probability
$\sigma((n,kn),~\lambda,~\textrm{T-B})$.

\begin{figure}[htbp]
\centerline{
\begin{psfrags}
\psfrag{0}[c]{$0$} \psfrag{l1}[c]{$l_1$} \psfrag{l2}[c]{$l_2$}
\psfrag{gamma}[c]{$\gamma$} \psfrag{x}[c]{$x$} \psfrag{y}[c]{$y$}
\psfrag{(a)}[c]{(a)} \psfrag{(b)}[c]{(b)}
\scalefig{0.51}\epsfbox{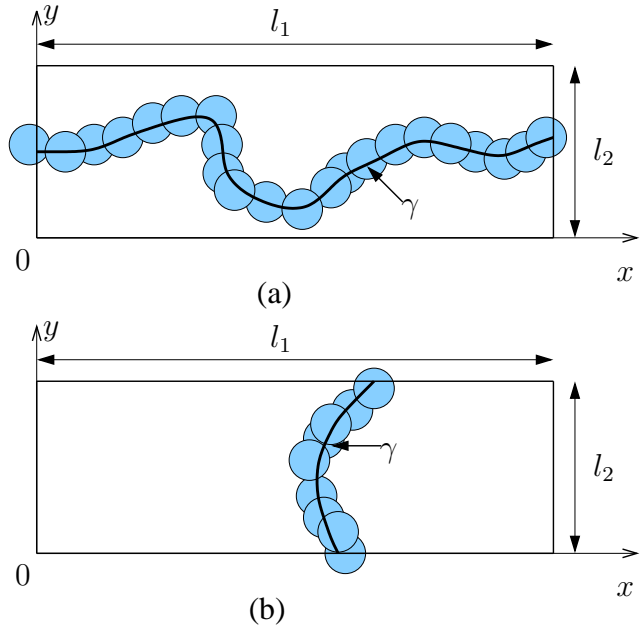}
\end{psfrags}
} \caption{An illustration of the L-R crossing (a) and the T-B
crossing (b) in a rectangle $\{0\leq x\leq l_1\}\times \{0\leq y\leq
l_2\}$.} \label{fig:Crossing}
\end{figure}

\subsubsection{Dependent Edge-Percolation Model}
\label{subsubsec:DEPM}

Let $\mathcal{L}$ be a square lattice on $\mathbb{R}^2$ with side
length $d$ (see Fig.~\ref{fig:Dual_L}). In an edge-percolation
model, every site in $\mathcal{L}$ is occupied but every edge in
$\mathcal{L}$ exists with some probability $p$. An existing edge is
often referred to as an open edge, and an edge that is not open is
called closed. When the states (open/closed) of edges are
correlated, we have a dependent edge percolation model.

\begin{figure}[htbp]
\centerline{
\begin{psfrags}
\psfrag{O}[c]{$\mathbf{O}$} \psfrag{L}[c]{$\mathbf{\mathcal{L}}$}
\psfrag{L+}[c]{$\mathbf{\mathcal{L}^+}$} \psfrag{d}[c]{$d$}
\psfrag{d2}[c]{$\frac{d}{2}$}
\scalefig{0.65}\epsfbox{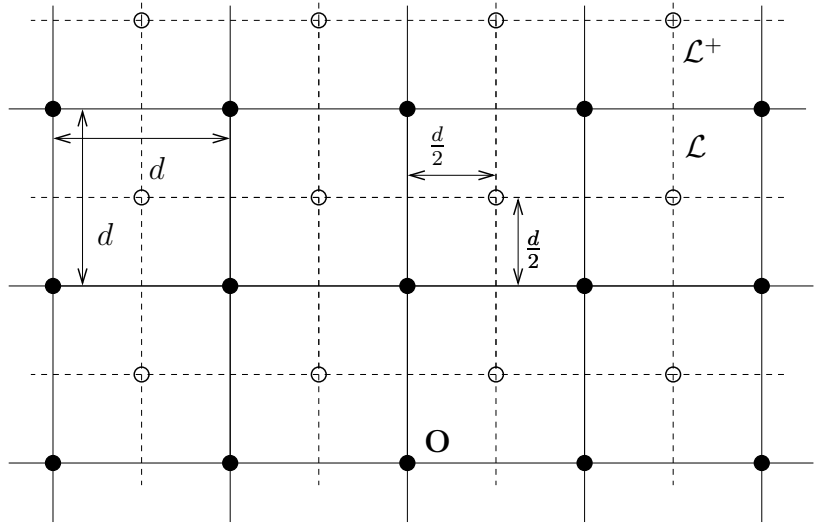}
\end{psfrags}
} \caption{Part of the lattice $\mathcal{L}$ together with its dual
$\mathcal{L}^+$ (solid dots and solid segments are sites and edges
in $\mathcal{L}$, and hollow dots and dashed segments are sites and
edges in $\mathcal{L}^+$). The dual lattice $\mathcal{L}^+$ is the
$\left(\frac{d}{2},~\frac{d}{2}\right)$-shifted version of
$\mathcal{L}$, which is used in the proof of T2.1. Since distinct
edges in $\mathcal{L}$ are crossed by distinct edges in
$\mathcal{L}^+$ and vice versa, there is a one-to-one mapping from
the edges of $\mathcal{L}$ to the edges of $\mathcal{L}^+$. In this
case, we claim an edge in $\mathcal{L}^+$ being open if and only if
its corresponding edge ($\ie$ the edge that it crosses) in
$\mathcal{L}$ is open.} \label{fig:Dual_L}
\end{figure}

Consider a special case of dependent edge-percolation model
$\mathcal{L}$ where the state of an edge $e$ is only correlated with
its six adjacent edges (edges that share a common point with $e$).
We have the following known result.

\begin{fact}~\citep[Proposition 1]{Dousse&Etal:06JAPPP} \label{fact:upper_bound_n_closed} \\
For any collection $\{e_i\}_{i=1}^n$ of $n$ distinct edges in
$\mathcal{L}$, we have
\begin{eqnarray*}
\textrm{Pr}\{(C_1=0)\cap (C_2=0)\cap \cdots \cap (C_n=0)\}\leq
q^{\frac{n}{4}},
\end{eqnarray*}
where $C_i$ is the indicator of $e_i$ being open,, and $q=1-p$ is
the probability of an edge being closed.
\end{fact}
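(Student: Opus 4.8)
The plan is to exploit the finite dependence range to pass from the joint ``all-closed'' event to a sub-collection of edges whose states are mutually \emph{independent}, and then to bound the resulting product. The factor $q^{n/4}$ should come from isolating at least $n/4$ of the $e_i$ that are jointly independent, so the whole argument reduces to (i) a combinatorial selection and (ii) an independence claim.

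First I would recast the dependence structure as a coloring problem. Form the dependency graph $H$ whose vertices are the edges of $\mathcal{L}$, with two vertices joined precisely when the corresponding edges share a lattice point; these are exactly the ``adjacent edges'' of the hypothesis, and one checks that each vertex of $H$ has degree six (three edges at each of the two endpoints), matching the stated range. Thus $H$ is the line graph of $\mathcal{L}=\mathbb{Z}^2$, and a subset of edges that is independent in $H$ is exactly a set of pairwise non-adjacent edges.

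Second I would exhibit an explicit proper $4$-coloring of the edges of $\mathcal{L}$: color the horizontal edge joining $(i,j)$ and $(i+1,j)$ by $(i+j)\bmod 2\in\{0,1\}$, and the vertical edge joining $(i,j)$ and $(i,j+1)$ by $2+\bigl((i+j)\bmod 2\bigr)\in\{2,3\}$. Two horizontal edges sharing an endpoint are consecutive in a row, so their labels $(i+j)\bmod 2$ and $(i+1+j)\bmod 2$ differ; the identical parity argument handles two vertical edges; and a horizontal edge never shares a color with a vertical edge. Hence the four color classes are independent sets in $H$ and they partition all of $\mathcal{L}$'s edges. (Equivalently, since $\mathbb{Z}^2$ is $4$-regular and bipartite, K\"onig's edge-coloring theorem gives chromatic index $4$.) Given the $n$ distinct edges $\{e_i\}$, by the pigeonhole principle at least $m:=\lceil n/4\rceil\ge n/4$ of them lie in a single color class $S$, and the edges of $S$ are pairwise non-adjacent. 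I would then invoke the dependence hypothesis in the form in which it arises for this discretized Poisson model --- each edge's state is determined by the underlying configuration in a bounded region around it, and the regions attached to non-adjacent edges are disjoint --- so that the complete independence of the Poisson process on disjoint sets makes $\{C_i:e_i\in S\}$ jointly independent. Since $\{C_i=0\ \forall i\}\subseteq\{C_i=0\ \forall e_i\in S\}$ and $q\le 1$,
\[
\mathrm{Pr}\Bigl(\bigcap_{i=1}^{n}\{C_i=0\}\Bigr)\le \mathrm{Pr}\Bigl(\bigcap_{e_i\in S}\{C_i=0\}\Bigr)=\prod_{e_i\in S}q=q^{m}\le q^{n/4},
\]
which is the claim.

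The main obstacle is the independence assertion in the last step: bounded \emph{pairwise} correlation does not by itself yield \emph{joint} independence of the selected class. I expect to dispatch it exactly as the underlying construction dictates, by attaching to each edge a bounded determining region so that pairwise non-adjacency forces these regions to be disjoint, after which joint independence follows from the independence property of the Poisson point process on disjoint sets. The coloring and pigeonhole steps, by contrast, are elementary once the line-graph/bipartite structure is recognized, and they are what pin the exponent to $n/4$.
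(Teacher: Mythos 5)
Your proof is correct. Note that the paper itself offers no proof of this statement---it is quoted as a Fact directly from \citep[Proposition~1]{Dousse&Etal:06JAPPP}---and your argument (explicit partition of the lattice edges into four matchings, pigeonhole to extract at least $\lceil n/4\rceil$ pairwise non-adjacent edges, then joint independence of their states) is essentially the proof given in that cited source. Your closing caveat is also the right one: the hypothesis ``correlated only with its six adjacent edges'' is a pairwise statement and cannot by itself yield the bound; the joint independence of non-adjacent edges must come from the underlying construction, namely that each $C_e$ is a function of the Poisson configurations in a bounded determining region around $e$, and the side-length condition $d\geq \max\{4R_I+2r_p,\,4r_I+2r_p\}$ imposed in the paper's proof of T2.1 is precisely what makes these regions disjoint for non-adjacent edges, so that independence of Poisson processes over disjoint sets applies.
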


\subsubsection{Ergodic Theory}
\label{subsubsec:ergodic_theory}

The study object of ergodic theory is the so-called
measure-preserving (m.p.) dynamical system
$(\Omega,~\mathscr{F},~\mu,~T)$, which consists of a set $\Omega$, a
$\sigma$-algebra $\mathscr{F}$ of measurable subsets of $\Omega$, a
nonnegative measure $\mu$ on $(\Omega,~\mathscr{F})$, and an
invertible m.p. transformation $T:~\Omega \rightarrow \Omega$ such
that $\mu(T^{-1}F)=\mu(F)$ for all $F\in \mathscr{F}$. A set $F\in
\mathscr{F}$ is said to be T-invariant if $T^{-1}F=F$. Obviously,
all T-invariant sets in $\mathscr{F}$ form a $\sigma$-algebra.

An m.p. dynamical system $(\Omega,~\mathscr{F},~\mu,~T)$ is said to
be ergodic if the $\sigma$-algebra of T-invariant sets is trivial,
\ie for any invariant set, either it has measure zero or its
complement has measure zero. Another property of the m.p. dynamical
system that implies ergodicity is called mixing: an m.p. dynamical
system $(\Omega,~\mathscr{F},~\mu,~T)$ is said to be mixing if for
all $E,F\in \mathscr{F}$, $\mu (T^n E\cap F)-\mu (E)\mu
(F)\rightarrow 0$ as $n\rightarrow \infty$. For a m.p. dynamical
system which is a product of two m.p. dynamical systems, we have the
following classical result in ergodic theory.

\begin{fact}~\citep[Theorem
2.6.1]{Petersen:Ergodic_Theory} \label{fact:product_ergodicity} \\
The product system of a mixing m.p. dynamical system and an ergodic
m.p. dynamic system is ergodic, that is, for a mixing
$(\Omega,~\mathscr{F},~\mu,~T)$ and an ergodic
$(\Psi,~\mathscr{L},~\nu,~S)$, the product system $(\Omega \times
\Psi,~\mathscr{F} \times \mathscr{L},~\mu \times \nu,~T\times S)$ is
ergodic, where $\mathscr{F} \times \mathscr{L}$ is the
$\sigma$-algebra on $\Omega \times \Psi$ generated by subsets of the
form $F\times L$ $(F\in \mathscr{F}, L\in \mathscr{L})$ and $\mu
\times \nu$ is the corresponding product measure.
\end{fact}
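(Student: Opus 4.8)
The plan is to verify the standard Ces\`aro correlation characterization of ergodicity for the product transformation. Recall that for a probability-preserving system $(\Theta,\mathscr{G},\rho,R)$ (we work, as in the application, with finite normalized measures) ergodicity is equivalent to
\[
\frac{1}{N}\sum_{n=0}^{N-1}\rho(R^{-n}A\cap B)\longrightarrow \rho(A)\rho(B)\qquad\text{for all }A,B\in\mathscr{G},
\]
a consequence of the mean ergodic theorem applied to the indicators $\mathbf{1}_A,\mathbf{1}_B$, whereas strong mixing of $(\Omega,\mathscr{F},\mu,T)$ is the stronger statement $\mu(T^{-n}F\cap F')\to\mu(F)\mu(F')$ (equivalent to the definition in Sec.~\ref{subsubsec:ergodic_theory} since $T$ is invertible and measure preserving). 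First I would reduce the verification to measurable rectangles: finite disjoint unions of sets $F\times L$ form an algebra generating $\mathscr{F}\times\mathscr{L}$, and the correlation functional $(A,B)\mapsto(\mu\times\nu)\bigl((T\times S)^{-n}A\cap B\bigr)$ is additive in each argument, so it suffices to establish the Ces\`aro limit for $A=F_1\times L_1$, $B=F_2\times L_2$ and then pass to general sets by approximation.

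Second, for rectangles I would exploit the product structure of the dynamics. Since $(T\times S)^{-n}(F_1\times L_1)=T^{-n}F_1\times S^{-n}L_1$, the product measure factorizes as
\[
(\mu\times\nu)\bigl((T\times S)^{-n}(F_1\times L_1)\cap(F_2\times L_2)\bigr)=\mu(T^{-n}F_1\cap F_2)\,\nu(S^{-n}L_1\cap L_2).
\]
Writing $\mu(T^{-n}F_1\cap F_2)=\mu(F_1)\mu(F_2)+\varepsilon_n$ with $\varepsilon_n\to 0$ by the mixing of $T$, I split the Ces\`aro average into a main term and an error term:
\[
\frac{1}{N}\sum_{n=0}^{N-1}\mu(T^{-n}F_1\cap F_2)\,\nu(S^{-n}L_1\cap L_2)=\mu(F_1)\mu(F_2)\,\frac{1}{N}\sum_{n=0}^{N-1}\nu(S^{-n}L_1\cap L_2)+\frac{1}{N}\sum_{n=0}^{N-1}\varepsilon_n\,\nu(S^{-n}L_1\cap L_2).
\]
The main term converges to $\mu(F_1)\mu(F_2)\,\nu(L_1)\nu(L_2)=(\mu\times\nu)(F_1\times L_1)\,(\mu\times\nu)(F_2\times L_2)$ by the ergodicity of $S$ applied to the inner average. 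The error term vanishes because $0\le\nu(S^{-n}L_1\cap L_2)\le 1$, so its modulus is at most $\frac{1}{N}\sum_{n=0}^{N-1}|\varepsilon_n|$, the Ces\`aro mean of a null sequence, which tends to $0$. This yields exactly the required correlation limit on rectangles.

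Third, I would extend to arbitrary $A,B\in\mathscr{F}\times\mathscr{L}$ by an $\eta$-approximation in symmetric difference. Choosing finite disjoint unions of rectangles $A',B'$ with $(\mu\times\nu)(A\triangle A')\le\eta$ and $(\mu\times\nu)(B\triangle B')\le\eta$, and using that $T\times S$ is measure preserving, one gets $\bigl|(\mu\times\nu)((T\times S)^{-n}A\cap B)-(\mu\times\nu)((T\times S)^{-n}A'\cap B')\bigr|\le 2\eta$ uniformly in $n$, as well as $\bigl|(\mu\times\nu)(A)(\mu\times\nu)(B)-(\mu\times\nu)(A')(\mu\times\nu)(B')\bigr|\le 2\eta$; hence the Ces\`aro limit transfers from $A',B'$ to $A,B$ up to an error of order $\eta$, and letting $\eta\downarrow 0$ completes the verification, so the product system is ergodic.

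I expect the only delicate point to be the interchange of the mixing approximation with the Ces\`aro average. The crux is recognizing that mixing of the \emph{first} factor supplies a null correction $\varepsilon_n$ whose Ces\`aro mean vanishes regardless of the bounded weights $\nu(S^{-n}L_1\cap L_2)$, while ergodicity of the \emph{second} factor handles the surviving main term. This asymmetric use of the two hypotheses --- strong mixing on one side, mere ergodicity on the other --- is precisely what forces the product to be ergodic and is the heart of the argument; the passage from rectangles to general measurable sets is routine once the measure-preserving property is used.
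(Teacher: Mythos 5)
The paper never proves this Fact---it is imported verbatim from the cited source \citep[Theorem 2.6.1]{Petersen:Ergodic_Theory}---so there is no in-paper argument to compare against; your blind proof is correct and is essentially the classical textbook argument: factorize the correlation on measurable rectangles, use mixing of $T$ to peel off a Ces\`aro-null error $\varepsilon_n$ against the bounded weights $\nu(S^{-n}L_1\cap L_2)$, apply the Ces\`aro characterization of ergodicity of $S$ to the main term, and pass to general sets by approximation with finite disjoint unions of rectangles, using the measure-preserving property for uniformity in $n$. One remark worth recording: your argument uses only $\frac{1}{N}\sum_{n=0}^{N-1}|\varepsilon_n|\rightarrow 0$, which is precisely \emph{weak} mixing of the first factor, so you have in fact established the sharper form of the theorem under a weaker hypothesis than the strong mixing assumed in the statement.
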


The concepts of ergodicity and mixing can also be defined for a
random model under a probability space $(\Omega,\mathscr{F},\mu)$,
where the m.p. transformation $T$ is replaced by a transformation
group $\{S_x:~x\in \mathbb{R}^d \textrm{ or }\mathbb{Z}^d\}$ indexed
by $\mathbb{R}^d$ or $\mathbb{Z}^d$. For a point process model, the
transformation $S_x$ is usually to shift the realization $\omega \in
\Omega$ by $x$. A random model under a probability space
$(\Omega,\mathscr{F},\mu)$ is said to be ergodic if there exists a
transformation group $\{S_x:~x\in \mathbb{R}^d \textrm{ or
}\mathbb{Z}^d\}$ that acts ergodically on
$(\Omega,\mathscr{F},\mu)$. A transformation group $\{S_x:~x\in
\mathbb{R}^d \textrm{ or }\mathbb{Z}^d\}$ is said to act ergodically
if the $\sigma$-algebra of events invariant under the whole group is
trivial, \ie any invariant event has measure either zero or one.
Moreover, a random model under a probability space
$(\Omega,\mathscr{F},\mu)$ is said to be mixing if there exists a
transformation group $\{S_x:~x\in \mathbb{R}^d \textrm{ or
}\mathbb{Z}^d\}$ such that for all $E,F\in \mathscr{F}$, we have
$\mu (S_x E\cap F)-\mu (E)\mu (F)\rightarrow 0$ as $|x|\rightarrow
\infty$. One direct consequence of an ergodic random model is
presented as below.

\begin{fact} \label{fact:ergodicity}
For an ergodic random model $(\Omega,\mathscr{F},\mu)$, if an event
$E\in \mathscr{F}$ invariant under the whole transformation group
$\{S_x:~x\in \mathbb{R}^d \textrm{ or }\mathbb{Z}^d\}$ occurs wpp.,
\ie $\mu (E)>0$, then it occurs a.s., \ie $\mu (E)=1$.
\end{fact}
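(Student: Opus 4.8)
The plan is to read off the claim directly from the definition of an ergodic random model given in Sec.~\ref{subsubsec:ergodic_theory}. Recall that $(\Omega,\mathscr{F},\mu)$ being ergodic means that there exists a transformation group $\{S_x:~x\in \mathbb{R}^d \textrm{ or }\mathbb{Z}^d\}$ acting ergodically on it, and that acting ergodically is \emph{defined} by the triviality of the $\sigma$-algebra of events invariant under the whole group: every event invariant under the entire group has measure either zero or one. The proof is thus a zero-one law obtained by unwinding this definition against the two hypotheses on $E$.

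First I would observe that, by hypothesis, the event $E\in\mathscr{F}$ is invariant under the whole transformation group $\{S_x\}$, and hence $E$ belongs to the trivial invariant $\sigma$-algebra furnished by ergodicity. Consequently $\mu(E)\in\{0,1\}$. Since the hypothesis also supplies $\mu(E)>0$, the value $\mu(E)=0$ is excluded, leaving $\mu(E)=1$; this is precisely the almost-sure occurrence to be shown. No further machinery (such as the mixing criterion or Fact~\ref{fact:product_ergodicity}) is needed, as those serve only to \emph{establish} ergodicity, which is here assumed.

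Because the statement is essentially a direct translation of the definition of ergodicity, I do not expect any substantive obstacle. The only point requiring attention is to confirm that the invariance assumed in the hypothesis is invariance under the \emph{whole} group $\{S_x\}$ rather than under an individual shift $S_x$, so that $E$ genuinely lies in the trivial $\sigma$-algebra appearing in the definition; with that distinction verified, the deduction that a positive-measure invariant event must have full measure is immediate.
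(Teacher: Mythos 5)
Your proof is correct and matches the paper exactly: the paper offers no separate argument for this Fact, presenting it as a direct consequence of the definition of an ergodic random model (invariance under the whole group places $E$ in the trivial invariant $\sigma$-algebra, so $\mu(E)\in\{0,1\}$, and $\mu(E)>0$ forces $\mu(E)=1$). Your closing remark about distinguishing invariance under the whole group from invariance under a single shift is the right point of care, and it is consistent with how the paper applies the Fact.
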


\subsection{Proof of Theorem 1}
\label{subsec:proof_thm1}

\subsubsection{Proof of T1.1}
\label{subsubsec:proof_thm11}

To prove T1.1, it suffices to show that for any two given points
$(\lambda_{S1},\lambda_{PT1})$ and $(\lambda_{S2},\lambda_{PT2})$ in
$\Cc$, we can find a path in $\Cc$ that connects these two points.
In particular, the path we constructed is given by a horizontal
segment and a vertical segment as shown in
Fig.~\ref{fig:Proof_Contiguous}, where we assume, without loss of
generality, that $\lambda_{S1} \leq \lambda_{S2}$.

\begin{figure}[htbp]
\centerline{
\begin{psfrags}
\psfrag{P1}[c]{$(\lambda_{S1},\lambda_{PT1})$}
\psfrag{P2}[c]{$(\lambda_{S2},\lambda_{PT2})$}
\psfrag{P}[c]{$(\lambda_{S2},\lambda_{PT1})$}
\psfrag{PT}[c]{$\lambda_{PT}$} \psfrag{S}[c]{$\lambda_S$}
\psfrag{(a)}[c]{(a) $\lambda_{PT1}\leq \lambda_{PT2}$}
\psfrag{(b)}[c]{(b) $\lambda_{PT1}> \lambda_{PT2}$}
\scalefig{0.7}\epsfbox{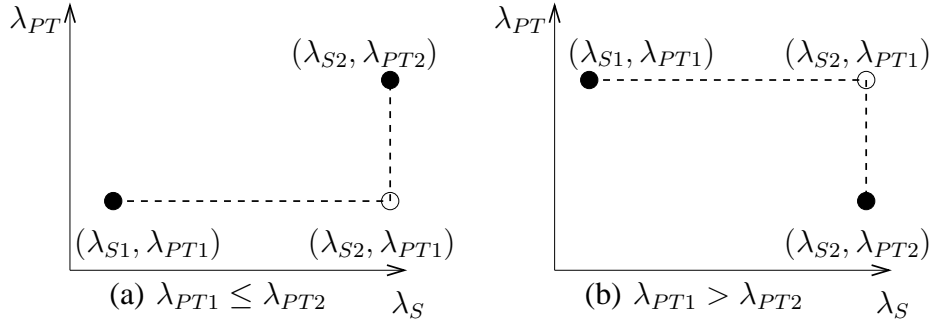}
\end{psfrags}} \caption{The continuous path connecting the two points
$(\lambda_{S1},\lambda_{PT1})$ and $(\lambda_{S2},\lambda_{PT2})$ in
the connectivity region $\Cc$.} \label{fig:Proof_Contiguous}
\end{figure}

Consider case (a) in Fig.~\ref{fig:Proof_Contiguous} where
$\lambda_{PT1}\leq \lambda_{PT2}$. Case (b) can be proven similarly.
First we show every point $(\lambda_S,\lambda_{PT1})$
$(\lambda_{S1}\leq \lambda_S \leq \lambda_{S2})$ on the horizontal
segment belongs to $\Cc$. Let $\lambda'=\lambda_S-\lambda_{S1}$. A
Poisson point process $X$ with density $\lambda_S$ is statistically
equivalent to the superposition of a Poisson point process $X_1$
with density $\lambda_{S1}$ and an independent Poisson point process
$X'$ with density $\lambda'$. It follows that any realization of the
heterogeneous network with densities $\lambda_S$ and $\lambda_{PT1}$
can be generated by adding more secondary nodes to a realization of
the heterogeneous network with densities $\lambda_{S1}$ and
$\lambda_{PT1}$. Thus, the existence of an infinite connected
component in $\Gc (\lambda_{S1},\lambda_{PT1})$ implies the
existence of an infinite connected component in $\Gc
(\lambda_{S},\lambda_{PT1})$. We thus have that
$(\lambda_S,\lambda_{PT1})\in \Cc$ for $(\lambda_{S1}\leq \lambda_S
\leq \lambda_{S2})$.

Now we know that the two end points $(\lambda_{S2},\lambda_{PT1})$
and $(\lambda_{S2},\lambda_{PT2})$ of the vertical segment belong to
$\Cc$. For a point $(\lambda_{S2},\lambda_{PT})$ $(\lambda_{PT1}
\leq \lambda_{PT}\leq \lambda_{PT2})$ on the vertical segment, let
$\lambda'=\lambda_{PT2}-\lambda_{PT}$, then any realization of the
heterogeneous network with densities $\lambda_{S2}$ and
$\lambda_{PT}$ can be obtained by independently removing each
primary transmitter-receiver pair with probability
$\lambda'/\lambda_{PT2}$ from a realization of the heterogeneous
network with densities $\lambda_{S2}$ and $\lambda_{PT2}$. It
follows from the definition of communication link in the secondary
network (see Sec.~\ref{subsubsec:CL}) that the existence of an
infinite connected component in $\Gc (\lambda_{S2},\lambda_{PT2})$
implies the existence of an infinite connected component in $\Gc
(\lambda_S,\lambda_{PT})$. Thus, we have
$(\lambda_{S2},\lambda_{PT})\in \Cc$ $(\lambda_{PT1} \leq
\lambda_{PT}\leq \lambda_{PT2})$.

\subsubsection{Proof of Theorem 1.2}
\label{subsubsec:proof_thm12}

Suppose that $(\lambda_S,\lambda_{PT})\in \Cc$ $(\lambda_{PT}>0)$,
then by using the coupling argument for showing that the vertical
segment belongs to $\Cc$ in the above proof of T1.1, we conclude
that $(\lambda_S,0)\in \Cc$, \ie the $\lambda_S$-axis is the lower
boundary of $\Cc$.

Suppose that $\lambda_{S2}>\lambda_{S1}>0$. In order to prove the
monotonicity of $\lambda^*_{PT}(\lambda_S)$ with $\lambda_S$ it
suffices to show that $\forall~\lambda_{PT}\geq 0$, if
$(\lambda_{S1},\lambda_{PT})\in \Cc$ then
$(\lambda_{S2},\lambda_{PT})\in \Cc$. This is a direct consequence
of the coupling argument for showing that the horizontal segment
belongs to $\Cc$ in the above proof of T1.1.

\subsubsection{Proof of Theorem 1.3}
\label{subsubsec:proof_thm13}

We first establish the ergodicity of the heterogeneous network
model.

\begin{lemma} \label{lemma:ergodicity}
The heterogeneous network model is ergodic.
\end{lemma}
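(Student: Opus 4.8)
The plan is to establish ergodicity by exhibiting a translation group $\{S_x : x \in \mathbb{R}^2\}$, where $S_x$ shifts a realization by $x$, that acts ergodically on the probability space driving the heterogeneous model, and to do so by decomposing the model into a product of two pieces and invoking Fact~\ref{fact:product_ergodicity}. The natural decomposition exploits that the secondary point process is, by assumption, independent of the two primary processes: the underlying probability space factorizes as $\Omega = \Omega_P \times \Omega_S$, where $\Omega_P$ carries the (correlated) primary transmitter/receiver configuration and $\Omega_S$ carries the secondary configuration, with the common shift $S_x = S_x^P \times S_x^S$. It then suffices to show that the primary component is mixing and the secondary component is ergodic; Fact~\ref{fact:product_ergodicity}, in its random-model form for the shift group, then yields ergodicity of the product, that is, of the full model.

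First I would treat the secondary factor. Since the secondary users form a homogeneous Poisson point process, I would show it is mixing under $\{S_x^S\}$, which in particular implies ergodicity. The argument rests on the independent-scattering property of the Poisson process: for two cylinder events $E, F$ each determined by the point configuration in some bounded Borel set, once $|x|$ is large enough the support of $S_x^S E$ and that of $F$ lie in disjoint regions, so independence gives $\mu(S_x^S E \cap F) = \mu(S_x^S E)\,\mu(F) = \mu(E)\,\mu(F)$ by translation invariance, whence $\mu(S_x^S E \cap F) - \mu(E)\mu(F) \to 0$. A standard approximation of arbitrary events in $\mathscr{F}$ by cylinder events extends the mixing relation to all $E, F$.

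Next I would establish mixing of the primary factor, and this is where the main difficulty lies: the primary transmitters and receivers are \emph{correlated}, so $\Omega_P$ is not itself a product of two independent Poisson spaces. The resolution is to view the primary network as a single \emph{marked} Poisson point process --- each transmitter carries, as an i.i.d.\ mark independent of everything else, the uniform displacement of its receiver within distance $R_p$, which is exactly the structure behind the displacement theorem cited in Sec.~\ref{subsec:PN}. For such a marked Poisson process the same disjoint-region independence holds, with one modification: because a receiver can lie up to distance $R_p$ from its transmitter, the configuration of primary nodes inside a bounded set $A$ depends on transmitters in the $R_p$-neighborhood of $A$. Enlarging the relevant regions by $R_p$ absorbs this bounded coupling range, so for $|x|$ sufficiently large the enlarged supports of $S_x^P E$ and $F$ are still disjoint, and independence again gives $\mu(S_x^P E \cap F) \to \mu(E)\mu(F)$. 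Thus the primary component is mixing.

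Finally, having shown the primary component mixing and the secondary component ergodic, I would apply Fact~\ref{fact:product_ergodicity} to the product system with shift $S_x^P \times S_x^S$ to conclude that the heterogeneous network model is ergodic, as claimed. I expect the only genuine obstacle to be the transmitter--receiver correlation in the primary network; once it is handled by the marked-point-process viewpoint and the observation that the coupling range $R_p$ is bounded, the remaining steps are routine consequences of the independent-scattering property of Poisson processes.
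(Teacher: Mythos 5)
Your proposal is correct, and its skeleton coincides with the paper's: factor the sample space as $\Omega_P \times \Omega_S$, use the independence of the primary and secondary processes, prove the secondary Poisson factor mixing by the standard disjoint-region argument, and conclude with Fact~\ref{fact:product_ergodicity} (plus the routine final step, which the paper makes explicit, that ergodicity under the single shift $\hat{T}_{e_1}$ implies the whole shift group acts ergodically, since group-invariant events form a sub-$\sigma$-algebra of the $\hat{T}_{e_1}$-invariant ones). The genuine difference lies in the primary factor, i.e., in how the transmitter--receiver correlation is handled. The paper realizes the receiver displacements on an auxiliary product space $\prod_{n\in\mathbb{N}}\prod_{z\in\mathbb{Z}^2} C_{R_p}$ indexed by binary cubes --- the device Meester and Roy use for random radii in Boolean models --- precisely so that the shift acts measurably and measure-preservingly on a concrete probability space, and then adapts their Boolean-model argument to show the primary factor is \emph{ergodic}. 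You instead view transmitter--receiver pairs as a single marked Poisson process and show the primary factor is \emph{mixing}, via independence of restrictions to disjoint regions after enlarging those regions by the coupling range $R_p$; the enlargement correctly absorbs the correlation. Your route is shorter and proves more (mixing of both factors, where the paper only needs and only claims ergodicity of the primary one). What the paper's dyadic-cube bookkeeping buys, in exchange, is an explicit resolution of the one point your argument delegates to the general theory of marked point processes: i.i.d.\ marks cannot be attached to Poisson points through an enumeration, because an enumeration is not translation-covariant, so some covariant indexing is needed before the shift group on $\Omega_P$ is even well defined --- by cubes in the paper, by identifying the marked process with a Poisson process on $\mathbb{R}^2 \times C_{R_p}$ in your version. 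Both resolutions are sound, so your proof goes through.
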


\begin{proof}[Proof of Lemma~\ref{lemma:ergodicity}]
The proof of this lemma is inspired by the proof of the ergodicity
of Poisson Boolean model~\citep[Proposition
2.8]{Meester&Roy:Con_Percolation}. The difficulty here is that for
the heterogeneous network model, we have two correlated Poisson
point processes: the primary transmitters and the primary receivers.
The definition of the shift transformation for the primary network
model is thus more complicated than the standard Poisson Boolean
model with a deterministic radius $\rho$. To prove
Lemma~\ref{lemma:ergodicity}, we first show the ergodicity of the
primary network model, and then we show the mixing property of the
secondary network model. Since the primary network model is
independent of the secondary network model, it follows from
Fact~\ref{fact:product_ergodicity} that the heterogeneous network
model is ergodic.

Let $\mathscr{B}^d$ denote the Borel $\sigma$-algebra in
$\mathbb{R}^d$, and $N$ the set of all simple counting
measures\footnote{A simple counting measure on $\mathscr{B}^d$ is an
integer-valued measure for which the measures of bounded Borel sets
are all finite and the measure of a point is at most 1.} on
$\mathscr{B}^d$. Construct a $\sigma$-algebra $\mathscr{N}$ for $N$
generated by sets of the form
\begin{eqnarray*}
\{n\in N:~n(A)=k\},
\end{eqnarray*}
where $A\in \mathscr{B}^d$ and $k$ is an integer. A point process
$X$ can now be defined as a measurable mapping from a probability
space $(\Omega,~\mathscr{F},~P)$ into
$(N,~\mathscr{N})$~\citep[Chapter
7]{Daley&Jones:Intro_Point_Processes}. The measure $\mu$ on
$\mathscr{N}$ induced by $X$ is defined as $\mu (G)=P(X^{-1}(G))$,
for all $G\in \mathscr{N}$.

In order to define the shift transformation on $\Omega$, it is
convenient to identify $(\Omega,~\mathscr{F})$ with
$(N,~\mathscr{N})$. Let $\omega(A)$ denote the number of points in
$A\in \mathscr{B}^d$, $\forall ~\omega \in \Omega$, and $T_x$ be the
shift according to a vector $x\in \mathbb{R}^d$. Then $T_x$ induces
a shift transformation $S_x:~\Omega \rightarrow \Omega$ through the
equation for every $A\in \mathscr{B}^d$,
\begin{eqnarray} \label{eqn:def_S_x}
(S_x \omega)(A)=\omega (T_x^{-1} A).
\end{eqnarray}

Let $(\Omega_{PT},~\mathscr{F}_{PT},~P_{PT})$ be the probability
space of the Poisson point process $X_{PT}$ for the primary
transmitters with density $\lambda_{PT}$. Let $\Omega_{PR}$ be the
product space $\prod_{n\in \mathbb{N}}\prod_{z\in
\mathbb{Z}^2}C_{R_p}$ for the primary receivers, where
$C_{R_p}=\{(x,y):~x^2+y^2\leq R_p\}$. Then we equip $\Omega_{PR}$
with the usual product $\sigma$-algebra and product measure $P_{PR}$
with all marginal probability measure being $\mu_U$, where $\mu_U$
is a uniform probability measure on $C_{R_p}$. Finally, we set
$\Omega_P=\Omega_{PT}\times \Omega_{PR}$ and equip $\Omega_P$ with
the product measure $P_P=P_{PT}\times P_{PR}$ and the usual product
$\sigma$-algebra. It follows that the primary network model is a
measurable mapping from $\Omega_P$ into $N_{PT}\times \Omega_{PR}$
defined by $(\omega_{PT},~\omega_{PR})\rightarrow
(X_{PT}(\omega_{PT}),~\omega_{PR})$, where $N_{PT}$ is specified in
the definition of the point process.

The positions of the primary transmitters corresponding to
$(\omega_{PT},~\omega_{PR})\in \Omega_{PT}\times \Omega_{PR}$ are
easily known from $\omega_{PT}$. For the primary receivers, the
positions are obtained as follows. Consider binary cubes
\begin{eqnarray*}
K(n,z):=\prod_{i=1}^{2}(z_i 2^{-n},(z_i+1)2^{-n}]~~\textrm{ for all
}n\in \mathbb{N}\textrm{ and }z\in \mathbb{Z}^2.
\end{eqnarray*}
For each primary transmitter $x_{PT}$, there exists a unique
smallest integer $n_0=n_0(x_{PT})$ such that it is contained in a
binary cube $K(n_0,z(n_0,x_{PT}))$ which contains no other primary
transmitters. The relative position of $x_{PT}$'s receiver with
respect to $x_{PT}$ is then given by
$\omega_{PR}(n_0,z(n_0,x_{PT}))$.

Let $e_1$, $e_2$ denote the unit vectors in $\mathbb{R}^2$, then the
translation $T_{e_i}:~\mathbb{R}^2 \rightarrow \mathbb{R}^2$
$(i=1,2)$ defined by $x\rightarrow x+e_i$ induces a shift
transformation $U_{e_i}$ on $\Omega_{PR}$ through the equation
\begin{eqnarray*}
(U_{e_i}\omega_{PR})(n,z)=\omega_{PR} (n,z-2^n e_i), \textrm{ for
$i=1,2$}.
\end{eqnarray*}
Hence $T_{e_i}$ also induces a shift transformation
$\tilde{T}_{e_i}$ on $\Omega_P=\Omega_{PT}\times \Omega_{PR}$ as
follows:
\begin{eqnarray*}
\tilde{T}_{e_i}(\omega_P)=(S_{e_i}\omega_{PT},~U_{e_i}\omega_{PR}),
\textrm{ for $i=1,2$},
\end{eqnarray*}
where $S_{e_i}$ is defined in~(\ref{eqn:def_S_x}). By using
techniques similar to the proof of Boolean models~\citep[Proposition
2.8]{Meester&Roy:Con_Percolation}, we have that the m.p. dynamical
system $(\Omega_P,~\mathscr{F}_P,~P_P,~\tilde{T}_{e_1})$ is ergodic.

Since the transmission range $r_p$ of secondary users is fixed, the
probability space of the secondary network model is the probability
space $(\Omega_S,~\mathscr{F}_S,~P_S)$ for the Poisson point process
$X_S$ of secondary users with density $\lambda_S$. It follows from
the proof of Poisson point processes~\citep[Proposition
2.6]{Meester&Roy:Con_Percolation} that the m.p. dynamical system
$(\Omega_S,~\mathscr{F}_S,~P_S,~S_{e_1})$ is mixing.

Since the primary network model is independent of the secondary
network model, the sample space of the heterogeneous network model
$\Omega$ can be written as the product of $\Omega_P$ and $\Omega_S$,
$\ie$ $\Omega=\Omega_P\times \Omega_S$. We equip $\Omega$ with
product measure $P=P_P \times P_S$ and the usual product
$\sigma$-algebra. Similarly, the translation $T_{e_i}$ ($i=1,2$)
induces a transformation $\hat{T}_{e_i}$ on $\Omega=\Omega_P \times
\Omega_S$, which is given by
\begin{eqnarray*}
\hat{T}_{e_i}(\omega)=(\tilde{T}_{e_i}\omega_P,S_{e_i}\omega_S).
\end{eqnarray*}
Then it follows from Fact~\ref{fact:product_ergodicity} that the
product m.p. dynamical system
$(\Omega,~\mathscr{F},~P,~\hat{T}_{e_1})$ is ergodic. Since the
$\sigma$-algebra invariant under the transformation group
$\{\hat{T}_{z}:~z\in \mathbb{Z}^2\}$ is a subset of the
$\sigma$-algebra invariant under the transformation $\hat{T}_{e_1}$,
we conclude that $\{\hat{T}_{z}:~z\in \mathbb{Z}^2\}$ acts
ergodically, \ie the heterogeneous network model is ergodic.
\end{proof}

Based on Lemma~\ref{lemma:ergodicity}, we have the following lemma.

\begin{lemma} \label{lemma:uniqueness}
The number of infinite connected component in $\Gc
(\lambda_S,\lambda_{PT})$ is a constant a.s., and it can only take
value from $\{0,~1,~\infty\}$.
\end{lemma}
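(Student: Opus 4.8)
The plan is to combine the ergodicity just established in Lemma~\ref{lemma:ergodicity} with a local modification (finite-energy) argument, in the classical style used to prove $\{0,1,\infty\}$ trichotomies. Let $N$ denote the number of infinite connected components in $\Gc(\lambda_S,\lambda_{PT})$. The first observation is that $N$ is invariant under each shift $\hat{T}_z$, $z\in\mathbb{Z}^2$: translating the three driving point processes by an integer vector carries infinite components to infinite components and hence leaves their number unchanged. Consequently, for every $j\in\{0,1,2,\dots\}\cup\{\infty\}$ the event $\{N=j\}$ is invariant under the whole transformation group. Since the heterogeneous network model is ergodic by Lemma~\ref{lemma:ergodicity}, Fact~\ref{fact:ergodicity} forces each such event to have probability $0$ or $1$, so $N$ equals a deterministic constant $k$ a.s. This disposes of the ``constant a.s.'' part of the claim; it remains to exclude every finite value $k\ge 2$.

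To rule out $2\le k<\infty$, I would argue by contradiction. Suppose $N=k$ a.s. with $k\ge 2$. Because there are only finitely many infinite components a.s., the event $A_R$ that \emph{every} infinite component has a secondary user inside the box $[-R,R]^2$ satisfies $\mathrm{Pr}(A_R)\to 1$ as $R\to\infty$; fix $R$ with $\mathrm{Pr}(A_R)>0$. On $A_R$ all $k$ infinite components reach into a common bounded region, and I would then \emph{merge} them by a modification confined to a bounded box $\Lambda\supset[-R,R]^2$: delete every primary transmitter (and therefore its receiver) from a suitable enlargement of $\Lambda$ so that spectrum opportunities are guaranteed throughout $\Lambda$, and insert a finite, densely packed set of secondary users in $\Lambda$ that links the stubs of all $k$ components lying in $[-R,R]^2$. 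Adding secondary users and removing primary transmitter--receiver pairs only creates communication links (this is exactly the coupling monotonicity used in the proof of T1.1), so the modification can only merge components, never split them, and it alters the configuration on a bounded set only, hence creates no new infinite component. After the modification the $k$ infinite components have become a single one, so $N=1$.

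The point that turns this into a contradiction is that the modification has positive probability. Here I would invoke the finite-energy (insertion/deletion tolerance) property of the underlying Poisson processes: conditioned on the configuration outside $\Lambda$, the probability that an enlargement $\Lambda'$ of $\Lambda$ is free of primary transmitters is $e^{-\lambda_{PT}|\Lambda'|}>0$, and, independently, the probability that $\Lambda$ contains the required connecting pattern of secondary users is positive. Combining with $\mathrm{Pr}(A_R)>0$ yields $\mathrm{Pr}(N=1)>0$, contradicting $N=k\ge 2$ a.s. Hence $k\in\{0,1,\infty\}$, which is the assertion of the lemma.

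I expect the main obstacle to be making the local modification rigorous in this heterogeneous model, where connectivity is driven by three processes and, crucially, the primary transmitters and receivers are correlated (each receiver is placed uniformly within $R_p$ of its transmitter). Because receivers of transmitters sitting just outside $\Lambda$ can still intrude into $\Lambda$, the transmitter-free region must be taken to be the enlargement $\Lambda'$ of $\Lambda$ by $R_p+\max\{R_I,r_I\}$, so that clearing transmitters there simultaneously removes the offending receivers and guarantees a bidirectional opportunity for every pair of inserted secondary users within range $r_p$. Verifying that this enlarged clearing together with the inserted bridge indeed joins all $k$ component stubs, while respecting the transmitter--receiver coupling, is the delicate step; the ergodicity and monotonicity inputs are then routine.
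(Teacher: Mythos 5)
Your skeleton coincides with the paper's proof: ergodicity (Lemma~\ref{lemma:ergodicity} together with Fact~\ref{fact:ergodicity}) makes the number $N$ of infinite components an a.s.\ constant, and a Meester--Roy--style merging argument is then used to exclude $2\le k<\infty$; you also correctly isolate the transmitter--receiver correlation and the size of the region that must be cleared of primary users. However, the step you describe as ``the point that turns this into a contradiction'' has a genuine gap.

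The event $A_R$ is defined through the infinite components of the \emph{full} configuration, so it is not measurable with respect to the configuration outside $\Lambda$ (or $\Lambda'$): whether a component is infinite, and whether it reaches into $[-R,R]^2$, depends on the secondary and primary points inside the very region you intend to resample. Consequently you may not ``combine'' $\textrm{Pr}(A_R)>0$ with the conditional positive probability of the clearing/insertion events; the implicit factorization $\textrm{Pr}(A_R\cap M)\geq \textrm{Pr}(A_R)\cdot\textrm{Pr}(M\,|\,\textrm{outside})$ is exactly what outside-measurability would be needed for. The failure is not cosmetic: under the standing assumption $N=k\ge 2$ a.s., the joint event $A_R\cap M$ is a \emph{null} event (on it all infinite components are glued through $\Lambda$, forcing $N\le 1$), so no correct argument can show it has positive probability; the contradiction must be routed through an event that is insensitive to the resampled region. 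This is precisely the device in the paper's proof: remove \emph{all} secondary users inside the box $B$ and work with $\mathcal{G}[B^c]$, defining $E(B,\epsilon)$ as the event that every infinite component $U$ of $\mathcal{G}[B^c]$ satisfies $d(U,B)\le r_p-\epsilon$. This event does not involve the secondary configuration inside $B$; it inherits positive probability from your $A_R$, because each infinite component of the full graph with a node inside $B$ leaves behind, after deletion of the inside nodes, at least one infinite component of $\mathcal{G}[B^c]$ within one hop (hence within $r_p$, and with an $\epsilon$-adjustment within $r_p-\epsilon$) of $B$; and it is independent of the insertion event $E(a,\eta)$ that every boundary square in $\mathcal{S}_a$ contains a secondary user. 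On $E(B,\epsilon)\cap E(a,\eta)\cap E(\bar{B})$ every infinite component of the full graph contains an infinite component of $\mathcal{G}[B^c]$ and is therefore glued to the ring of users in $\mathcal{S}_a$, giving $N=1$. Note that even after this repair one residual dependence remains --- $E(B,\epsilon)$ and the primary-clearing event $E(\bar{B})$ both involve the primary process --- and the paper disposes of it not by independence but by the inequality $\textrm{Pr}(E(\bar{B})\,|\,E(B,\epsilon))\ge \textrm{Pr}(E(\bar{B}))>0$. Your write-up needs this outside-measurable reformulation of $A_R$ before the finite-energy step can legitimately be invoked.
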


\begin{proof}[Proof of Lemma~\ref{lemma:uniqueness}]
Let $K$ denote the (random) number of infinite connected components
in $\mathcal{G}(\lambda_S,\lambda_{PT})$, then since for all $k\geq
0$, the event $\{K=k\}$ is invariant under the group of shift
transformations, it follows from Lemma~\ref{lemma:ergodicity} and
Fact~\ref{fact:ergodicity} that the event occurs with probability
$0$ or $1$. Consequently, we have that $K$ is an a.s. constant. Then
it suffices to exclude the possibility of $K\geq 2$. This is shown
by contradiction, that is, if there exist $K\geq 2$ infinite
connected components, then they can be linked together as one
connected component wpp. The proof is inspired by the proof of
Proposition 3.3 in~\citep{Meester&Roy:Con_Percolation}, and a major
difference is that here we need to consider the impact of the
primary network on the connectivity of the secondary network.

Suppose that there are $K\geq 2$ infinite connected components a.s.
If we remove all the secondary nodes centered inside a box
$B=[-n,n]^2$, then the resulting secondary network should contain at
least $K$ unbounded components a.s. Let, for $A\subseteq
\mathbb{R}^2$, $\mathcal{G}[A]$ denote the graph formed by secondary
nodes in $A$. Given a box $B$ and $\epsilon
>0$, consider the event
\begin{eqnarray*}
E(B,\epsilon):=\left\{\textrm{$d(U,B)\leq r_p-\epsilon$ for any
infinite connected component $U$ in $\mathcal{G}[B^c]$}\right\}.
\end{eqnarray*}

Partition the box $B$ into squares with side length $a>0$ and let
$\mathcal{S}_a=\{S_1,...,S_N\}$ denote the collection of all the
squares which are adjacent to the boundary of $B$. Clearly, for a
box $B$ and $\epsilon >0$, we can find $a=a(B,\epsilon)\in (0,
r_p/\sqrt{5})$ and $\eta=\eta (a)>0$ such that for any point
$x\notin B$ with $d(x,B)\leq r_p-\epsilon/2$, there exists a square
$S=S(x)\in \mathcal{S}_a$ for which we have $\sup_{y\in S}d(x,y)\leq
r_p-\eta$. This means that, if we center in each square of
$\mathcal{S}_a$ a secondary node and there are neither primary
transmitters nor primary receivers within a bigger box
$\bar{B}=[-n-\max \{r_I,R_I\},n+\max \{r_I,R_I\}]^2$, then every
infinite component $U$ in $\mathcal{G}[B^c]$ with $d(U,B)\leq
r_p-\epsilon$ is connected to some secondary node in
$\mathcal{S}_a$.

Let $E(a,\eta)$ be the event that each square in $\mathcal{S}_a$
contains at least one secondary node and $E(\bar{B})$ the event that
there are neither primary transmitters nor primary receivers within
$\bar{B}$. Since $E(a,\eta)$ depends on the configuration of
secondary nodes inside the box $B$, $E(B,\epsilon)$ depends on the
configuration of secondary nodes outside $B$ and the configuration
of primary nodes, based on the independence of the primary network
and the secondary network, we have
\begin{eqnarray*}
\textrm{Pr}(E(B,\epsilon)\cap E(a,\eta)\cap
E(\bar{B}))=\textrm{Pr}(E(B,\epsilon))\textrm{Pr}(E(a,\eta))\textrm{Pr}(E(\bar{B})|E(B,\epsilon)).
\end{eqnarray*}

If $E(B,\epsilon)$, $E(a,\eta)$ and $E(\bar{B})$ all occur wpp.,
then there is only one infinite connected component\footnote{Since
$a<r_p/\sqrt{5}$, every secondary node in a square of
$\mathcal{S}_a$ is connected to those secondary nodes in the
neighboring squares.} wpp. By using arguments similar to the proof
for Proposition 3.3 in~\citep{Meester&Roy:Con_Percolation}, we have
that there exists a large enough box $B$ and $\epsilon >0$ such that
$\textrm{Pr}(E(B,\epsilon))>0$. Obviously,
$\textrm{Pr}(E(a,\eta))>0$. Moreover, it is easy to see that
$P(E(\bar{B})|E(B,\epsilon))\geq P(E(\bar{B}))>0$.
\end{proof}

Now we have that the number $K$ of infinite connected components is
equal to zero, one or infinity a.s. To exclude the possibility of
$K=\infty$, we can directly apply the proof of Poisson Boolean
models~\citep[Theorem 3.6]{Meester&Roy:Con_Percolation} here, which
is based on several combinatorial results. The details are omitted.

\subsection{Proof of Theorem 2}
\label{subsec:proof_thm2}

\subsubsection{Proof of T2.1}
\label{subsubsec:proof_thm21}

To prove T2.1, it suffices to show that
\begin{itemize}
\item[(a)] for any $\lambda_S\leq \lambda_c(r_p)$, the secondary
network is not connected for any $\lambda_{PT}\geq 0$;
\item[(b)] for any $\lambda_S>\lambda_c(r_p)$, there exists a
$\lambda^*_{PT}(\lambda_S)>0$ such that $\forall~\lambda_{PT}\leq
\lambda^*_{PT}(\lambda_S)$, the secondary network is connected.
\end{itemize}

From Sec.~\ref{subsec:CP}, we know that for a Poisson homogeneous ad
hoc network with density $\lambda$ and transmission range $r$, the
necessary and sufficient condition for connectivity is
$\lambda>\lambda_c(r)$. Since the existence of an infinite connected
component in the secondary network implies the existence of an
infinite connected component in the homogeneous ad hoc network with
the same density and the same transmission range, by using a
coupling argument, we conclude that when $\lambda_S\leq
\lambda_c(r_p)$, there does not exist an infinite connected
component a.s. in the secondary network for any $\lambda_{PT}\geq
0$. This proves part (a).

The basic idea of the proof of part (b) is to approximate the
secondary network $\mathcal{G}(\lambda_S,\lambda_{PT})$ by a
discrete dependent edge-percolation model on the grid. This discrete
dependent edge-percolation model $\mathcal{L}$ is constructed in a
way such that the existence of an infinite connected component in
$\mathcal{L}$ implies the existence of an infinite connected
component in $\mathcal{G}(\lambda_S,\lambda_{PT})$.

Construct the square lattice $\mathcal{L}$ on $\mathbb{R}^2$ with
side length $d$ (see Fig.~\ref{fig:Dual_L}). Note that each site in
$\mathcal{L}$ is virtual and is not related to any node either in
the secondary network or in the primary network. Next we specify the
conditions for an edge being open in $\mathcal{L}$, which is the key
to the mapping from $\mathcal{G}(\lambda_S,\lambda_{PT})$ to
$\mathcal{L}$.

For each edge $e$ in $\mathcal{L}$, let $(x_e,~y_e)$ denote the
middle point of $e$. Then we introduce three random fields $A_e$,
$B_e$, and $C_e$, all associated with the edge $e$ in $\mathcal{L}$,
where $C_e=A_e B_e$ is the indicator of the edge $e$ being open,
$A_e$ represents the condition (C1) of the distance between two
users for the existence of a communication link in the secondary
network, and $B_e$ represents the condition (C2) of the spectrum
opportunity. Specifically, consider the Poisson Boolean model
$\mathcal{B}(X_S,~r_p/2,~\lambda_S)$ where $X_S$ is the Poisson
point process generated by secondary users, then for a horizontal
edge $e$, $A_e=1$ if the following two events (illustrated in
Fig.~\ref{fig:Ae_1_L}) are true:
\begin{itemize}
\item[(i)] there is an occupied L-R crossing of the rectangle $[x_e-3d/4,~x_e+3d/4]\times
[y_e-d/4,~y_e+d/4]$ in $\mathcal{B}(X_S,~r_p/2,~\lambda_S)$;
\item[(ii)] there are two occupied T-B crossings of the square $[x_e-3d/4,~x_e-d/4]\times
[y_e-d/4,~y_e+d/4]$ and the square $[x_e+d/4,~x_e+3d/4]\times
[y_e-d/4,~y_e+d/4]$ in $\mathcal{B}(X_S,~r_p/2,~\lambda_S)$;
\end{itemize}
and $A_e=0$ otherwise. For a vertical edge $e$, the definition of
$A_e$ is similar, where the horizontal and vertical coordinates are
switched in the above two events.

\begin{figure}[htbp]
\centerline{
\begin{psfrags}
\psfrag{xe_ye}[c]{\small{$\mathbf{(x_e,~y_e)}$}}
\psfrag{e}[c]{\small{$\mathbf{e}$}} \psfrag{3d2}[c]{$\frac{3d}{2}$}
\psfrag{d}[c]{$d$} \psfrag{d2}[c]{$\frac{d}{2}$}
\scalefig{0.65}\epsfbox{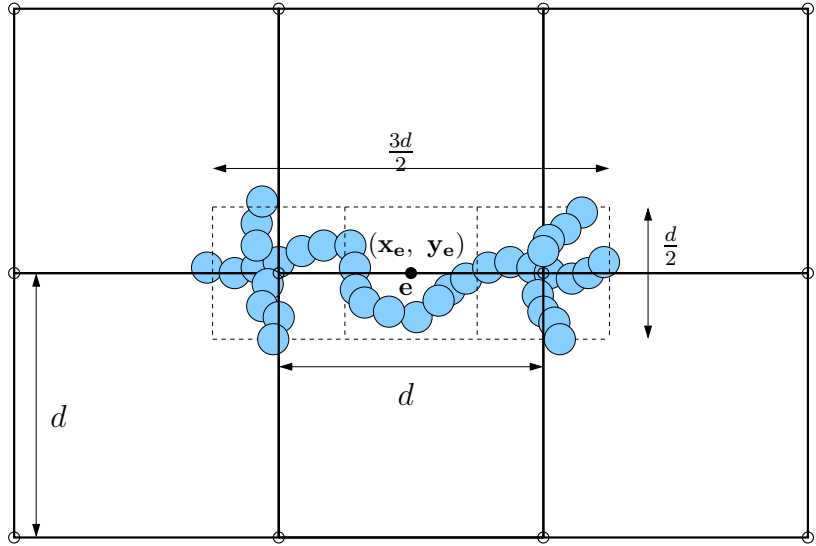}
\end{psfrags}
} \caption{A realization where $A_e=1$ for the edge $e$ (hollow
points are sites in $\mathcal{L}$ and solid segments are edges in
$\mathcal{L}$).} \label{fig:Ae_1_L}
\end{figure}

Next we define the random field $B_e$. For an edge $e$ in
$\mathcal{L}$, $B_e=1$ if $A_e=1$ and the following two events are
true:
\begin{itemize}
\item[(i)] there is no primary transmitter within distance $R_I$ of
any secondary node of the three crossings in the definition of
$A_e$;
\item[(ii)] there is no primary receiver within distance $r_I$ of any
secondary node of the three crossings in the definition of $A_e$;
\end{itemize}
and $B_e=0$ otherwise. It follows from the definition of
communication link in the secondary network (see
Sec.~\ref{subsubsec:CL}) that if $A_e=1$ and $B_e=1$, then the three
crossings in $\mathcal{B}(X_S,~r_p/2,~\lambda_S)$ are also three
crossings in $\mathcal{G}(\lambda_S,\lambda_{PT})$.

Let $C_e=A_e B_e$, then we claim that the edge $e$ is open if
$C_e=1$, and $e$ is closed if $C_e=0$. We observe from
Fig.~\ref{fig:Ae_1_L} that whether the edge $e$ is open is
correlated with the states of the other edges. This model
$\mathcal{L}$ thus is a dependent edge-percolation model.
Furthermore, as shown in Fig.~\ref{fig:Percolation_L}, if there
exists an infinite open connected component in $\mathcal{L}$, then
those crossings associated with the edges in the infinite component
in $\mathcal{L}$ comprise an infinite connected component in
$\mathcal{G}(\lambda_S,\lambda_{PT})$. As a consequence, by
considering the uniqueness of the infinite connected component in
$\mathcal{G}(\lambda_S,\lambda_{PT})$, we only need to prove the
following lemma in order to show T2.1.

\begin{figure}[htbp]
\centerline{\scalefig{0.65}\epsfbox{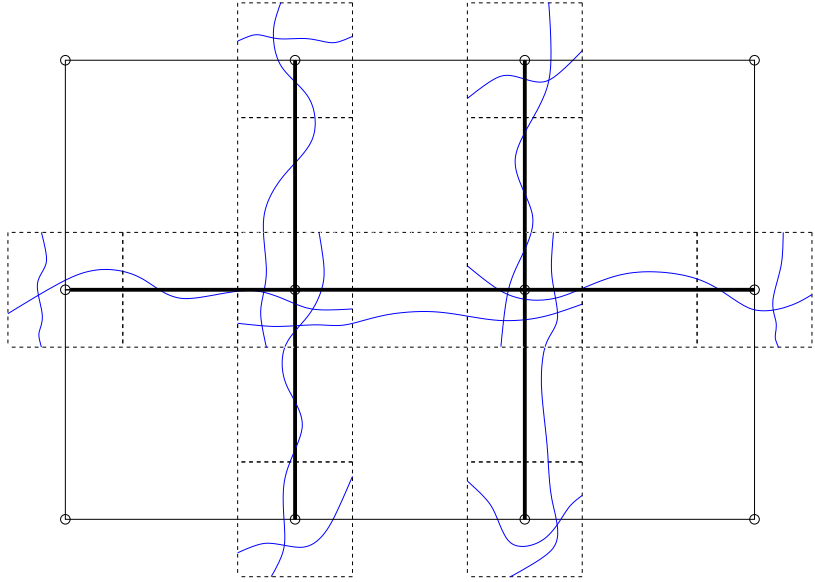}}
\caption{Percolation in $\mathcal{L}$ (thick segments are open edges
in $\mathcal{L}$ and thin segments are closed edges in
$\mathcal{L}$, and blue curves are those crossings associated with
the open edges).} \label{fig:Percolation_L}
\end{figure}

\begin{lemma} \label{lemma:percolation_L}
Let $C(O)$ denote the open connected component containing the origin
$O$ in $\mathcal{L}$. Then given $\lambda_S>\lambda_c(r_p)$,
$\exists~D>0$, $\lambda^*_{PT}>0$ such that for $d=D$ and any
$\lambda_{PT} \leq \lambda^*_{PT}$, we have
\begin{eqnarray*}
\textrm{Pr}\{|C(O)|=\infty\}>0,
\end{eqnarray*}
where $|C(O)|$ is the number of edges in $C(O)$.
\end{lemma}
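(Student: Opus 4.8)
The plan is to turn the dependent edge-percolation model $\mathcal{L}$ into a supercritical one by making the probability $q\defeq\textrm{Pr}\{C_e=0\}$ that an edge is closed arbitrarily small, and then to run a Peierls (contour) argument on the dual lattice, using the finite-dependence estimate of Fact~\ref{fact:upper_bound_n_closed}. Since $C_e=A_eB_e$, a union bound gives
\[
q\le \textrm{Pr}\{A_e=0\}+\textrm{Pr}\{A_e=1,\,B_e=0\}\le \textrm{Pr}\{A_e=0\}+\textrm{Pr}\{B_e=0\mid A_e=1\},
\]
so it suffices to control the two terms separately, choosing first $D$ and then $\lambda^*_{PT}$ (in that order).

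First I would fix $D$, and it must be large for two distinct reasons. The event $A_e=0$ is the failure of one of the three occupied crossings defining $A_e$ (one L-R crossing of a $\tfrac32 d\times\tfrac12 d$ rectangle and two T-B crossings of $\tfrac12 d\times\tfrac12 d$ squares) in $\mathcal{B}(X_S,r_p/2,\lambda_S)$. Because $\lambda_S>\lambda_c(r_p)=\lambda_c(2\cdot r_p/2)$, this Boolean model is supercritical, so by the crossing-probability result~(\ref{eqn:crossing_prob}), applied with aspect ratios $3{:}1$ and $1{:}1$, each of the three crossing probabilities tends to $1$ as $d\to\infty$; hence $\textrm{Pr}\{A_e=0\}$ can be pushed below any prescribed $\epsilon_1$. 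Independently, $D$ must be large enough that Fact~\ref{fact:upper_bound_n_closed} applies, \ie that $C_e$ is correlated only with the six edges adjacent to $e$. The field $A_e$ is measurable with respect to $X_S$ in the $\tfrac32 d\times\tfrac12 d$ box around $e$, and these boxes are disjoint for non-adjacent edges for \emph{every} $d>0$, so the six-neighbour structure already holds for the $A$-part. The field $B_e$ additionally depends on primary transmitters within $R_I$ and primary receivers within $r_I$ of the crossing nodes; since a receiver lies within $R_p$ of its transmitter, $B_e$ is a function of the primary process in the box enlarged by $\Delta\defeq\max\{R_I,\,r_I+R_p\}$. As the gap between the boxes of two non-adjacent edges is of order $d$, taking $D$ larger than a fixed multiple of $\Delta$ keeps the enlarged regions of non-adjacent edges disjoint, so $C_e$ is correlated only with its six neighbours, as required.

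With $D$ fixed, the box and its $\Delta$-enlargement are fixed sets of finite area. Conditioned on $A_e=1$, all crossing nodes lie in this box, so $B_e=0$ forces a primary transmitter or receiver into the enlarged region; since the primary process is independent of $X_S$ and each of the two primary processes is marginally Poisson of density $\lambda_{PT}$, a first-moment bound yields $\textrm{Pr}\{B_e=0\mid A_e=1\}\le \lambda_{PT}\,S'$ for a constant $S'=S'(D)$. This bound is increasing in $\lambda_{PT}$, so a sufficiently small $\lambda^*_{PT}$ makes it below any prescribed $\epsilon_2$ for \emph{all} $\lambda_{PT}\le\lambda^*_{PT}$. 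Combining the estimates, $q\le\epsilon_1+\epsilon_2$ is as small as we wish. Finally, the origin's open cluster is finite only if a circuit of closed edges in the dual lattice surrounds $O$; the number of such circuits of length $n$ is at most $C\,n\,3^{n}$, and by Fact~\ref{fact:upper_bound_n_closed} each is entirely closed with probability at most $q^{n/4}$, so
\[
\textrm{Pr}\{|C(O)|<\infty\}\le \sum_{n\ge 4} C\,n\,3^{n}\,q^{n/4},
\]
which is strictly below $1$ once $q$ is small enough. Hence $\textrm{Pr}\{|C(O)|=\infty\}>0$.

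The main obstacle is the bookkeeping of the dependence range, not the contour count. One must verify that the long-range coupling injected by the primary network---in particular the transmitter-receiver displacement, which allows a \emph{distant} transmitter to place a receiver close to the crossing nodes of $e$---still leaves $C_e$ dependent only on its six adjacent edges once $D$ is taken large; this is what forces the enlargement radius $\Delta=\max\{R_I,\,r_I+R_p\}$ rather than $\max\{R_I,r_I\}$. Everything else is standard provided the order of quantifiers is respected: $D$ is chosen first, freezing all geometric constants, and only afterwards is $\lambda^*_{PT}$ chosen as a function of $D$.
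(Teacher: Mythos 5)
Your proof is correct and follows the same route as the paper's: the same decomposition $q\le \textrm{Pr}\{A_e=0\}+\textrm{Pr}\{B_e=0\mid A_e=1\}$, the same use of the crossing-probability limit~(\ref{eqn:crossing_prob}) in the supercritical Boolean model $\mathcal{B}(X_S,r_p/2,\lambda_S)$ to fix $D$, the same small-$\lambda_{PT}$ estimate for the primary processes (your first-moment bound versus the paper's Poisson void probabilities is an immaterial difference), and the same Peierls/dual-circuit argument powered by Fact~\ref{fact:upper_bound_n_closed}. Where you deviate, you are actually more careful than the paper. The paper asserts that $d\geq \max\{4R_I+2r_p,\,4r_I+2r_p\}$ suffices for $C_e$ to be independent of all non-adjacent edges, which overlooks precisely the effect you flag: a primary receiver within $r_I$ of a crossing node of $e$ can have its transmitter up to $R_p$ farther away, so a single transmitter--receiver pair can influence $B_e$ and $B_{e'}$ simultaneously even when the paper's regions are disjoint (\eg for $R_I=r_I$ the paper's threshold misses an additive $2R_p$); your enlargement $\Delta=\max\{R_I,\,r_I+R_p\}$ closes this gap. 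Your quantifier ordering is also the correct one: the paper picks $\lambda^*_{PT}$ for $d=D$ and only afterwards enlarges $d$ to $\max\{D,4R_I+2r_p,4r_I+2r_p\}$ without revisiting the bound on $\textrm{Pr}\{B_e=0\mid A_e=1\}$, whose area constant grows with $d$, whereas you freeze all geometry in $D$ first and then choose $\lambda^*_{PT}$ as a function of $D$. Neither issue affects the truth of the lemma --- the constants just need to be taken slightly larger --- but your write-up is the one that needs no patching.
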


\begin{proof}[Proof of Lemma~\ref{lemma:percolation_L}]
For an arbitrary edge $e$ in $\mathcal{L}$, let
$q=\textrm{Pr}\{C_e=0\}$, then we have
\begin{eqnarray*}
q=\textrm{Pr}\{(A_e=0)\cup (B_e=0)\} \leq
\textrm{Pr}\{A_e=0\}+\textrm{Pr}\{B_e=0\}.
\end{eqnarray*}

From the result on the crossing probabilities given in
(\ref{eqn:crossing_prob}), we know that when
$\lambda_S>\lambda_c(r_p)$,
\begin{eqnarray*}
\textrm{Pr}\{A_e=0\}&=&\textrm{Pr}\{\textrm{at least one crossing
does not exist}\} \\
&\leq& \left[1-\sigma
\left(\left(\frac{3d}{2},\frac{d}{2}\right),~\lambda_S,~\textrm{L-R}\right)\right]
+\left[1-\sigma
\left(\left(\frac{d}{2},\frac{d}{2}\right),~\lambda_S,~\textrm{T-B}\right)\right]
\\ & &+\left[1-\sigma
\left(\left(\frac{d}{2},\frac{d}{2}\right),~\lambda_S,~\textrm{T-B}\right)\right]
\\
&\rightarrow& 0~~~~\textrm{as}~~~~d\rightarrow \infty,~~~~\ie
\underset{d\rightarrow \infty}{\lim} \textrm{Pr}\{A_e=0\}=0.
\end{eqnarray*}
Thus when $\lambda_S>\lambda_c(r_p)$, $\forall \epsilon
>0$, $\exists~D>0$ such that $\textrm{Pr}\{A_e=0\}<
\frac{\epsilon}{3}$.

Given $A_e=1$, let $S_{R_I}$ be the area of the region covered by
the circles with radii $R_I$ centered at those secondary nodes in
the three crossings, and $S_{r_I}$ be the area of the region covered
by the circles with radii $r_I$ centered at those secondary nodes in
the three crossings. Then we have
\begin{eqnarray*}
\textrm{Pr}\{B_e=0~|~A_e=1\}
&=& \textrm{Pr}\{\exists~\textrm{some primary transmitter in $S_{R_I}$}\} \\
& &+\textrm{Pr}\{\exists~\textrm{some primary receiver in
$S_{r_I}$}\}.
\end{eqnarray*}
Since $S_{R_I}\leq
\left(\frac{3d}{2}+2R_I+r_p\right)\left(\frac{d}{2}+2R_I+r_p\right)$
and $S_{r_I}\leq
\left(\frac{3d}{2}+2r_I+r_p\right)\left(\frac{d}{2}+2r_I+r_p\right)$,
it follows from the basic property of Poisson point processes that
\begin{eqnarray*}
\textrm{Pr}\{B_e=0~|~A_e=1\} &\leq&
1-\exp\left[-\lambda_{PT}\left(\frac{3d}{2}+2R_I+r_p\right)\left(\frac{d}{2}+2R_I+r_p\right)\right]\\
&
&+1-\exp\left[-\lambda_{PT}\left(\frac{3d}{2}+2r_I+r_p\right)\left(\frac{d}{2}+2r_I+r_p\right)\right].
\end{eqnarray*}
Obviously, $\underset{\lambda_{PT} \rightarrow
0}{\lim}~\textrm{Pr}\{B_e=0~|~A_e=1\}=0$ for fixed $d$. Thus if we
choose $d=D$, then $\forall \epsilon >0$, $\exists~\lambda^*_{PT}>0$
such that
\begin{eqnarray*}
\textrm{Pr}\{B_e=0~|~A_e=1\}< \frac{\epsilon}{3}~~\textrm{for all
$\lambda_{PT}\leq \lambda^*_{PT}$.}
\end{eqnarray*}
It implies that when $d=D$, for all $\lambda_{PT}\leq
\lambda^*_{PT}$,
\begin{eqnarray*}
\textrm{Pr}\{B_e=0\}
&=&\textrm{Pr}\{A_e=0\}+\textrm{Pr}\{B_e=0~|~A_e=1\}\textrm{Pr}\{A_e=1\}
\\
&\leq&\textrm{Pr}\{A_e=0\}+\textrm{Pr}\{B_e=0~|~A_e=1\} \\
&<& \frac{2\epsilon}{3}.
\end{eqnarray*}

Thus for $d=D$ and all $\lambda_{PT} \leq \lambda^*_{PT}$, we have
\begin{equation} \label{eqn:small_q}
q\leq \textrm{Pr}\{A_e=0\}+\textrm{Pr}\{B_e=0\}< \epsilon.
\end{equation}

From Fig.~\ref{fig:Ae_1_L}, we can see that if $d\geq \max
\{4R_I+2r_p,~4r_I+2r_p\}$, then the state of edge $e$ is only
correlated with its six adjacent edges and it is independent of
other edges. In this case, by using the `Peierls
argument\footnote{The essence of `Peierls argument' is to make use
of the one-to-one correspondence between a finite open component in
lattice $\mathcal{L}$ containing the origin $O$ and a closed circuit
in the dual lattice $\mathcal{L}^+$ of $\mathcal{L}$ surrounding the
origin $O$.}'~\citep[Chapter 1]{Grimmett:Percolation}, we can show
that if the probability of an edge being closed
$q<\left(\frac{11-2\sqrt{10}}{27}\right)^4$, then
\begin{eqnarray} \label{eqn:per_L}
\textrm{Pr}\{|C(O)|=\infty\}>0.
\end{eqnarray}
The proof of the above statement follows the proof of Theorem 3
in~\citep{Dousse&Etal:05ITN} except that the upper bound on the
probability of $n$ edges all being closed is replaced by the one
given in Fact~\ref{fact:upper_bound_n_closed}.

Thus by combining (\ref{eqn:per_L}) with (\ref{eqn:small_q}), we
conclude that given $\lambda_S>\lambda_c(r_p)$, $\exists~D>0$,
$\lambda^*_{PT}>0$ such that for fixed $d=\max
\{D,4R_I+2r_p,~4r_I+2r_p\}$ and any $\lambda_{PT} \leq
\lambda^*_{PT}$,
\begin{eqnarray*}
\textrm{Pr}\{|C(O)|=\infty\}>0.
\end{eqnarray*}
Notice that $\lambda^*_{PT}$ depends on $D$ which is chosen
according to the crossing probability and is determined by
$\lambda_S$. As a consequence, $\lambda^*_{PT}$ is a function of
$\lambda_S$, $\ie$ $\lambda^*_{PT}=\lambda^*_{PT}(\lambda_S)$.
\end{proof}

\subsubsection{Proof of T2.2}
\label{subsubsec:proof_thm22}

From the conditions for the existence of a communication link in the
secondary network specified in Sec.~\ref{subsubsec:CL}, we know that
for every secondary node in an infinite connected component, there
can exist neither any primary transmitter within distance $R_I$ of
it nor any primary receiver within distance $r_I$ of it. In other
words, every secondary node in an infinite connected component must
be located outside all the circles centered at the primary
transmitters and the primary receivers with radii $R_I$ and $r_I$,
respectively. Thus, if there is an infinite connected component in
the secondary network, then an infinite vacant component must exist
in the two Poisson Boolean models
$\mathcal{B}(X_{PT},~R_{PT},~\lambda_{PT})$ and
$\mathcal{B}(X_{PR},~R_{PR},~\lambda_{PT})$ driven by the primary
transmitters and the primary receivers, respectively. Here $R_{PT}$
and $R_{PR}$ are some appropriate radii which will be specified
later. A natural choice for $R_{PT}$ is $R_I$, but if we consider
the counterexample given in Fig.~\ref{fig:Counterexample}, then we
can clearly see that even if there is an infinite path in the
secondary network, no infinite vacant component exists in the
Poisson Boolean model $\mathcal{B}(X_{PT},~R_I,~p\lambda_P)$ driven
by the primary transmitters. Similarly, counterexamples can be
easily constructed for choosing $R_{PR}=r_I$.

\begin{figure}[htbp]
\centerline{
\begin{psfrags}
\psfrag{RI}[c]{$R_I$} \psfrag{O}[c]{$O$}
\psfrag{symbol}[c]{$\mathbf{\cdots \cdots}$} \psfrag{Pri}[l]{Primary
Transmitter} \psfrag{Sec}[l]{Secondary User}
\scalefig{0.4}\epsfbox{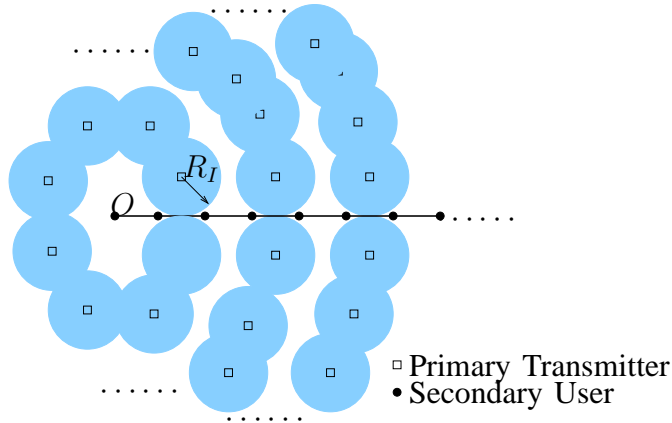}
\end{psfrags}
} \caption{A counterexample for choosing $R_{PT}=R_I$. All the
secondary nodes in the infinite path are located outside those
circles centered at the primary transmitters with radii $R_I$, which
form a series of rings surrounding the origin $O$, and there is no
infinite vacant component in the Poisson Boolean model
$\mathcal{B}(X_{PT},~R_I,~\lambda_{PT})$ driven by the primary
transmitters.} \label{fig:Counterexample}
\end{figure}

Suppose there is an infinite connected component in the secondary
network. Then we can find a sequence of secondary users
$\{S_1,~S_2,~S_3,~\cdots\}$ such that they comprise an infinite path
starting from $S_1$ (see Fig.~\ref{fig:Inf_Path}).

\begin{figure}[htbp]
\centerline{
\begin{psfrags}
\psfrag{S1}[c]{$S_1$} \psfrag{S2}[c]{$S_2$} \psfrag{S3}[c]{$S_3$}
\psfrag{S4}[c]{$S_4$} \psfrag{S5}[c]{$S_5$} \psfrag{S6}[c]{$S_6$}
\psfrag{S7}[c]{$S_7$} \psfrag{S8}[c]{$S_8$}
\psfrag{symbol}[c]{$\mathbf{\cdots \cdots}$}
\scalefig{0.65}\epsfbox{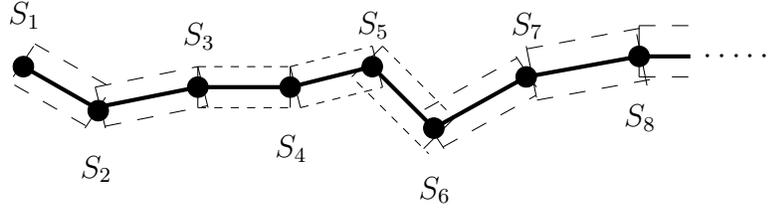}
\end{psfrags}
} \caption{An infinite path in the secondary network. The dashed
segments form an inner bound on the infinite vacant component in the
Poisson Boolean model driven by the primary receivers.}
\label{fig:Inf_Path}
\end{figure}

Assume that $S_i$ and $S_{i+1}$ $(i\geq 1)$ are two adjacent
secondary nodes in the above infinite path. Notice that the distance
$d_{i,~i+1}$ between $S_i$ and $S_{i+1}$ satisfies $d_{i,~i+1}\leq
r_p<r_I$, where the second inequality $r_p<r_I$ follows from the
fact that the minimum transmission power for successful reception is
in general higher than the maximum allowable interference power.

As we know, all the primary receivers must be outside the two
circles with radii $r_I$ centered at $S_i$ and $S_{i+1}$,
respectively, as shown in Fig.~\ref{fig:Nece_Proof}. Given
$\epsilon>0$, consider the rectangle
$\left[-\frac{d_{i,i+1}}{2},~\frac{d_{i,i+1}}{2}\right]\times
[-\epsilon,~\epsilon]$ between $S_i$ and $S_{i+1}$. By a simple
computation in geometry, we have that the minimum distance from all
the primary receivers to the rectangle is
$\sqrt{r_I^2-\frac{r_p^2}{4}}-\epsilon$. As illustrated in
Fig.~\ref{fig:Inf_Path}, it implies that there exists an infinite
vacant component in the Poisson Boolean model
$\mathcal{B}\left(X_{PR},~\sqrt{r_I^2-\frac{r_p^2}{4}}-\epsilon,~\lambda_{PT}\right)$
driven by the primary receivers\footnote{This technique used here
can also be applied to the case when $r_p\geq r_I$, where only the
minimum distance from all the primary receiver to the bar between
$S_i$ and $S_{i+1}$ needs to be recomputed.}. By recalling the known
results in Sec.~\ref{subsubsec:STTD}, we thus conclude that for all
$\epsilon>0$,
\begin{eqnarray*}
\lambda_{PT} \leq \left(2\sqrt{r_I^2-r_p^2/4}-\epsilon \right)^{-2}
\lambda_c(1).
\end{eqnarray*}
Let $\epsilon \rightarrow 0$, then it yields
\begin{eqnarray*}
\lambda_{PT} \leq \frac{1}{4r_I^2-r_p^2}\lambda_c(1).
\end{eqnarray*}

\begin{figure}[htbp]
\centerline{
\begin{psfrags}
\psfrag{S1}[c]{$S_i$} \psfrag{S2}[c]{$S_{i+1}$}
\psfrag{rI}[c]{$r_I$} \psfrag{e}[c]{$\epsilon$}
\psfrag{d}[c]{$d_{i,~i+1}$}
\scalefig{0.4}\epsfbox{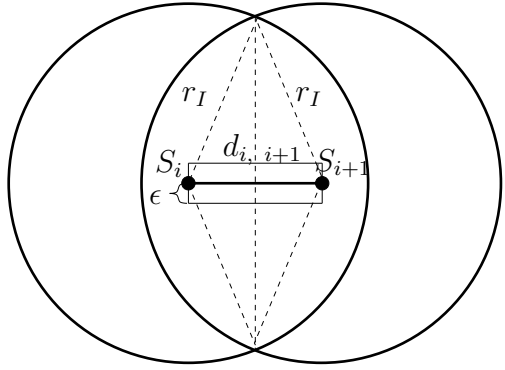}
\end{psfrags}
} \caption{One edge $(S_i,~S_{i+1})$ in the infinite path.}
\label{fig:Nece_Proof}
\end{figure}

The other term $\frac{1}{4R_I^2-r_p^2}\lambda_c(1)$ in the upper
bound is obtained by applying the same argument to the Poisson
Boolean model driven by the primary transmitters.

\subsection{Proof of Theorem 3}
\label{subsec:proof_thm3}

Consider the connected component $C_A$ containing an arbitrarily
chosen secondary user $A$. Assuming that $|C_A|>1$, we construct a
branching process as follows. Notice that if $|C_A|>1$ where $|C_A|$
is the number of users contained in $C_A$, then $A$ must see the
opportunity, \ie $\overline{\mathbb{I}(A,r_I,\textrm{rx})}\cap
\overline{\mathbb{I}(A,R_I,\textrm{tx})}$ is true. Call $A$ the
initial point (or $0$-th generation) of the branching process. Then
the children of $A$ (\ie the $1$st generation of the branching
process) are secondary users which satisfy the following two
conditions:
\begin{itemize}
\item[(i)] it is within distance $r_p$ of $A$, where $r_p$ is the
transmission range of secondary users;
\item[(ii)] there exist neither any primary receiver within
distance $r_I$ of the secondary user nor any primary transmitter
within distance $R_I$ of the secondary user.
\end{itemize}

The $n$-th ($n\geq 2$) generation of the branching process are
obtained similarly, and they are connected to their parents in the
$(n-1)$-th generation of the branching process via bidirectional
links. Obviously, all the secondary users in $C_A$ are counted in
the constructed branching process model. But some of them may
probably be counted more than once, since we do not exclude the
previous $n$ generations (including generation $0$) when we consider
the $n$-th generation. Thus, this branching process gives us an
upper bound on the number of secondary users in $C_A$. It follows
that if the branching process does not grow to infinity wpp., then
there does not exist an infinite connected component a.s. in
$\mathcal{G}(\lambda_S,\lambda_{PT})$, due to the stationarity of
the heterogeneous network model. Since the conditional average
degree $\mu$ is the average number of offspring for every
generation, the necessary condition follows immediately from the
classic theorem for branching processes~\citep[Theorem
2.1.1]{Franceschetti&Meester:Random_Network_Comm}.

\subsection{Proof of Theorem 4}
\label{subsec:proof_thm4}

From the construction of the dependent site-percolation model
$\mathcal{L}$, we know that the existence of an infinite occupied
component in $\mathcal{L}$ implies the existence of an infinite
connected component in $\mathcal{G}(\lambda_S,\lambda_{PT})$. Then
in order to obtain a sufficient condition for the connectivity of
the secondary network, it suffices to find a sufficient condition
for the existence of an infinite occupied component in
$\mathcal{L}$.

Let $p$ be the probability that one site is occupied. Then based on
the definition of the upper critical probability $p_c$ of
$\mathcal{L}$, we have that if $p>p_c$, an infinite occupied
component containing the origin exists in $\mathcal{L}$ wpp. It
implies that if $p>p_c$, there exists an infinite connected
component in the secondary network wpp. Since the event that there
exists an infinite connected component in the secondary network is
invariant under the group of shift transformations, it follows from
the ergodicity of the heterogeneous network model (see
Lemma~\ref{lemma:ergodicity}) that if $p>p_c$, there exists an
infinite connected component in the secondary network a.s.

Based on the definition of occupied site in $\Lc$, we have
\begin{eqnarray*}
p&=&[1-\exp (-\lambda_S
d^2)]\textrm{Pr}\left\{\overline{\mathbb{I}\left(A,r_I,\textrm{rx}\right)}\cap
\overline{\mathbb{I}\left(A,R_I,\textrm{tx}\right)}\right\}
\\
&=& \left[1-\exp \left(-\frac{\lambda_S r_p^2}{8}\right)\right]\exp
\left\{-\lambda_{PT} \pi \left[R_I^2+r_I^2
-I(R_I,R_p,r_I)\right]\right\}.
\end{eqnarray*}
In the last step,
$\textrm{Pr}\left\{\overline{\mathbb{I}\left(A,r_I,\textrm{rx}\right)}\cap
\overline{\mathbb{I}\left(A,R_I,\textrm{tx}\right)}\right\}$ has
been obtained by setting the distance $d=0$ in the expression for
the probability of a unidirectional opportunity between two
secondary users with distance $d$ apart given in Proposition 1
in~\citep{Ren&etal:08JSAC}.


\section{Conclusion and Future Directions}
\label{sec:conclusion}

We have studied the connectivity of a large-scale ad hoc
heterogeneous wireless network in terms of the occurrence of the
percolation phenomenon. We have introduced the concept of
connectivity region to specify the dependency of connectivity on the
density of the secondary users and the traffic load of the primary
users. We have shown several basic properties of the connectivity
region: the contiguity of the region, the monotonicity of the
boundary, and the uniqueness of the infinite connected component. We
have analytically characterized the critical density of the
secondary users and the critical density of the primary
transmitters; they jointly specify the profile of the connectivity
region. We have also established a necessary and a sufficient
condition for connectivity, which give an outer and an inner bound,
respectively, on the connectivity region. Furthermore, by examining
the impacts of the secondary users' transmission power on the
connectivity region and on the conditional average degree of a
secondary user, we have demonstrated the tradeoff between proximity
and spectrum opportunity. In establishing these results, we have
used techniques and theories in continuum percolation, including the
coupling argument, ergodic theory, the discretization technique, and
the approximation using a branching process.

To highlight unique design tradeoffs in heterogeneous networks, we
have ignored the fading effect and the mutual interference between
secondary users. If we take into account these factors, then the
received signal to interference-plus-noise ratios at two secondary
users will replace the distance between them in the condition (C1)
for the existence of a communication link between them. This will
result in a random connection model with correlated links, where the
correlation between links is due to the mutual interference and the
condition (C2) on the presence of the bidirectional opportunity.
Although the connectivity region can still be defined in the same
way, there will be another tradeoff between proximity and mutual
interference besides the tradeoff between proximity and opportunity.
The combination of these two tradeoffs will significantly complicate
the characterization of the connectivity of the secondary network.
We hope results obtained in this paper serve as a first step toward
solving this more complex problem.


\section*{Acknowledgments}
The authors would like to thank Drasvin Cheowtirakul for his help in
generating several simulation results.


\section*{Appendix A: Expression for Conditional Average Degree}

\renewcommand{\theequation}{A\arabic{equation}}
\setcounter{equation}{0}

The expression for the conditional average degree $\mu$ of a
secondary user is presented in the following proposition.
\begin{proposition} \label{pro:cond_avg_deg}
Let $\lambda_S$ and $\lambda_{PT}$ be the density of secondary users
and primary transmitters, respectively. Let $r_I$ and $R_I$ be the
interference range of the secondary and primary users, respectively,
and $r_p$ and $R_p$ the transmission range of the secondary and
primary users, respectively. Then the conditional average degree
$\mu$ of a secondary user is given by
\begin{eqnarray} \label{eqn:cond_avg_deg}
\mu &=& \left(\lambda_S \pi r_p^2\right)\cdot
g(\lambda_{PT},r_p,r_I,R_p,R_I) \nn \\
    &=& \lambda_S \pi r_p^2
\int_0^{r_p}~\frac{2t}{r_p^2}\exp\Bigg\{-\lambda_{PT}\Big[\pi
(r_I^2+R_I^2+I(R_I,R_p,r_I))-S_I (t,r_I,r_I)-S_I
(t,R_I,R_I) \nonumber \\
&
&~~~~~~~~~~~~~~~~~~~~~~~~~~~~~~~~~~-\underset{\Sc_{U2}(t,R_I,R_I)}{\iint}\frac{S_{I2}(r,\theta,R_p,t,r_I)}{\pi
R_p^2}r\mathrm{d}r\mathrm{d}\theta \Big]\Bigg\}\mathrm{d}t,
\end{eqnarray}
where
\begin{eqnarray*}
I(R_I,R_p,r_I)=2 \int_{0}^{R_I} t\frac{S_I(t,R_p,r_I)}{\pi
R_p^2}\mathrm{d}t,
\end{eqnarray*}
$S_I (t,r_1,r_2)$ the common area of two circles with radii $r_1$
and $r_2$ and centered $t$ apart (see
Fig.~\ref{fig:Illustration_S}(a)), and $\Sc_{U2}(t,r_1,r_2)$ is the
union of two circles with radii $r_1$ and $r_2$ and centered $t$
apart (see Fig.~\ref{fig:Illustration_S}(b)).
$S_{I2}(r,\theta,R_p,t,r_I)$ is the intersection area between one
circle with radius $R_p$ and the union of the two circles with both
radii $r_I$ (see Fig.~\ref{fig:Illustration_S}(c)). For
$S_{I2}(r,\theta,R_p,t,r_I)$, the two identical circles are centered
$t$ apart, and the other circle is centered at $(r,\theta)$, where
the middle point of the centers of the two identical circles is
chosen to be the origin $O$.
\end{proposition}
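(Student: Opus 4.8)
The plan is to prove (\ref{eqn:cond_avg_deg}) in two stages: first the product decomposition $\mu=(\lambda_S\pi r_p^2)\,g(\lambda_{PT},r_p,r_I,R_p,R_I)$ announced in (\ref{eqn:cond_avg_deg_sim}), and then the closed-form evaluation of $g(\cdot)$ from the geometry of the two primary processes. For the decomposition I would work under the Palm distribution of the secondary process at the typical user $A$, placed at the origin; by Slivnyak's theorem the remaining secondary users still form a Poisson process of density $\lambda_S$, independent of the primary transmitters and receivers. Writing $\mu=\mathbb{E}[deg(A)\,\mathbf{1}(E_A)]/\Pr(E_A)$, where $E_A=\overline{\mathbb{I}(A,r_I,\textrm{rx})}\cap\overline{\mathbb{I}(A,R_I,\textrm{tx})}$ is the event that $A$ sees an opportunity, I would expand $deg(A)$ as a sum over secondary users within distance $r_p$ that share a bidirectional link with $A$ and apply the Campbell--Mecke formula. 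Since a bidirectional opportunity between $A$ and any user forces $E_A$, the indicator $\mathbf{1}(E_A)$ is absorbed, and the numerator becomes $\lambda_S\int_{|x|\le r_p}\Pr(\textrm{bidirectional opportunity between }A\textrm{ and }x)\,\mathrm{d}x$. Passing to polar coordinates factors out $\lambda_S\pi r_p^2$ and assigns a uniformly chosen neighbor the distance density $2t/r_p^2$, giving $g(\cdot)=\int_0^{r_p}(2t/r_p^2)\,g_t\,\mathrm{d}t$, where $g_t=\Pr(\textrm{bidirectional opportunity between }A\textrm{ and a point at distance }t)/\Pr(E_A)$.

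To evaluate $g_t$ I would treat the primary network as a marked Poisson process: the transmitters are the points of $X_{PT}$ with density $\lambda_{PT}$, and each carries an independent mark --- the transmitter-to-receiver displacement --- uniformly distributed on the disk of radius $R_p$~\citep{Kingman:Poisson}. Both the numerator and the denominator of $g_t$ are void probabilities for this marked process: each forbids any transmitter in a region $T$ (a union of disks of radius $R_I$ about the relevant secondary users) and forbids the associated receiver in a region $R$ (a union of disks of radius $r_I$). By the void-probability formula each equals $\exp(-\lambda_{PT}\Lambda)$ with $\Lambda=\int_{\mathbb{R}^2}\Pr_U(x\in T\textrm{ or }x+U\in R)\,\mathrm{d}x$, and splitting on whether $x\in T$ gives $\Lambda=\mathrm{area}(T)+\int_{x\notin T}\frac{\mathrm{area}(D_{R_p}(x)\cap R)}{\pi R_p^2}\,\mathrm{d}x$, where $D_{R_p}(x)$ is the disk of radius $R_p$ about $x$.

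The remaining work is geometric bookkeeping. Using the Fubini identity $\int_{\mathbb{R}^2}\mathrm{area}(D_{R_p}(x)\cap R)\,\mathrm{d}x=\pi R_p^2\,\mathrm{area}(R)$, the receiver term equals $\mathrm{area}(R)$ minus its restriction to $x\in T$. For the denominator ($T=D_{R_I}(A)$, $R=D_{r_I}(A)$) that restriction is exactly $\pi I(R_I,R_p,r_I)$ by the definition of $I$, so $\Lambda_A=\pi R_I^2+\pi r_I^2-\pi I$. For the numerator ($T=\mathcal{S}_{U2}(t,R_I,R_I)$, $R=\mathcal{S}_{U2}(t,r_I,r_I)$) one has $\mathrm{area}(T)=2\pi R_I^2-S_I(t,R_I,R_I)$, $\mathrm{area}(R)=2\pi r_I^2-S_I(t,r_I,r_I)$, and the restriction of the receiver term to $x\in T$ is precisely $\iint_{\mathcal{S}_{U2}(t,R_I,R_I)}\frac{S_{I2}(r,\theta,R_p,t,r_I)}{\pi R_p^2}\,r\,\mathrm{d}r\,\mathrm{d}\theta$. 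Forming $g_t=\exp(-\lambda_{PT}(\Lambda_{AB}-\Lambda_A))$ and simplifying reproduces exactly the bracketed exponent in (\ref{eqn:cond_avg_deg}), and substituting into $g(\cdot)=\int_0^{r_p}(2t/r_p^2)g_t\,\mathrm{d}t$ finishes the proof.

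I expect the main obstacle to be the correlation between the primary transmitters and receivers, which prevents treating the transmitter and receiver void events as independent. The marked-Poisson viewpoint is what resolves this: it records each receiver as a displacement of its own transmitter, so the joint forbidden event is captured by the single intensity integral $\Lambda$. The delicate point within this step is the inclusion--exclusion needed to avoid double counting, since a transmitter lying in $T$ is already forbidden regardless of its receiver; the receiver contribution must therefore be integrated only over $x\notin T$, which is exactly what yields the $S_{I2}$ term restricted to $\mathcal{S}_{U2}(t,R_I,R_I)$ rather than a full-plane integral. Everything that remains ($S_I$ and $S_{I2}$) consists of elementary two-circle intersection areas.
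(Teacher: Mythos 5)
Your proposal is correct, and it reproduces the proposition exactly; but the way you evaluate the opportunity probabilities differs from the paper's computation in a way worth noting. The paper's proof obtains the decomposition $\mu=(\lambda_S\pi r_p^2)\,g(\cdot)$ by conditioning on the Poisson number of neighbors and invoking exchangeability, whereas you use Slivnyak and Campbell--Mecke --- equivalent machinery for the same step. The substantive difference is in the joint void probability for the correlated transmitter/receiver processes: the paper factorizes it sequentially as $\Pr\{\text{rx-free}\}\cdot\Pr\{\text{tx-free}\,|\,\text{rx-free}\}$, computing the first factor from the Poisson law of the receivers and the second via the Coloring Theorem (the transmitters whose receivers avoid the $r_I$-disks form a thinned Poisson process of density $\lambda_{PT}\bigl[1-S_{I2}(r,\theta,R_p,t,r_I)/(\pi R_p^2)\bigr]$, whose void probability over $\mathcal{S}_{U2}(t,R_I,R_I)$ gives the conditional factor); it also imports the denominator $\exp\{-\lambda_{PT}\pi[r_I^2+R_I^2-I(R_I,R_p,r_I)]\}$ by citation to Proposition 1 of the authors' earlier work rather than deriving it. You instead compute numerator and denominator in one stroke as void probabilities of a single marked Poisson process, with the intensity integral $\Lambda=\mathrm{area}(T)+\int_{x\notin T}\mathrm{area}(D_{R_p}(x)\cap R)/(\pi R_p^2)\,\mathrm{d}x$ and the Fubini identity handling the receiver contribution; the inclusion--exclusion on $x\in T$ is exactly what produces the $S_{I2}$ integral restricted to $\mathcal{S}_{U2}(t,R_I,R_I)$. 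The two computations are mathematically equivalent (your split of $\Lambda$ is the integrated form of the paper's thinning), but yours is more self-contained and makes the provenance of each term in the exponent transparent, while the paper's sequential conditioning yields interpretable intermediate quantities --- the conditional probabilities in its equations (A6)--(A7) --- and reuses a published formula for the unidirectional-opportunity probability.
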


\begin{figure}[htbp]
\centerline{
\begin{psfrags}
\psfrag{R_p}[c]{$R_p$} \psfrag{r_1}[c]{$r_1$} \psfrag{r_2}[c]{$r_2$}
\psfrag{r_I}[c]{$r_I$} \psfrag{r}[c]{$r$}
\psfrag{theta}[c]{$\theta$} \psfrag{t}[c]{$t$} \psfrag{O}[c]{$O$}
\psfrag{S_I}[c]{$S_I (t,r_1,r_2)$} \psfrag{S_U2}[c]{$\Sc_{U2} (t,
r_1,r_2)$} \psfrag{S_I2}[c]{$S_{I2} (r,\theta,R_p,t,r_I)$}
\psfrag{(a)}[c]{(a)} \psfrag{(b)}[c]{(b)} \psfrag{(c)}[c]{(c)}
\scalefig{0.65}\epsfbox{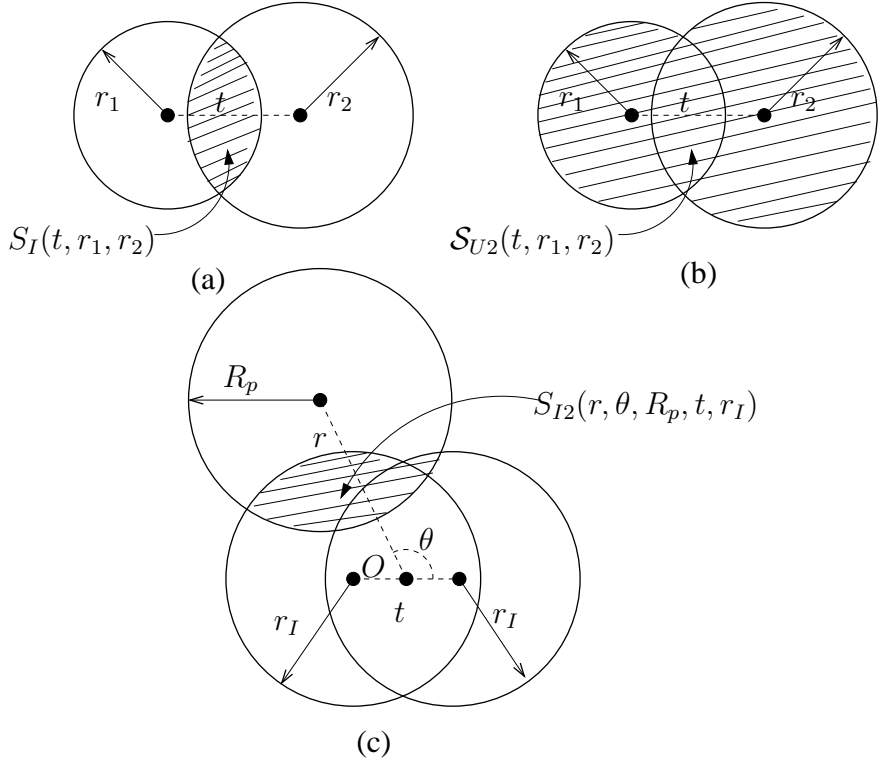}
\end{psfrags}
} \caption{An illustration of $S_I (t,r_1,r_2)$ (the common area of
two circles with radii $r_1$ and $r_2$ and centered $t$ apart),
$\Sc_{U2}(t,r_1,r_2)$ (the union area of two circles with radii
$r_1$ and $r_2$ and centered $t$ apart), and
$S_{I2}(r,\theta,R_p,t,r_I)$ (the intersection area between one
circle with radius $R_p$ and the union of the two identical circles
with radii $r_I$). } \label{fig:Illustration_S}
\end{figure}

The expressions for $I(R_I,R_p,r_I)$ and $S_I(t,r_1,r_2)$ can be
obtained in explicit form, which can be found
in~\citep[Appendix~A]{Ren&etal:08JSAC}. The expression for
$S_{I2}(r,\theta,R_p,t,r_I)$ depends on the expression for the
common area of three circles which is tedious and is given
in~\citep{Fewell:Overlap_Three_Cir}. By applying the basic property
of the exponential function to (\ref{eqn:cond_avg_deg}), we can
easily show that $g(\cdot)$ is a strictly decreasing function of
$\lambda_{PT}$.

\begin{proof}
Let $\mathbb{K}_S(A)$ denote the event that there exist exactly $k$
neighbors of a secondary user $A$. We thus have
\begin{eqnarray*} \label{eqn:mu1}
\mu &=&
\mathbb{E}[deg(A)|~\overline{\mathbb{I}(A,r_I,\textrm{rx})}\cap
\overline{\mathbb{I}(A,R_I,\textrm{tx})}] \\
&=& \mathbb{E}_K
[\mathbb{E}[deg(A)|~\overline{\mathbb{I}(A,r_I,\textrm{rx})}\cap
\overline{\mathbb{I}(A,R_I,\textrm{tx})}\cap \mathbb{K}_S (A)]] \\
&=& \sum_{k=0}^{\infty} e^{-\lambda_S \pi
r_p^2}\frac{\left(\lambda_S \pi
r_p^2\right)^k}{k!}\mathbb{E}[deg(A)|~\overline{\mathbb{I}(A,r_I,\textrm{rx})}\cap
\overline{\mathbb{I}(A,R_I,\textrm{tx})}\cap \mathbb{K}_S (A)].
\end{eqnarray*}

When $k=0$, it is obvious that $deg(A)=0$. When $k>0$, let $B_i$ be
a neighbor of $A$, and $\mathbf{1}_{Bi}$ an indicator function for
$B_i$ such that $\mathbf{1}_{Bi}=1$ if
$\overline{\mathbb{I}(B_i,r_I,\textrm{rx})}\cap
\overline{\mathbb{I}(B_i,R_I,\textrm{tx})}$ occurs and
$\mathbf{1}_{Bi}=0$ otherwise. Then by considering the statistical
independence and equivalence of the $k$ secondary users, we have
\begin{eqnarray*}
&
&~~~\mathbb{E}[deg(A)|~\overline{\mathbb{I}(A,r_I,\textrm{rx})}\cap
\overline{\mathbb{I}(A,R_I,\textrm{tx})}\cap \mathbb{K}_S (A)] \\
& &= \sum_{i=1}^k
\mathbb{E}[\mathbf{1}_{Bi}|~\overline{\mathbb{I}(A,r_I,\textrm{rx})}\cap
\overline{\mathbb{I}(A,R_I,\textrm{tx})}]  \\
& &= k
\mathbb{E}[\mathbf{1}_{B1}|~\overline{\mathbb{I}(A,r_I,\textrm{rx})}\cap
\overline{\mathbb{I}(A,R_I,\textrm{tx})}]  \\
& &= k \textrm{Pr}\{\overline{\mathbb{I}(B_1,r_I,\textrm{rx})}\cap
\overline{\mathbb{I}(B_1,R_I,\textrm{tx})}|~\overline{\mathbb{I}(A,r_I,\textrm{rx})}\cap
\overline{\mathbb{I}(A,R_I,\textrm{tx})}\}  \\
& &= k
\frac{\textrm{Pr}\{\overline{\mathbb{I}(B_1,r_I,\textrm{rx})}\cap
\overline{\mathbb{I}(B_1,R_I,\textrm{tx})}\cap
\overline{\mathbb{I}(A,r_I,\textrm{rx})}\cap
\overline{\mathbb{I}(A,R_I,\textrm{tx})}\}}{\textrm{Pr}\{\overline{\mathbb{I}(A,r_I,\textrm{rx})}\cap
\overline{\mathbb{I}(A,R_I,\textrm{tx})}\}}
\end{eqnarray*}
It follows that
\begin{eqnarray} \label{eqn:mu2}
\mu = \lambda_S \pi r_p^2
\frac{\textrm{Pr}\{\overline{\mathbb{I}(B_1,r_I,\textrm{rx})}\cap
\overline{\mathbb{I}(B_1,R_I,\textrm{tx})}\cap
\overline{\mathbb{I}(A,r_I,\textrm{rx})}\cap
\overline{\mathbb{I}(A,R_I,\textrm{tx})}\}}{\textrm{Pr}\{\overline{\mathbb{I}(A,r_I,\textrm{rx})}\cap
\overline{\mathbb{I}(A,R_I,\textrm{tx})}\}}.
\end{eqnarray}

According to the definition of spectrum opportunity,
$\textrm{Pr}\{\overline{\mathbb{I}(A,r_I,\textrm{rx})}\cap
\overline{\mathbb{I}(A,R_I,\textrm{tx})}\}$ can be obtained by
setting the distance $d=0$ in the expression for the probability of
a unidirectional opportunity between two secondary users with
distance d apart given in Proposition 1 in~\citep{Ren&etal:08JSAC}:
\begin{eqnarray} \label{eqn:mu3}
\textrm{Pr}\{\overline{\mathbb{I}(A,r_I,\textrm{rx})}\cap
\overline{\mathbb{I}(A,R_I,\textrm{tx})}\}=\exp [-\lambda_{PT}
\pi(r_I^2+R_I^2-I(R_I,R_p,r_I))].
\end{eqnarray}

Next we derive the expression for the probability of a bidirectional
opportunity, \ie
$\textrm{Pr}\{\overline{\mathbb{I}(B_1,r_I,\textrm{rx})}\cap
\overline{\mathbb{I}(B_1,R_I,\textrm{tx})}\cap
\overline{\mathbb{I}(A,r_I,\textrm{rx})}\cap
\overline{\mathbb{I}(A,R_I,\textrm{tx})}\}$, which depends on the
location of $B_1$ only through its distance to $A$. Since $B_1$ is
uniformly distributed within distance $r_p$ of $A$, the density
function of the distance $t$ between $B_1$ and $A$ is given by
$\frac{2t}{r_p^2}$ for $0\leq t\leq r_p$. In this case, the
probability of a bidirectional opportunity can be written as
\begin{eqnarray} \label{eqn:mu4}
& &\textrm{Pr}\{\overline{\mathbb{I}(B_1,r_I,\textrm{rx})}\cap
\overline{\mathbb{I}(B_1,R_I,\textrm{tx})}\cap
\overline{\mathbb{I}(A,r_I,\textrm{rx})}\cap
\overline{\mathbb{I}(A,R_I,\textrm{tx})}\} \nn \\
&=&\int_0^{r_p}
\frac{2t}{r_p^2}\textrm{Pr}\{\overline{\mathbb{I}(B_1,r_I,\textrm{rx})}\cap
\overline{\mathbb{I}(B_1,R_I,\textrm{tx})}\cap
\overline{\mathbb{I}(A,r_I,\textrm{rx})}\cap
\overline{\mathbb{I}(A,R_I,\textrm{tx})}|~d(B_1,A)=t\}\mathrm{d}t,~~~~
\end{eqnarray}
where the integrand can be written as
\begin{eqnarray} \label{eqn:mu5}
& &\textrm{Pr}\{\overline{\mathbb{I}(B_1,r_I,\textrm{rx})}\cap
\overline{\mathbb{I}(B_1,R_I,\textrm{tx})}\cap
\overline{\mathbb{I}(A,r_I,\textrm{rx})}\cap
\overline{\mathbb{I}(A,R_I,\textrm{tx})}|~d(B_1,A)=t\} \nn \\
&=&\textrm{Pr}\{\overline{\mathbb{I}(B_1,R_I,\textrm{tx})}\cap
\overline{\mathbb{I}(A,R_I,\textrm{tx})}|~\overline{\mathbb{I}(B_1,r_I,\textrm{rx})}\cap
\overline{\mathbb{I}(A,r_I,\textrm{rx})}\cap d(B_1,A)=t\} \nn \\
& &\textrm{Pr}\{\overline{\mathbb{I}(B_1,r_I,\textrm{rx})}\cap
\overline{\mathbb{I}(A,r_I,\textrm{rx})}|~d(B_1,A)=t\}.
\end{eqnarray}

Next, we compute the two probabilities in (\ref{eqn:mu5}) one by
one. Since the primary receivers admit a Poisson point process with
density $\lambda_{PT}$, we have
\begin{eqnarray} \label{eqn:mu6}
\textrm{Pr}\{\overline{\mathbb{I}(B_1,r_I,\textrm{rx})}\cap
\overline{\mathbb{I}(A,r_I,\textrm{rx})}|~d(B_1,A)=t\}=\exp
[-\lambda_{PT}(2\pi r_I^2-S_I (t,r_I,r_I))],
\end{eqnarray}
where $S_I (t,r_I,r_I)$ is the common area of two circles with both
radii $r_I$ and centered $t$ apart (see
Fig.~\ref{fig:Illustration_S}(a)).

Let $X_{PT}$ denote the Poisson point process formed by primary
transmitters. If we remove from $X_{PT}$ primary transmitters whose
receivers are within distance $r_I$ of $B_1$ or $A$, then it follows
from Coloring Theorem~\citep[Chapter 5]{Kingman:Poisson} that all
the remaining primary transmitters form another Poisson point
process with density $\lambda_{PT} \left[1-\frac{S_{I2}
(r,\theta,R_p,t,r_I)}{\pi R_p^2}\right]$, where $S_{I2}
(r,\theta,R_p,t,r_I)$ is the area of the circle with radius $R_p$
and centered at $(r,\theta)$ intersecting the two circles with both
radii $r_I$ and centered $t$ apart (see
Fig.~\ref{fig:Illustration_S}(c)). We thus have
\begin{eqnarray} \label{eqn:mu7}
& &\textrm{Pr}\{\overline{\mathbb{I}(B_1,R_I,\textrm{tx})}\cap
\overline{\mathbb{I}(A,R_I,\textrm{tx})}|~\overline{\mathbb{I}(B_1,r_I,\textrm{rx})}\cap
\overline{\mathbb{I}(A,r_I,\textrm{rx})}\cap d(B_1,A)=t\} \nn \\
&=& \exp
\left\{-\lambda_{PT}\underset{\Sc_{U2}(t,R_I,R_I)}{\iint}\left[1-\frac{S_{I2}
(r,\theta,R_p,t,r_I)}{\pi
R_p^2}r\mathrm{d}r\mathrm{d}\theta\right]\right\} \nn \\
&=& \exp \left\{-\lambda_{PT}\left[2\pi R_I^2-S_I
(t,R_I,R_I)-\underset{\Sc_{U2}(t,R_I,R_I)}{\iint}\frac{S_{I2}
(r,\theta,R_p,t,r_I)}{\pi
R_p^2}r\mathrm{d}r\mathrm{d}\theta\right]\right\},
\end{eqnarray}
where $\Sc_{U2}(t,R_I,R_I)$ is the union of two circles with both
radii $R_I$ and centered $t$ apart (see
Fig.~\ref{fig:Illustration_S}(b)).

Substitute (\ref{eqn:mu6},~\ref{eqn:mu7}) into (\ref{eqn:mu5}), we
have
\begin{eqnarray} \label{eqn:mu8}
& &\textrm{Pr}\{\overline{\mathbb{I}(B_1,r_I,\textrm{rx})}\cap
\overline{\mathbb{I}(B_1,R_I,\textrm{tx})}\cap
\overline{\mathbb{I}(A,r_I,\textrm{rx})}\cap
\overline{\mathbb{I}(A,R_I,\textrm{tx})}|~d(B_1,A)=t\} \nn \\
&=& \exp \Bigg\{-\lambda_{PT}\Big[2\pi (r_I^2+R_I^2)-S_I
(t,r_I,r_I)-S_I (t,R_I,R_I) \nn \\
&
&~~~~~~~~~~~~~~~~-\underset{\Sc_{U2}(t,R_I,R_I)}{\iint}\frac{S_{I2}
(r,\theta,R_p,t,r_I)}{\pi R_p^2}r\mathrm{d}r\mathrm{d}\theta \Big]
\Bigg\}.
\end{eqnarray}
The expression for the conditional average degree $\mu$ thus follows
by plugging (\ref{eqn:mu8}) into (\ref{eqn:mu4}) and then
(\ref{eqn:mu3},~\ref{eqn:mu4}) into (\ref{eqn:mu2}).
\end{proof}


\section*{Appendix B: Proof of Corollary 3}

\renewcommand{\theequation}{B\arabic{equation}}
\setcounter{equation}{0}

From~\citep[Appendix~A]{Ren&etal:08JSAC}  and
Fig.~\ref{fig:Illustration_S}(b,~c), we know that when $r_I\geq
R_p+R_I$,
\begin{eqnarray} \label{eqn:int_I1_AE}
I(R_I,R_p,r_I) &=& R_I^2, \\
\label{eqn:int_S_U2}
\underset{\Sc_{U2}(t,R_I,R_I)}{\iint}\frac{S_{I2}(r,\theta,R_p,t,r_I)}{\pi
R_p^2}r\mathrm{d}r\mathrm{d}\theta &=& \Sc_{U2}(t,R_I,R_I) = 2\pi
R_I^2-S_I (t,R_I,R_I).
\end{eqnarray}

Substitute (\ref{eqn:int_I1_AE},~\ref{eqn:int_S_U2}) into
(\ref{eqn:cond_avg_deg}), we have
\begin{eqnarray} \label{eqn:mu_AE1}
\mu = \lambda_S \pi r_p^2 \int_0^{r_p}~\frac{2t}{r_p^2}\exp
[-\lambda_{PT}(\pi r_I^2-S_I (t,r_I,r_I))]\mathrm{d}t.
\end{eqnarray}
Plugging the expression for
$S_I(t,r_I,r_I)$~\citep[Appendix~A]{Ren&etal:08JSAC} into
(\ref{eqn:mu_AE1}) yields
\begin{eqnarray*}
\mu = \lambda_S \pi r_p^2 \int_0^{r_p}~\frac{2t}{r_p^2}\exp
\left[-\lambda_{PT} \left(\pi r_I^2-2r_I^2 \arccos
\left(\frac{t}{2r_I}\right)+t\sqrt{r_I^2-\frac{t^2}{4}}\right)\right]\mathrm{d}t.
\end{eqnarray*}

By applying the inequality $\arccos (x)\leq \frac{\pi}{2}-x$ for
$0\leq x\leq 1$, we have
\begin{eqnarray*}
\mu &\leq&  \lambda_S \pi r_p^2 \int_0^{r_p}~\frac{2t}{r_p^2}\exp
\left\{-\lambda_{PT} \left[\pi r_I^2-2r_I^2
\left(\frac{\pi}{2}-\frac{t}{2r_I}\right)+t\sqrt{r_I^2-\frac{t^2}{4}}\right]\right\}\mathrm{d}t
\\
&\leq& \lambda_S \pi r_p^2 \int_0^{r_p}~\frac{2t}{r_p^2}\exp
(-\lambda_{PT} tr_I)\mathrm{d}t \\
&=& \lambda_S \pi \left(\frac{2}{\lambda_{PT}^2
r_I^2}-\frac{2}{\lambda_{PT}^2 r_I^2}\exp (-\lambda_{PT}\beta
r_I^2)-\frac{2\beta}{\lambda_{PT}}\exp (-\lambda_{PT}\beta
r_I^2)\right)\\
&\leq& \frac{2\lambda_S \pi}{\lambda_{PT}^2}(r_I)^{-2},
\end{eqnarray*}
where we have assumed that $r_p=\beta r_I$ $(0<\beta<1)$ under the
disk signal propagation and interference model. Since $r_I\propto
(p_{tx})^{1/\alpha}$, we arrive at Corollary~\ref{cor:cond_avg_deg}.


\bibliographystyle{ieeetr}

\end{document}